\DeclarePairedDelimiter{\abs}{\lvert}{\rvert}
\DeclarePairedDelimiter{\norm}{\lVert}{\rVert}
\newtheorem{theorem}{Theorem}[section]
\newtheorem{proposition}{Proposition}[section]
\newenvironment{proof}{\paragraph{Proof:}}{\hfill$\square$}
\def\R{\mathbb{R}}
\def\bx{\mathbf{x}}
\def\bX{\mathbf{X}}
\def\by{\mathbf{y}}
\def\bv{\mathbf{v}}
\def\bV{\mathbf{V}}
\def\bT{\mathbf{T}}
\def\bsigma{\bm{\sigma}}
\def\bbeta{\bm{\beta}}
\def\rhosq{\rho^2}
\def\bs{\mathbf{s}}
\def\bSigma{\bm{\Sigma}}
\def\bmu{\bm{\mu}}
\def\tlambda{\Tilde{\lambda}_3}
\def\bLambda{\mathbf{\Lambda}}
\def\bt{\mathbf{t}}
\def\diag{\text{diag}}
\ams@newcommand{\iiiiint}{\DOTSI\protect\MultiIntegral{5}}
\renewcommand{\MultiIntegral}[1]{ 
  \edef\ints@c{\noexpand\intop
    \ifnum#1=\z@\noexpand\intdots@\else\noexpand\intkern@\fi
    \ifnum#1>\tw@\noexpand\intop\noexpand\intkern@\fi
    \ifnum#1>\thr@@\noexpand\intop\noexpand\intkern@\fi
    \ifnum#1>4 \noexpand\intop\noexpand\intkern@\fi 
    \noexpand\intop
    \noexpand\ilimits@
  } 
  \futurelet\@let@token\ints@a
}
\providecommand{\keywords}[1]
{
 \small	
 \textbf{\textit{Keywords:}} #1
}
\title{A New Bayesian Huberised Regularisation and Beyond}
\author[1]{Sanna Soomro}
\author[1]{Keming Yu \thanks{keming.yu@brunel.ac.uk}}
\author[2]{Yan Yu}
\affil[1]{Brunel University London}
\affil[2]{University of Cincinnati}
\date{}
\begin{document}
\maketitle

\begin{abstract}
Robust regression has attracted a great amount of attention in the literature recently, particularly for taking asymmetricity into account simultaneously and for high-dimensional analysis. However, the majority of research on the topics falls in frequentist approaches, which are not capable of full probabilistic uncertainty quantification. This paper first proposes a new Huberised-type of asymmetric loss function and its corresponding probability distribution which is shown to have the scale-mixture of normals. Then we introduce a new Bayesian Huberised regularisation for robust regression. A by-product of the research is that a new Bayesian Huberised regularised quantile regression is also derived. We further present their theoretical posterior properties. The robustness and effectiveness of the proposed models are demonstrated in the simulation studies and the real data analysis.
\end{abstract}

\keywords{Asymmetric Huber loss function, Bayesian elastic net, Bayesian lasso, Quantile regression, Robustness}

\setstretch{1.5} 

\section{Introduction}

Robust regression methods have a wide range of applications and attracted a great amount of attention in the literature recently, particularly for taking asymmetricity into account simultaneously and for high-dimensional analysis, such as the adaptive Huber regression  (\cite{SunEtAl2020})  and asymmetric Huber loss and asymmetric Tukey’s biweight loss functions for robust regression (\cite{FuWang2021}). The Lasso (\cite{Tibshirani1996}) and the Elastic Net (\cite{ZouTrevor2005}) are some popular choices for regularising regression coefficients. The former has the ability to automatically set irrelevant coefficients to zero. The latter retains this property and the effectiveness of the ridge penalty, and it deals with highly correlated variables more  effectively. Robust regularisation methods for quantile regression provide a promising technique for variable selection and model estimation in presence of outliers or heavy-tailed errors (\cite{LiZhu2008}; \cite{WuYuFeng2009}; \cite{Belloni2011}; \cite{Su2021}).  However, the majority of research on the topics falls in frequentist approaches, which are not capable of full probabilistic uncertainty quantification. Quantile regression, particularly Bayesian quantile regression enjoys some of robustness such as median more robust than mean, but has different modelling aims from robust regression.

Exploring  unconditional Bayesian regularisation prior, such as the Bayesian lasso (\cite{Park2008}) and the Bayesian elastic net (\cite{LiLin2010}), for robust regression is not straightforward. Several issues may arise. The joint posterior may be multimodal, which slows down the convergence of the Gibbs sampler and the point estimates may be computed through multiple modes, which lead to the inaccurate estimators (\cite{KyungEtAl2010}; \cite{Park2008}). The choices of the hyperparameters in gamma priors of regularisation parameters may also have strong influences on the posterior estimates. For the former, it was firstly observed by \cite{Park2008} in the Bayesian lasso. For the latter, it is common to employ invariant prior on scale parameter (\cite{Berger1985}). \cite{CaiSun2021} address these two issues by introducing the scale parameter to the Bayesian lasso and its generalisation for quantile regression. Moreover, \cite{Kawakami2023} use the scale parameter of the hyperbolic loss function (\cite{Park2008}) to propose the Bayesian Huberised lasso, which is the robust version of Bayesian lasso. Along this line, we will propose Bayesian Huberised regularisation in this paper. 

Quantile regression introduced by \cite{KoenkerBassett1978} is a useful supplement to ordinary mean regression in statistical analysis, owing to its robustness property and its ability to offer unique insights into the relation between the response variables and the predictors that are not available in doing mean regression.  Recently, the Bayesian approaches for variable selection in quantile regression have also attracted much attention in research area (\cite{LiEtAl2010}; \cite{Alhamzawi2012}; \cite{Alhamzawai2012-vs}; \cite{Alhamzawi2013}; \cite{ChenEtAl2013-BQR}; \cite{ReichSmith2013}; \cite{Alhamzawai2016}; \cite{Alshaybawee2017}; \cite{Adlouni2018}; \cite{Alhamzawi2019}). In Bayesian quantile regression, the error distribution would usually be assumed to follow asymmetric Laplace distribution proposed by \cite{YuMoyeed2001} that guaranteed posterior consistency of Bayesian estimators (\cite{SriramEtAl2013}) and robustness (\cite{YuMoyeed2001}). Furthermore, \cite{Alhamzawi2012} adopt the inverse gamma prior density to the penalty parameters and treated its hyperparameters as unknown and estimated them along with other parameters. This allows the different regression coefficients to have different penalisation parameters, which improves the predictive accuracy. Quantile regression, particularly Bayesian quantile regression enjoys some of robustness such as median more robust than mean, but has different modelling aims from robust regression.

Therefore, this paper first proposes a new Huberised-type of asymmetric loss function and its corresponding probability distribution, which is shown to have the scale-mixture of normals. Then we introduce a new Bayesian Huberised regularisation for robust regression. Furthermore, by taking advantage of the good quantile property of this probability distribution, we develop Bayesian Huberised lasso quantile regression and Bayesian Huberised elastic net quantile regression. This results in the proposed models covering both Bayesian robust regularisation and Bayesian quantile regularisation. Besides, \cite{CaiSun2021} emphasise that the posterior impropriety does  exist in Bayesian lasso quantile regression and its generalisation when the prior on regression coefficients is independent of the scale parameter. Thus, we will discuss some properties of the Bayesian Huberised regularised quantile regression, including posterior propreity and posterior unimodality. The approximate Gibbs sampler of \cite{Kawakami2023} is adopted to enable the data-dependent estimation of the tuning robustness parameter in the fully Bayesian hierarchical model. The advantage of this sampling step is that it does not require cross validation evaluation of tuning parameters (see \cite{Alhamzawai2016} for example) nor the rejection steps, such as the inversion method and adaptive rejection sampling algorithm (see \cite{Alhamzawi2019} for example). We demonstrate the effectiveness and robustness of the Bayesian Huberised regularised quantile regression model through simulation studies following by real data analysis. 

The remainder of this paper is as follows. In Section \ref{sec:lossfunction}, we define a Huberised asymmetric loss function with its corresponding probability density function and derive a scale mixture of normal representation for Bayesian inference. Section \ref{sec:HBR} presents the Bayesian Huberised regularisation including the Bayesian Huberised lasso (\cite{Kawakami2023}) and the Bayesian Huberised elastic net. This results in a new robust Bayesian regularised quantile regression. In Section \ref{sec:simulation} and \ref{sec:realdatanalysis}, a wide range of simulation studies and three real data examples weree conducted. In Section \ref{sec:conclusion}, we draw the conclusions. 

\section{Huberised Asymmetric Loss Function}\label{sec:lossfunction}

The lasso and elastic net estimates are all regularised estimates and the differences among them are only at their penalty terms. Specifically, they are all solutions to the following form of minimization problem for regularised quantile regression
\begin{align}
    \min_{\bbeta} \sum^n_{i=1} \rho_\tau (y_i-\bx_i\bbeta) + \lambda_1g_1(\bbeta) + \lambda_2g_2(\bbeta)\,,
    \label{eq:minimisa}
\end{align}
for some $\lambda_1,\lambda_2\geq 0$, penalty functions $g_1(\cdot)$ and $g_2(\cdot)$,   $\rho_\tau(x)=x(\tau-I(X<0))$ is the check loss function and $I(\cdot)$ is the indicator function. The lasso corresponds to  $\lambda_1>0$, $\lambda_2=0$, $g_1(\bbeta)=\norm{\bbeta}$ and $g_2(\bbeta)=0$. The elastic net corresponds to $\lambda_1=\lambda_3,\lambda_2=\lambda_4>0$, $g_1(\bbeta)=\norm{\bbeta}$ and $g_2(\bbeta)=\norm{\bbeta}^2_2$.

Letting $\tau=0.5$, the first term of Equation (\ref{eq:minimisa}) reduces to $\sum^n_{i=1}\abs{y_i-\bx_i\bbeta}$ and the corresponding method is called the least absolute deviation (LAD) regression, which is known to be robust against outliers  in response variables. However, the LAD regression might  underestimate regression coefficients for non-outlying observations. To remedy this problem, the Huber loss function is used  and it is defined as 
\begin{equation}
    L^{Huber}_{\delta}(x)=
     \begin{cases}
      \frac{1}{2} x^2, & |x|\leq \delta\,,  \\
      \delta (|x|-\delta/2), & |x|>\delta\,,
   \end{cases}
   \label{eq:huber}
\end{equation}
where $\delta>0$ is a robustness parameter and it is practically set as $\delta=1.345$ (\cite{Huber1964}). The behaviour of this loss function is such that it is quadratic for small values of $x$ and becomes linear when $\epsilon$ exceeds $\delta$ in magnitude.  

Clearly, the Huber loss function has non-differentiable points and it has limited scope in applications. \cite{LiEtAl2020} propose two generalised Huber loss functions, which are Soft Huber and Nonconvex Huber. They are attractive alternatives to the Huber loss function because they are analogous to the pseudo Huber loss function and they have a normal scale mixture property resulting in a broader range of Bayesian applications. The Soft Huber loss function can be defined as
\begin{align}
    L^{SH}_{\zeta_1,\zeta_2}(x)=\sqrt{\zeta_1\zeta_2}\left(\sqrt{1+\frac{x^2}{\zeta_2}} -1 \right)\,,
    \label{eq:softhuber}
\end{align}
and the Nonconvex Huber loss function as
\begin{align}
    L^{NH}_{\zeta_1,\zeta_2}(x)=\sqrt{\zeta_1\zeta_2}\left(\sqrt{1+\frac{|x|}{\zeta_2}} -1 \right)\,, \label{eq:nonconvexHuber}
\end{align}
where $\zeta_1, \zeta_2 >0$ are non-negative hyperparameters. Here, the Soft Huber loss bridges the $\ell_1$ (absolute) loss and the $\ell_2$ (squared) loss. On the other hand, the Nonconvex Huber loss bridges the $\ell_{1/2}$ loss and the $\ell_1$ loss. By letting $\eta=\sqrt{\zeta_1\zeta_2}$ and $\rhosq=\sqrt{\frac{\zeta_2}{\zeta_1}}$, the Soft Huber loss function becomes the hyperbolic loss function, that is,
\begin{align}
    L^{Hyp}_{\eta,\rhosq}(x) = \sqrt{\eta\left(\eta + \frac{x^2}{\rhosq} \right)} - \eta\,,
    \label{eq:hyperbolic}
\end{align}
where $\eta>0$ is the robustness parameter and $\rhosq>0$ is a scale parameter. \cite{Park2008} used this hyperbolic loss function to formulate the Bayesian Huberised lasso, which has proven to be robust to outliers. 

When the error distribution is asymmetric or contaminated by asymmetric outliers, the estimators obtained from Equations (\ref{eq:huber}), (\ref{eq:softhuber}), (\ref{eq:nonconvexHuber}) and (\ref{eq:hyperbolic}) may result in inconsistency of predictions of a conditional mean given the regressors (\cite{FuWang2021}).

Therefore, we propose the Huberised-type asymmetric loss function by letting  $\eta=\sqrt{\zeta_1\zeta_2}$ and $\rhosq=\sqrt{\frac{\zeta_2}{\zeta_1}}$ in Equation (\ref{eq:nonconvexHuber}) and it is given by 
\begin{align*}
    L^{Asy}_{\eta,\rhosq,\tau}(x) = \sqrt{\eta\left(\eta + \frac{x}{\rhosq}\left(\tau-I(x<0)\right) \right)} - \eta\,.
\end{align*}

The corresponding density function is 
\begin{align}
    f(x|\mu,\eta,\rhosq,\tau) = \frac{\eta\tau(1-\tau)e^{\eta}}{2\rhosq(\eta+1)} \exp\left\{-\sqrt{\eta\left(\eta + \frac{x-\mu}{\rhosq}\left(\tau-I(x<0)\right) \right)} \right\}\,,
    \label{eq:pdf}
\end{align}
where $\mu\in\R$ is a location parameter. Here, $\rhosq$ acts as a scale parameter and $\eta$ acts as a shape parameter of this density function. 

The following proposition states that the parameters $\mu$ and $\tau$ in (\ref{eq:pdf}) satisfy: $\mu$ is the $\tau$th quantile of the distribution. 

\begin{proposition}\label{prop:tau}
    If a random variable $X$ follows the density function in (\ref{eq:pdf}) then we have $P(X\leq\mu)=\tau$ and $P(X>\mu)=1-\tau$.
\end{proposition}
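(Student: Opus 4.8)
The plan is to establish the quantile property by direct integration of the density in (\ref{eq:pdf}), exploiting the fact that the indicator (which I read as $I(x-\mu<0)$, matching the residual structure of the asymmetric loss) splits the real line at $x=\mu$ into two half-lines on which the integrand is a clean, single-piece function. First I would write $P(X\leq\mu)=\int_{-\infty}^{\mu}f(x\mid\mu,\eta,\rhosq,\tau)\,dx$ and observe that on $x<\mu$ the indicator equals $1$, so the expression under the square root becomes $\eta\bigl(\eta+\frac{(x-\mu)(\tau-1)}{\rhosq}\bigr)=\eta\bigl(\eta+\frac{(\mu-x)(1-\tau)}{\rhosq}\bigr)$, which is manifestly nonnegative; the analogous computation on $x>\mu$ (where the indicator vanishes) will deliver $P(X>\mu)$.

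The key computational step is a two-stage change of variables that linearises the square-root exponent. After substituting $u=\mu-x$ to cast the lower-tail integral as a constant multiple of $\int_0^{\infty}\exp\{-\sqrt{\eta(\eta+u(1-\tau)/\rhosq)}\}\,du$, I would set $v=\sqrt{\eta(\eta+u(1-\tau)/\rhosq)}$, so that $u=\rhosq(v^2-\eta^2)/(\eta(1-\tau))$ and $du=\frac{2\rhosq\,v}{\eta(1-\tau)}\,dv$, with the lower limit $u=0$ mapping to $v=\eta$. This collapses the stubborn $\sqrt{\cdot}$ into a linear exponential and turns the integral into a multiple of $\int_\eta^\infty v\,e^{-v}\,dv$.

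This last integral is elementary: integration by parts gives $\int_\eta^\infty v\,e^{-v}\,dv=(\eta+1)e^{-\eta}$. Substituting back and multiplying by the normalising constant $\frac{\eta\tau(1-\tau)e^{\eta}}{2\rhosq(\eta+1)}$, every factor of $\eta$, $\rhosq$, $(\eta+1)$, $(1-\tau)$ and $e^{\pm\eta}$ cancels, leaving exactly $\tau$, so $P(X\leq\mu)=\tau$. Running the identical substitution on the upper tail (with $\tau$ now playing the role of $1-\tau$) yields $P(X>\mu)=1-\tau$; as a built-in sanity check, the two probabilities sum to one, which simultaneously confirms that the stated constant correctly normalises the density.

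I do not anticipate a serious obstacle here: the only point requiring genuine care is recognising the right change of variable, namely choosing $v$ to be precisely the square-root expression in the exponent so that the resulting gamma-type integral evaluates in closed form. The one place where a sign slip could creep in is keeping the asymmetric weights straight—$(1-\tau)$ on the left half-line versus $\tau$ on the right—so I would track those factors explicitly through both substitutions.
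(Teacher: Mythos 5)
Your proposal is correct and follows essentially the same route as the paper's own proof: split at $x=\mu$ via the indicator, reduce the tail integral by the substitution $v=\sqrt{\eta(\eta+u(1-\tau)/\rhosq)}$ to a multiple of $\int_\eta^\infty v e^{-v}\,dv=(\eta+1)e^{-\eta}$, and cancel against the normalising constant (the paper merely sets $\mu=0$ at the outset instead of substituting $u=\mu-x$). Your reading of the indicator as $I(x-\mu<0)$ is also the one the paper's proof implicitly adopts.
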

\begin{proof}
    The proof can be found in Appendix \ref{sec:appendixA}.
\end{proof}

To observe the behaviour of the proposed loss function, we set\\ $\eta=\sqrt{\zeta_2}\left(\sqrt{\zeta_2}+\sqrt{\zeta_2+1}\right)$ and $\rhosq=\frac{\sqrt{\zeta_2}}{\sqrt{\zeta_2}+\sqrt{\zeta_2+1}}$ then we have the following limits,
\begin{align*}
    \underset{\zeta_2\rightarrow 0}{\lim}\ L^{Asy}_{\eta,\rhosq,\tau}(x) = \sqrt{x\left(\tau-I(x<0)\right) } \quad \text{and} \quad \underset{\zeta_2\rightarrow \infty}{\lim}\ L^{Asy}_{\eta,\rhosq,\tau}(x) = x\left(\tau-I(x<0)\right)\,,
\end{align*}
which suggests that the proposed loss bridges the quantile loss function. \cite{daouia2018estimation} use the quantile loss function for tail expectiles to estimate alternative measures to the value at risk and marginal expected shortfall, which are two instruments of risk protection of utmost importance in actuarial science and statistical finance. \cite{ehm2016quantiles} show that any scoring function that is consistent for a quantile or an expectile functional can be represented as a mixture of elementary or extremal scoring functions that form a linearly parameterised family. However, in this paper, we  show a totally new way to achieve it, and our proposed loss is a novel representative of asymmetric least squares (\cite{daouia2019extremiles}). Figure \ref{fig:ANH}  illustrates the asymmetric shape behaviour for five different values of $\tau$ ($0.1,0.25,0.5,0.75,0.9$). From the figure, $L^{Asy}_{\eta,\rhosq,\tau}(x) $ approaches the square root of the quantile loss function, as $\eta \xrightarrow{} 0 $, and $L^{Asy}_{\eta,\rhosq,\tau}(x)$ approaches the quantile loss function, as $\eta\xrightarrow{} \infty$. 

\begin{figure}[!ht]
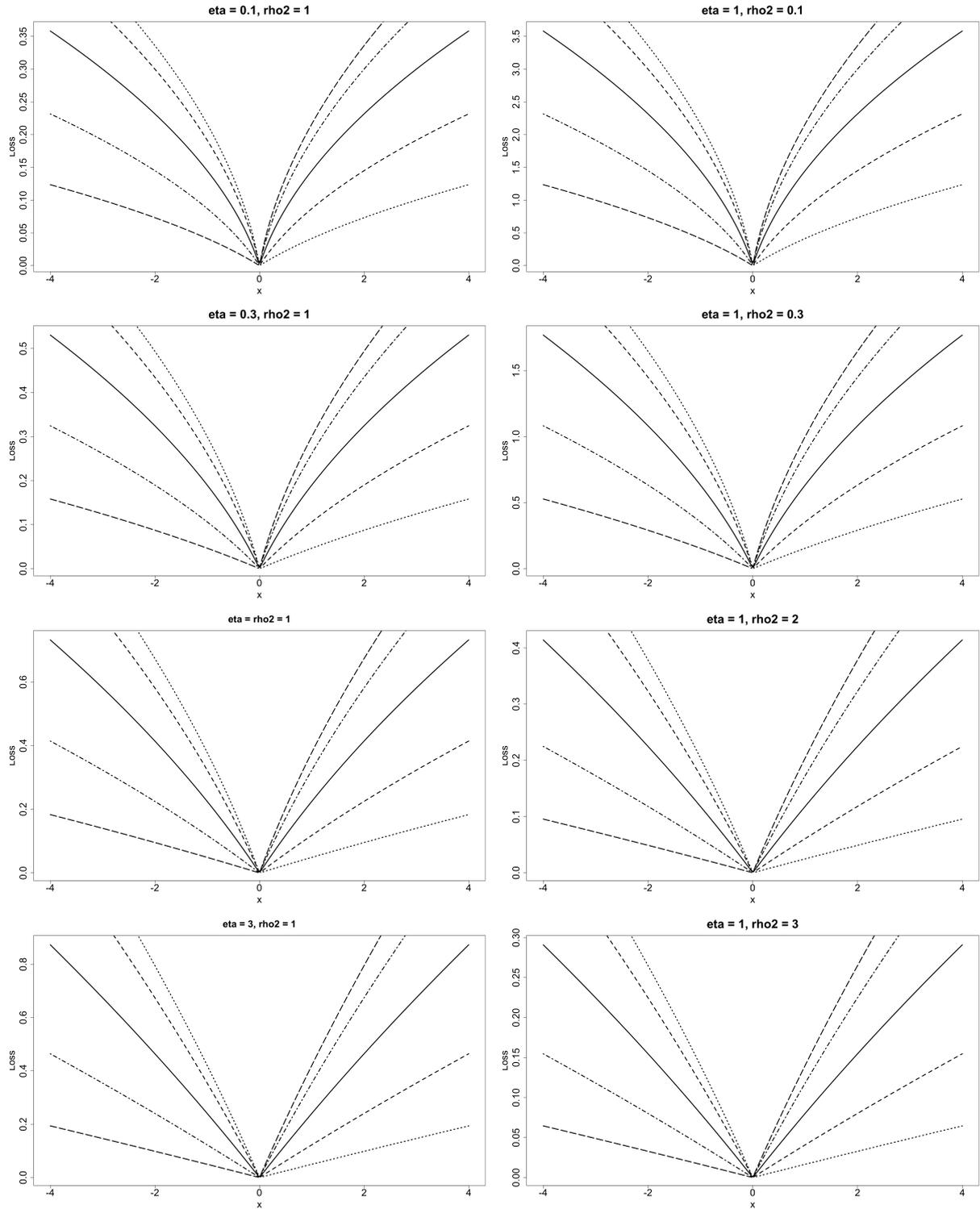

 \includegraphics[width = 0.5\textwidth]{eta01.pdf}
 \includegraphics[width = 0.5\textwidth]{rho2_01.pdf}
 \includegraphics[width = 0.5\textwidth]{eta03.pdf}
 \includegraphics[width = 0.5\textwidth]{rho2_03.pdf}
 \includegraphics[width = 0.5\textwidth]{eta1.pdf}
 \includegraphics[width = 0.5\textwidth]{rho2_2.pdf}
 \includegraphics[width = 0.5\textwidth]{eta3.pdf}
 \includegraphics[width = 0.5\textwidth]{rho2_3.pdf}
\caption{The asymmetrical behaviour of the proposed loss function for $\tau$=0.1 (short dashed), 0.25 (normal dashed), 0.5 (solid), 0.75 (short-normal dashed), and 0.9 (long dashed) for different values of $\eta$ and $\rhosq$.}
\label{fig:ANH}
\end{figure}

\cite{Kawakami2023} discussed that it is essential to choose the right value of hyperparameters of $\eta$ and $\rhosq$ where $\rhosq$ can easily be estimated by a Gibbs sampler in a Bayesian model whereas the estimation of $\eta$ is difficult. They proposed the approximate Gibbs sampler to enable the data-dependent estimation of $\eta$. This paper will also adopt their approximate Gibbs sampler. 

To fully enable the Gibbs sampling algorithm for Bayesian modelling, the density function in (\ref{eq:pdf}) has a scale mixture of normal representation with exponential and generalised inverse Gaussian densities. Suppose that a random variable $X$ has a probability density function $f(x|\theta)$ and unknown parameter $\theta$ that satisfies
\begin{equation}
 f(x|\theta)=\int \phi(x|\mu, \sigma)\,\pi(\sigma|\theta)\,d \sigma\,,
 \end{equation}
 where $\phi(\cdot)$ is the mixing distribution and $\pi(\cdot)$ is some density function that is defined on $(0,\infty)$, then $X$ or its $f(x|\theta)$ is a scale mixture of a normal distribution. It has many applications in statistics, finance and, particularly in Bayesian inference.  Probability distribution with a scale mixture of normal expression  could be grouped into two groups: symmetric probability distributions (\cite{AndrewsEtAl1974,west1987scale}) and asymmetric probability distributions (\cite{reed2009partially,da2011skew,KozumiKobayashi2011}). Therefore, the following proposition provides an alternative stochastic representation, which is a normal scale-mixture. 
 
\begin{theorem}\label{prop:scalemixture}
   If the model error $\epsilon_i=y_i-\bx_i\bbeta$ follows the density function (\ref{eq:pdf}), then we can represent $\epsilon_i$ as scale mixture of normals given by
    \begin{align}
        &f(\epsilon_i;\tau,\eta,\rhosq)\nonumber\\
        &\quad\propto\iint N \left( \epsilon_i; (1-2\tau) v_i, 4v_i\sigma_i \right) E\left(v_i; \frac{\tau(1-\tau)}{2\sigma_i}\right)GIG\left( \sigma_i; 1, \frac{\eta}{\rhosq}, \eta\rhosq  \right) dv_id\sigma_i\,, \nonumber\\
        &i=1,\ldots,n\,, \label{eq:scalemixture}
\end{align}
where $GIG(x|\nu,c,d)$ denotes the GIG distribution and its density is specified by
\begin{equation}
    f_{\text{GIG}}(x) = \frac{(c/d)^{\nu}}{2K_1(cd)} x^{\nu-1} \exp\left(-\frac{1}{2} (c^2x+d^2x^{-1}) \right)\,, \quad v>0\,, \label{eq:gig}
\end{equation}
and $K_\nu(\cdot)$ is the modified Bessel function of the second kind at index $\nu$ (\cite{BarndorffNielsen2001}).
\end{theorem}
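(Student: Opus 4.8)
The plan is to verify the identity by carrying out the two latent integrations in (\ref{eq:scalemixture}) in sequence, from the inside out, and checking that what survives is exactly the kernel of (\ref{eq:pdf}) with $\mu=0$. Throughout I would abbreviate $\epsilon=\epsilon_i$, $v=v_i$, $\sigma=\sigma_i$, and recall the check function $\rho_\tau(\epsilon)=\epsilon(\tau-I(\epsilon<0))$. Since the normalising constant of (\ref{eq:pdf}) does not depend on $\epsilon$, it suffices to match the exponential kernels, which is why the statement is written with $\propto$; all $\epsilon$-free factors (including the GIG normaliser and Bessel constants) may be discarded along the way.

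First I would integrate out $v$. Writing out the Gaussian density $N(\epsilon;(1-2\tau)v,4v\sigma)$ and the exponential density $E(v;\tau(1-\tau)/(2\sigma))$ and collecting the exponent, the term quadratic in $v$ and the term linear in $v$ combine through the algebraic identity $(1-2\tau)^2+4\tau(1-\tau)=1$, so that the $v$-integral reduces to a generalised-inverse-Gaussian integral $\int_0^\infty v^{-1/2}\exp\{-\tfrac12(av+b/v)\}\,dv$ with $a=1/(4\sigma)$ and $b=\epsilon^2/(4\sigma)$. Using $\int_0^\infty t^{p-1}\exp\{-\tfrac12(at+b/t)\}\,dt=2(b/a)^{p/2}K_p(\sqrt{ab})$ together with the elementary form $K_{1/2}(z)=\sqrt{\pi/(2z)}\,e^{-z}$, the fractional powers of $\sigma$ cancel and the surviving exponent is $\big((1-2\tau)\epsilon-|\epsilon|\big)/(4\sigma)$. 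Checking the two sign cases of $\epsilon$ gives $(1-2\tau)\epsilon-|\epsilon|=-2\rho_\tau(\epsilon)$, so the inner integral collapses to the asymmetric-Laplace kernel
\begin{align*}
  \int_0^\infty N\big(\epsilon;(1-2\tau)v,4v\sigma\big)\,E\Big(v;\tfrac{\tau(1-\tau)}{2\sigma}\Big)\,dv \;\propto\; \frac{1}{\sigma}\exp\!\Big\{-\frac{\rho_\tau(\epsilon)}{2\sigma}\Big\},
\end{align*}
which is the representation of \cite{KozumiKobayashi2011} with scale $2\sigma$.

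Next I would integrate this asymmetric-Laplace kernel against the GIG density in $\sigma$. The $\sigma^{-1}$ factor merges with the $\sigma^{\nu-1}$ of the GIG, and the term $\rho_\tau(\epsilon)/(2\sigma)$ is absorbed into the $d^2/\sigma$ part of the GIG exponent, so the $\sigma$-integral is again of generalised-inverse-Gaussian type, now with reciprocal-scale $d^2+\rho_\tau(\epsilon)$ and the same Bessel identity. With the parameters set so that $c^2=\eta/\rhosq$ and $d^2=\eta\rhosq$ one has $c^2\big(d^2+\rho_\tau(\epsilon)\big)=\eta^2+\tfrac{\eta}{\rhosq}\rho_\tau(\epsilon)=\eta\big(\eta+\rho_\tau(\epsilon)/\rhosq\big)$, and evaluating the integral at half-integer Bessel order reproduces $\exp\{-\sqrt{\eta(\eta+\rho_\tau(\epsilon)/\rhosq)}\}$, which is precisely the kernel of (\ref{eq:pdf}).

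The delicate point — and the place where the specific constants in (\ref{eq:scalemixture}) are doing real work — is forcing both Bessel evaluations to land on the elementary order $K_{1/2}$ so that no residual power of $\epsilon$ is left behind. In the first integration this is guaranteed by the factor of four in the variance $4v\sigma$ and the rate $\tau(1-\tau)/(2\sigma)$, which together give $(1-2\tau)^2+4\tau(1-\tau)=1$ and hence exponent index $p=1/2$; in the second integration it is guaranteed by the GIG index and the matching $c^2=\eta/\rhosq$, $c^2d^2=\eta^2$, which make the leftover $(d^2+\rho_\tau(\epsilon))$-power from the GIG normaliser cancel the $z^{-1/2}$ inside $K_{1/2}(z)$. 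Verifying these cancellations and the sign-case identity for $\rho_\tau$ is the main obstacle; everything else is bookkeeping of $\epsilon$-free constants that drop out under $\propto$.
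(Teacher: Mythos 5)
Your overall strategy --- verifying the representation by integrating out the latent variables from the inside out --- is the reverse of the paper's argument, which starts from the density (\ref{eq:pdf}) and peels it apart with the Andrews--Mallows identity, first into an asymmetric-Laplace-times-GIG mixture and then into the normal--exponential--GIG form via the Kozumi--Kobayashi representation. Your first integration is correct and is in fact more carefully tracked than the paper's corresponding step: the cancellation $(1-2\tau)^2+4\tau(1-\tau)=1$ and the $K_{1/2}$ evaluation do yield exactly the asymmetric Laplace kernel $\sigma^{-1}\exp\{-\rho_\tau(\epsilon)/(2\sigma)\}$ with scale $2\sigma$.

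The gap is in your second integration, and it is not mere bookkeeping. The asymmetric Laplace kernel you obtained carries a factor $\sigma^{-1}$, and the $GIG(\sigma;1,c,d)$ density contributes $\sigma^{\nu-1}=\sigma^{0}$, so the $\sigma$-integrand is
\[
\sigma^{-1}\exp\Bigl\{-\tfrac12\bigl(c^{2}\sigma+(d^{2}+\rho_\tau(\epsilon))\sigma^{-1}\bigr)\Bigr\},
\]
whose integral is $2K_0\bigl(\sqrt{c^{2}(d^{2}+\rho_\tau(\epsilon))}\bigr)=2K_0\bigl(\sqrt{\eta(\eta+\rho_\tau(\epsilon)/\rhosq)}\bigr)$ --- Bessel order $0$, not $1/2$. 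Since $K_0$ has no elementary closed form, the claimed collapse to $\exp\{-\sqrt{\eta(\eta+\rho_\tau(\epsilon)/\rhosq)}\}$ does not occur; for the integral to land on $K_{1/2}$ the mixing GIG would need index $\nu=3/2$ rather than $\nu=1$. Your own stated criterion --- that the powers of $\sigma$ must combine to $-1/2$ --- is precisely the check that fails here, so asserting that ``the GIG index \ldots make[s] the leftover \ldots cancel the $z^{-1/2}$ inside $K_{1/2}(z)$'' is where the argument breaks. To be fair, the paper's own proof commits essentially the same oversight: it asserts the $\mathcal{ALD}\times GIG(\cdot;1,\cdot,\cdot)$ decomposition of (\ref{eq:pdf}) without tracking the $\sigma$-powers, and the $K_0$ factors appearing in the paper's posterior-propriety and unimodality derivations confirm that the stated hierarchy actually marginalises to a $K_0$ kernel. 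Carried out honestly, your verification would not confirm the theorem as written; it would expose this index mismatch.
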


\begin{proof}
    The proof can be found in Appendix \ref{sec:appendixA2}. 
\end{proof}

\section{Bayesian Huberised Regularised Quantile Regression Model}\label{sec:HBR}

\subsection{Bayesian Huberised Lasso Quantile Regression}

In this paper, we consider a Bayesian analogous of Huberised regularised quantile regression model. \cite{Kawakami2023} showed that the unconditional Laplace prior of $\bbeta$ (\cite{Park2008}) would lead to multimodality of a posterior density and resolved this issue by introducing $\rhosq$ as a scale parameter to formulate the Bayesian Huberised lasso, that is,
\begin{equation}
        \pi(\bbeta|\rhosq,\lambda_1) = \prod^k_{j=1} \frac{\lambda_1}{2\sqrt{\rhosq}} \exp\left\{ -\frac{\lambda_1|\bbeta_j|}{\sqrt{\rhosq}} \right\}\,.
        \label{eq:HBL}
\end{equation}
By using the scale mixture of normal representation of Laplace distribution \cite{AndrewsEtAl1974}, the Bayesian Huberised lasso can be expressed as 
\begin{align*}
    \bbeta|\bs,\rhosq \sim N(\mathbf{0},\rhosq \bLambda),\quad s_j|\lambda_1\sim Exp\left(\frac{\lambda_1^2}{2}\right)\,, \quad j=1,\ldots,k\,,
\end{align*}
where $\bs=\left(s_1,\ldots,s_k \right)^T$ and $\bLambda=\text{diag}\left( s_1,\ldots,s_k\right)$.

Therefore, with the Bayesian Huberised lasso, we present the following hierarchical model using the scale mixture of normal representation in Theorem \ref{prop:scalemixture}:
\begin{align*}
    \by|\bX,\bbeta,\bsigma,\bv &\sim N(\bX\bbeta+(1-2\tau)\bv,\bV),\\
    \sigma_i|\rhosq,\eta &\sim GIG\left(1,\frac{\eta}{\rhosq},\eta\rhosq\right)\,,\quad i=1,\ldots,n\,, \\
    v_i|\sigma_i &\sim Exp\left(\frac{\tau(1-\tau)}{2\sigma_i} \right)\,,\quad i=1,\ldots,n\,,\\
    \beta_j|s_j,\rhosq &\sim N(0,\rhosq s_j)\,, \quad j=1,\ldots,k\,,\\
    s_j|\lambda_1^2 &\sim Exp\left(\frac{\lambda^2_1}{2}\right)\,,\quad j=1,\ldots,k\,,\\
    \rhosq &\sim \pi(\rhosq) \propto \frac{1}{\rhosq}\,,\\
    \eta,\lambda_1^2 &\sim \text{Gamma}(\lambda_1^2; a,b)\text{Gamma}(\eta;c,d)\,,
\end{align*}
where $\bV = \text{diag}(4\sigma_1 v_1,\ldots,4\sigma_n v_n)$.
As a prior of $\rhosq$, we assume the improper scale invariant prior, that is proportional to $\frac{1}{\rhosq}$, but a proper inverse gamma prior can also be employed, for example. Similar to \cite{Kawakami2023} and \cite{CaiSun2021}, Proposition \ref{prop:posterior-properity_lasso} shows that using the improper prior on $\rhosq$ will lead to a proper posterior density. Baeed on this proposition, Subsection \ref{sec:multimodality} will show that the unconditional prior on $\bbeta$ can result in multimodality of the joint posterior.  We further impose a gamma prior on $\lambda_1^2$ and $\eta$. We set hyperparameters $a=b=c=d=1$ for simulation studies and real data analysis. The sensitivity analysis of hyperparameters is detailed in Subsection \ref{sec:sensitivity}. 

As for the Gibbs sampler, the full conditional distribution of $\bbeta$ is a multivariate normal distribution  and those of $\bsigma$, $\bv$, $\bs$ and $\rhosq$ are generalised inverse Gaussian distributions. The full conditional distribution of $\lambda_1^2$ is a Gamma distribution. The approximate Gibbs sampler is used for $\eta$. Appendix \ref{app:gibbs-lasso} gives the details of the full conditional posterior distributions for the Gibbs sampling algorithm. 

\begin{proposition}\label{prop:posterior-properity_lasso}
    Let $\rhosq\sim \pi(\rhosq)\propto \frac{1}{\rhosq}$ (improper scale invariant prior). For fixed $\lambda_1>0$ and $\eta>0$, the posterior distribution is proper for all $n$.
\end{proposition}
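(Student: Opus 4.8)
The plan is to show directly that the joint posterior integrates to a finite value over all of its arguments $(\bbeta,\bsigma,\bv,\bs,\rhosq)$. Since $\lambda_1$ and $\eta$ are fixed, their Gamma priors contribute only bounded positive constants and may be dropped. All conditional densities in the hierarchy are nonnegative, so by Tonelli's theorem I may integrate in any convenient order. First I would integrate out the latent scales $\bv$ and $\bsigma$: by the scale-mixture identity of Theorem \ref{prop:scalemixture} this collapses the first three layers of the hierarchy back to $\prod_{i=1}^n f(\epsilon_i;\tau,\eta,\rhosq)$, with $\epsilon_i=y_i-\bx_i\bbeta$ and $f$ the density (\ref{eq:pdf}). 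Likewise, integrating out $\bs$ returns the Huberised-lasso prior (\ref{eq:HBL}) on $\bbeta$ via the Laplace scale-mixture of \cite{AndrewsEtAl1974}. What remains is the two-variable integral
\begin{align*}
 I=\int_0^\infty\!\!\int_{\R^k} \rhosq^{-(n+k/2+1)} \exp\!\left\{-\sum_{i=1}^n\sqrt{\eta^2+\tfrac{\eta}{\rhosq}\rho_\tau(\epsilon_i)}\right\}\exp\!\left\{-\tfrac{\lambda_1}{\sqrt{\rhosq}}\norm{\bbeta}_1\right\} d\bbeta\, d\rhosq,
\end{align*}
where I have used $\epsilon_i(\tau-I(\epsilon_i<0))=\rho_\tau(\epsilon_i)\ge 0$ and gathered the $\rhosq$-powers from the $n$ likelihood prefactors ($\rhosq^{-n}$), the $k$ prior prefactors ($\rhosq^{-k/2}$) and the invariant prior ($\rhosq^{-1}$). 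Propriety is equivalent to $I<\infty$.

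The key idea is the substitution $t=1/\sqrt{\rhosq}$, which linearises both exponents. Under $\rhosq=t^{-2}$ the integral becomes, up to a positive constant,
\begin{align*}
 I\propto\int_0^\infty\!\!\int_{\R^k} t^{2n+k-1}\exp\!\left\{-\sum_{i=1}^n\sqrt{\eta^2+\eta\,\rho_\tau(\epsilon_i)\,t^2}\right\}\exp\{-\lambda_1 t\norm{\bbeta}_1\}\,d\bbeta\,dt.
\end{align*}
I would then bound each square root from below by $\sqrt{\eta^2+\eta\rho_\tau(\epsilon_i)t^2}\ge t\sqrt{\eta\,\rho_\tau(\epsilon_i)}$, so the integrand is dominated by $t^{2n+k-1}e^{-tD(\bbeta)}$ with $D(\bbeta)=\sqrt{\eta}\sum_{i=1}^n\sqrt{\rho_\tau(\epsilon_i)}+\lambda_1\norm{\bbeta}_1$. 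Carrying out the $t$-integral first gives the Gamma integral $\int_0^\infty t^{2n+k-1}e^{-tD}\,dt=\Gamma(2n+k)\,D(\bbeta)^{-(2n+k)}$, reducing the problem to showing $\int_{\R^k}D(\bbeta)^{-(2n+k)}\,d\bbeta<\infty$.

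Finally I would establish this last bound from two elementary properties of $D$. The function is continuous and coercive, since $D(\bbeta)\ge\lambda_1\norm{\bbeta}_1\to\infty$ as $\norm{\bbeta}\to\infty$; hence it attains a global minimum, which is strictly positive because $D(\bbeta)=0$ forces $\bbeta=\mathbf 0$ and $\rho_\tau(y_i)=0$ for every $i$ (recall $\rho_\tau(x)=0\iff x=0$), i.e.\ $\by=\mathbf 0$, which is excluded for any non-degenerate data set. Thus $D^{-(2n+k)}$ is bounded on compact sets, while on $\{\norm{\bbeta}>R\}$ it is controlled by $\lambda_1^{-(2n+k)}\norm{\bbeta}_1^{-(2n+k)}$, integrable over $\R^k$ because the exponent $2n+k$ exceeds the dimension $k$ whenever $n\ge1$. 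This yields $I<\infty$ for every $n$. The step I expect to be the crux is the passage through $t=1/\sqrt{\rhosq}$ combined with the lower bound on the square roots: it is precisely what turns the delicate competition between the blow-up $\rhosq^{-(n+k/2+1)}$ as $\rhosq\to0$ and the vanishing exponentials into the clean, separately checkable coercivity and tail conditions on $D$.
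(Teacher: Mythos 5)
Your proof is correct, and it takes a genuinely different route from the paper's. The paper integrates out $\bbeta$ \emph{first} as a Gaussian integral, uses the identity $|I+AB|=|I+BA|$ and the bound $\left|\bV+\rhosq\bX\bLambda\bX^T\right|^{-1/2}\le\left|\bV\right|^{-1/2}$ to strip out the dependence on $\bbeta$ and $\bs$, then integrates $\bv$ and $\bsigma$ to reach a product of $K_0$ Bessel functions, and finally controls the remaining $\rhosq$ integral via the monotonicity of $K_0$ and the inequality $K_0(x)<K_{1/2}(x)=\sqrt{\pi}e^{-x}/\sqrt{2x}$, recognising an inverse-gamma kernel after the substitution $\sqrt{\rhosq}=x$. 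You instead collapse the latent layers first --- legitimate by Tonelli, and exact rather than merely proportional, since both sides of the mixture representations are normalised densities in $\epsilon_i$ (resp.\ $\beta_j$) for each fixed $(\eta,\rhosq,\tau)$, so the implied constants are one --- and then the substitution $t=1/\sqrt{\rhosq}$ together with the elementary bound $\sqrt{\eta^2+\eta\,\rho_\tau(\epsilon_i)t^2}\ge t\sqrt{\eta\,\rho_\tau(\epsilon_i)}$ replaces the Bessel-function and matrix machinery by a Gamma integral in $t$ followed by a coercivity/tail estimate on $D(\bbeta)^{-(2n+k)}$. Your version is more elementary and makes visible exactly where propriety could fail, namely through $D(\bbeta)$ vanishing; the price is that you must handle the $\bbeta$ integral at the end, which the paper never faces because the determinant inequality disposes of it early. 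The one caveat you state explicitly --- non-degeneracy $\by\neq\mathbf{0}$ --- is also implicitly required by the paper, whose final inverse-gamma kernel has rate $\sqrt{\eta/2}\sum_{i=1}^n\sqrt{|y_i|+(1-2\tau)y_i}$ and is likewise improper when every $y_i=0$, so this is not a gap relative to the published argument.
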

\begin{proof}
    The proof can be found in Appendix \ref{app:posterior-properity_lasso}. 
\end{proof}

\begin{proposition}\label{prop:unimodality-lasso}
    Under the conditional prior for $\bbeta$ given $\rhosq$ and fixed $\lambda_1>0$ and $\eta>0$, the joint posterior $(\bbeta,\rhosq|\by)$ is unimodal with respect to $(\bbeta,\rhosq)$.
\end{proposition}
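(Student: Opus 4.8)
The plan is to analyse the negative log joint posterior $\Phi(\bbeta,\rhosq)=-\log\pi(\bbeta,\rhosq\mid\by)$ and show that its sublevel sets are connected; equivalently, that the super-level sets of $\pi(\bbeta,\rhosq\mid\by)$ are connected, which is the meaning of unimodality I would adopt. Combining the marginal error density (\ref{eq:pdf}) at the residuals $\epsilon_i=y_i-\bx_i\bbeta$, the conditional Laplace prior (\ref{eq:HBL}), and $\pi(\rhosq)\propto1/\rhosq$, we obtain, up to an additive constant,
\[
\Phi(\bbeta,\rhosq)=\Big(n+\tfrac{k}{2}+1\Big)\log\rhosq+\sum_{i=1}^{n}\sqrt{\eta^2+\tfrac{\eta}{\rhosq}\,\rho_\tau(y_i-\bx_i\bbeta)}+\frac{\lambda_1}{\sqrt{\rhosq}}\sum_{j=1}^{k}\abs{\beta_j},
\]
where the $\log\rhosq$ coefficient collects the $n$ likelihood normalisers, the $k/2$ prior normalisers and the invariant prior. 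The immediate difficulty is that $\abs{\beta_j}/\sqrt{\rhosq}$ is not jointly convex in $(\bbeta,\rhosq)$, so $\Phi$ is not convex and one cannot simply invoke log-concavity.

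The key device is the scale change $T:(\bbeta,\rhosq)\mapsto(\bgamma,\xi)$ with $\bgamma=\bbeta/\sqrt{\rhosq}$ and $\xi=1/\sqrt{\rhosq}$, a homeomorphism of $\R^{k}\times(0,\infty)$ onto itself (the transformation underlying the unimodality arguments for the Bayesian lasso and the Huberised lasso). Since $T$ is a homeomorphism it preserves connectedness of super-level sets, so $\pi$ is unimodal iff $G:=\Phi\circ T^{-1}$ has connected sublevel sets; crucially, one composes with $T^{-1}$ \emph{without} a Jacobian factor, because the object of interest is the level-set geometry of the density as a function, not the reparameterised density. Using positive homogeneity of the check function, $\tfrac{\eta}{\rhosq}\rho_\tau(y_i-\bx_i\bbeta)=\eta\,\xi\,\rho_\tau(\xi y_i-\bx_i\bgamma)$, and $\tfrac{\lambda_1}{\sqrt{\rhosq}}\sum_j\abs{\beta_j}=\lambda_1\sum_j\abs{\gamma_j}$, so that
\[
G(\bgamma,\xi)=-\big(2n+k+2\big)\log\xi+\lambda_1\sum_{j=1}^{k}\abs{\gamma_j}+\sum_{i=1}^{n}\sqrt{\eta^2+\eta\,\xi\,\rho_\tau(\xi y_i-\bx_i\bgamma)}.
\]
The first two terms are manifestly convex ($-\log\xi$ is convex and the penalty is a norm that no longer couples to the scale), so everything reduces to the data term.

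In the symmetric (hyperbolic) special case the argument of the root is $\eta^2+\eta(\xi y_i-\bx_i\bgamma)^2$, a second-order-cone function of the affine map $(\bgamma,\xi)\mapsto\xi y_i-\bx_i\bgamma$; it is jointly convex, $G$ is then convex with a unique minimiser, and unimodality is immediate. This is exactly where the asymmetric loss becomes the main obstacle: the check function is positively homogeneous of degree one, so on each region of fixed residual signs the quantity $\eta^2+\eta\xi\rho_\tau(\xi y_i-\bx_i\bgamma)$ is affine in $(\bgamma,\xi)$ through a term of the form $\xi\cdot(\text{linear})$, and its square root is therefore \emph{concave} along the coefficient directions. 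Hence $G$ fails to be convex (indeed the data density is not log-concave even in one dimension), and the SOC shortcut that settles the symmetric case is unavailable.

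To close the asymmetric case I would abandon convexity and rule out disconnected sublevel sets of $G$ directly. I would (i) establish coercivity of $G$ as $\norm{\bgamma}\to\infty$ and as $\xi\to0^{+}$ or $\xi\to\infty$, reusing the tail estimates already needed for propriety in Proposition \ref{prop:posterior-properity_lasso}; and (ii) show, via subdifferential calculus to handle the kinks of the penalty and of $\rho_\tau$, that every stationary point of $G$ is a strict local minimum, so that $G$ cannot possess two separated minima (an intervening mountain-pass point would be a stationary point that is not a local minimum). A coercive $G$ with a single critical point has connected sublevel sets, and unimodality of $\pi(\bbeta,\rhosq\mid\by)$ then follows since $T$ is a homeomorphism. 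The delicate quantitative point—controlling the concave contribution of the square-root terms so that it cannot open a second basin—is where I expect the real work to lie, and where a mild design condition on $\bX$ (or a restriction on $\eta$ relative to $\rhosq$) may ultimately be needed.
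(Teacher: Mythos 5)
Your proposal follows the paper's own route exactly as far as it goes: the same reduction of the joint posterior to a function of $(\bbeta,\rhosq)$, the same coordinate change $\bgamma=\bbeta/\sqrt{\rhosq}$, $\xi=1/\sqrt{\rhosq}$, the same observation that the scale term and the penalty $\lambda_1\norm{\bgamma}_1$ become convex (concave in the log posterior) after the change of variables. One discrepancy worth flagging: the paper does not work with the marginal density (\ref{eq:pdf}) directly but integrates the latent $\bv,\bsigma$ out of the hierarchical model, which produces a data term of the form $\sum_i\log K_0\bigl(\sqrt{\eta^2+\eta\xi\,\bar\rho(\xi y_i-\bx_i\bgamma)}\bigr)$ with $\bar\rho(u)=\tfrac{1}{2}\bigl(\abs{u}+(1-2\tau)u\bigr)$, rather than your $-\sum_i\sqrt{\eta^2+\eta\xi\rho_\tau(\cdot)}$; the two are different functions, although both are convex decreasing transforms of the same inner argument, so the structural issue is identical.

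The gap is that your proof stops at precisely the step that decides the proposition. You correctly observe that the inner argument is $\xi$ times a piecewise-linear function of $(\bgamma,\xi)$, hence affine (and generically non-constant) along coefficient directions within each sign cell, so that a convex decreasing outer function ($-\sqrt{\cdot}$ or $\log K_0(\sqrt{\cdot})$, both of which are convex by the integral representation of $K_0$) yields a data term that is \emph{convex}, not concave, along such lines; joint log-concavity therefore fails and the second-order-cone argument that settles the symmetric Huberised lasso is unavailable. But your substitute programme --- coercivity plus showing every stationary point of $G$ is a strict local minimum and invoking a mountain-pass argument --- is only announced, not executed, and you yourself concede that a design condition on $\bX$ or a restriction on $\eta$ may be needed, which would prove a weaker statement than the proposition claims. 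So the proposal does not establish unimodality. It is only fair to add that the paper's own proof asserts at this exact point that the Bessel term is concave ``since Theorem 2(b) of Baricz et al.\ (2011) is equivalent to log-convexity of $K_\nu$'', and log-convexity of $K_0$ cuts the other way: composed with an argument that is affine along coefficient directions it gives convexity of $\log K_0(\cdot)$ along those directions, which is what your analysis predicts. Your diagnosis of where the difficulty lies is therefore sound and identifies a step in the published argument that does not withstand scrutiny; but as a proof of the stated proposition your proposal is incomplete.
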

\begin{proof}
    The proof can be found in Appendix \ref{app:unimodality-lasso}. 
\end{proof}

\subsection{Bayesian Huberised Elastic Net Quantile Regresison}

We also present the Bayesian Huberised elastic net, that is,
\begin{equation}
    \pi(\bbeta|\rhosq,\lambda_3,\lambda_4) = \prod^k_{j=1} C\left(\tlambda,\lambda_4\right) \frac{\lambda_3}{2\sqrt{\rhosq}} \exp\left\{-\frac{\lambda_3|\beta_j|}{\sqrt{\rhosq}} -\frac{\lambda_4\beta^2_j}{\rhosq} \right\}\,,
    \label{eq:HEN}
\end{equation}
where $C\left(\tlambda\,\lambda_4\right)=\Gamma^{-1}\left(\frac{1}{2},\tlambda\right) \left(\tlambda \right)^{-1/2} \exp\left\{-\tlambda \right\}$ is the normalising constant and $\tlambda=\frac{\lambda_3^2}{4\lambda_4}$. The computations of the normalising constant is detailed in Appendix B of \cite{LiEtAl2010}. Note that by letting $\rhosq=1$, Equation (\ref{eq:HEN}) reduces to the original Bayesian elastic net (\cite{LiLin2010}).

By using the scale mixture property (\cite{AndrewsEtAl1974}), the Bayesian Huberised elastic net can be expressed as a scale mixture of normal with truncated gamma density:
\begin{align*}
    \pi(\bbeta|\rhosq,\lambda_3,\lambda_4) &= \prod^k_{j=1} \int^\infty_0 \Gamma^{-1}\left(\frac{1}{2},\tlambda\right)  \sqrt{\frac{2\lambda_4 t_j}{2\pi\rhosq (t_j-1)} } \sqrt{\frac{\tlambda}{t_j}} \\
    &\quad\quad\quad\quad\times N\left(\beta_j;0,\frac{\rhosq(t_j-1)}{2\lambda_4 t_j} \right) \exp\left\{ -\tlambda t_j \right\} I(t_j>1) d\mathbf{t}\,.
\end{align*}

With the Bayesian Huberised elastic net, we have the following hierarchical model:
\begin{align*}
    \by|\bX,\bbeta,\bsigma,\bv &\sim N(\bX\bbeta+(1-2\tau)\bv,\bV)\,,\\
    \sigma_i|\rhosq,\eta &\sim GIG\left(1,\frac{\eta}{\rhosq},\eta\rhosq\right)\,,\quad i=1,\ldots,n\,, \\
    v_i|\sigma_i &\sim Exp\left(\frac{\tau(1-\tau)}{2\sigma_i} \right)\,,\quad i=1,\ldots,n\,,\\
    \beta_j|t_j,\lambda_4,\rhosq &\sim N\left(0,\frac{2\rhosq(t_j-1)}{\lambda_4 t_j}\right)\,, \quad j=1,\ldots,k\,,\\
    t_j | \tlambda &\sim \Gamma^{-1}\left(\frac{1}{2},\tlambda\right) \sqrt{\frac{\tlambda}{t_j}} \exp\left\{-\tlambda t_j \right\} I(t_j>1)\,,\quad j=1,\ldots,k\,, \\
    \rhosq &\sim \pi(\rhosq) \propto \frac{1}{\rhosq}\,,\\
    \tlambda, \lambda_4,\eta &\sim \text{Gamma}(\tlambda; a_1,b_1)\text{Gamma}(\lambda_4;a_2,b_2)\text{Gamma}(\eta;a_3,b_3)\,,
\end{align*}
where $a_1,a_2,a_3,b_1,b_2,b_3\geq 0$ are hyperparameters, they are set to $1$ for simulation studies and real data analysis and $\Gamma(\cdot,\cdot)$ is the upper incomplete gamma function. 

Appendix \ref{app:gibbs-elastic} gives the details of the full conditional posterior distributions for the Gibbs sampling algorithm. The full conditional distributions are all well-known distributions except the full conditional distributions of $\tlambda$ and $\eta$  and the Metropolis-Hasting algorithm is employed on $\tlambda$. We also present Proposition \ref{prop:unimodality-elastic} for the use of improper prior on $\rhosq$ and provide demonstration of the unconditional prior on $\bbeta$ in Subsection \ref{sec:multimodality}. 

\begin{proposition}\label{prop:posterior-properity_elastic}
    Let $\rhosq\sim \pi(\rhosq)\propto \frac{1}{\rhosq}$ (improper scale invariant prior). For fixed $\lambda_3>0$, $\lambda_4>0$ and $\eta>0$, the posterior distribution is proper for all $n$.
\end{proposition}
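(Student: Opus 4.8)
The plan is to reduce the claim to the already-established lasso case of Proposition \ref{prop:posterior-properity_lasso} by a domination argument, so that no fresh integrability estimate is required. First I would integrate out the latent variables to obtain a posterior in $(\bbeta,\rhosq)$ alone. Since every factor in the hierarchical model is nonnegative, Tonelli's theorem permits integrating in any order: integrating out $\bsigma$ and $\bv$ reconstructs the marginal error density (\ref{eq:pdf}) through the scale-mixture representation of Theorem \ref{prop:scalemixture}, and integrating out $\bt$ reconstructs the elastic net prior (\ref{eq:HEN}) through its truncated-gamma mixture. Thus propriety of the full joint posterior (for the fixed $\lambda_3,\lambda_4,\eta$) is equivalent, up to a finite positive constant coming from the scale-mixture normalisations, to finiteness of
\begin{align*}
I_{EN}=\int_0^\infty\int_{\R^k}\frac{1}{\rhosq}\prod_{i=1}^n f(y_i\mid \bx_i\bbeta,\eta,\rhosq,\tau)\prod_{j=1}^k \pi(\beta_j\mid\rhosq,\lambda_3,\lambda_4)\,d\bbeta\,d\rhosq\,.
\end{align*}

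The key observation is that the elastic net prior is dominated by the lasso prior up to a multiplicative constant. Reading off (\ref{eq:HEN}) and (\ref{eq:HBL}), for each $j$
\begin{align*}
\pi(\beta_j\mid\rhosq,\lambda_3,\lambda_4)=C(\tlambda,\lambda_4)\,\exp\!\left\{-\frac{\lambda_4\beta_j^2}{\rhosq}\right\}\cdot\frac{\lambda_3}{2\sqrt{\rhosq}}\exp\!\left\{-\frac{\lambda_3\abs{\beta_j}}{\sqrt{\rhosq}}\right\}\le C(\tlambda,\lambda_4)\,\pi(\beta_j\mid\rhosq,\lambda_1)\big|_{\lambda_1=\lambda_3}\,,
\end{align*}
because the ridge factor $\exp\{-\lambda_4\beta_j^2/\rhosq\}\le 1$ for $\lambda_4>0$ and $\rhosq>0$. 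Since the response model and the priors on $\sigma_i$, $v_i$ and $\rhosq$ are identical in the two hierarchies, the error density $f$ and the prior $\pi(\rhosq)\propto 1/\rhosq$ are common to both, so only the $\bbeta$-prior differs. Multiplying the per-coordinate bound over $j$ therefore yields $I_{EN}\le C(\tlambda,\lambda_4)^k\,I_{L}$, where $I_{L}$ is the corresponding lasso integral with $\lambda_1=\lambda_3$. By Proposition \ref{prop:posterior-properity_lasso}, applied at the fixed value $\lambda_1=\lambda_3>0$ (and the same fixed $\eta>0$), $I_{L}<\infty$, whence $I_{EN}<\infty$ and the posterior is proper.

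The step I expect to need the most care is confirming that the constant $C(\tlambda,\lambda_4)^k$ is genuinely finite, so that the domination is not vacuous. For fixed $\lambda_3,\lambda_4>0$ we have $\tlambda=\lambda_3^2/(4\lambda_4)\in(0,\infty)$, so the upper incomplete gamma value $\Gamma(1/2,\tlambda)$ is strictly positive and finite; hence $C(\tlambda,\lambda_4)=\Gamma^{-1}(1/2,\tlambda)(\tlambda)^{-1/2}e^{-\tlambda}$ is a finite positive constant, and raising it to the $k$th power is harmless. The remaining points are routine: the Tonelli justification for the latent-variable reduction is immediate from nonnegativity, and $f$ is well defined over the whole range since the residual $\epsilon_i=y_i-\bx_i\bbeta$ satisfies $\epsilon_i(\tau-I(\epsilon_i<0))\ge 0$, keeping the radicand in (\ref{eq:pdf}) nonnegative. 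In effect the additional ridge penalty only tightens the tails in $\bbeta$, so propriety is inherited directly from the lasso model with no new estimate.
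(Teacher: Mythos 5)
Your proof is correct, but it takes a genuinely different route from the paper. The paper's own proof of Proposition \ref{prop:posterior-properity_elastic} repeats the full direct computation of the lasso case: it integrates out $\bbeta$ via the Gaussian integral and the determinant identity $|I+AB|=|I+BA|$, drops the positive-semidefinite term $\frac{\rhosq}{2\lambda_4}\bX\bLambda_2^{-1}\bX^T$, integrates out $\bv$, $\bsigma$ and $\bt$ in turn, bounds $K_0$ by $K_{1/2}$, and lands on the same inverse-gamma kernel as in (\ref{eq:integrand}). You instead observe that, after marginalising the latent variables (legitimate by Tonelli, since the likelihood and the $\bsigma$, $\bv$, $\rhosq$ layers are identical in the two hierarchies and Proposition \ref{prop:posterior-properity_lasso} is itself a statement about that same joint normalising constant), the elastic net prior (\ref{eq:HEN}) equals $C(\tlambda,\lambda_4)\exp\{-\lambda_4\beta_j^2/\rhosq\}$ times the lasso prior (\ref{eq:HBL}) with $\lambda_1=\lambda_3$, and the ridge factor is at most $1$; finiteness then follows from Proposition \ref{prop:posterior-properity_lasso} at $\lambda_1=\lambda_3$ after checking that $C(\tlambda,\lambda_4)^k$ is finite, which you do. This domination argument is shorter, avoids redoing the matrix algebra and the Bessel-function estimates, and makes transparent the intuition that the extra ridge penalty can only lighten the tails in $\bbeta$; what the paper's direct computation buys in exchange is a self-contained derivation with explicit intermediate bounds (in particular the reduction to the $IG(3n/2,\cdot)$ kernel) that does not lean on the lasso result. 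Both arguments are sound; yours is the more economical.
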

\begin{proof}
    The proof can be found in Appendix \ref{app:posterior-properity_elastic}. 
\end{proof}

\begin{proposition}\label{prop:unimodality-elastic}
    Under the conditional prior for $\bbeta$ given $\rhosq$ and fixed $\lambda_3>0$, $\lambda_4>0$ and $\eta>0$, the joint posterior $(\bbeta,\rhosq|\by)$ is unimodal with respect to $(\bbeta,\rhosq)$.
\end{proposition}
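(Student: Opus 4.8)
The plan is to reduce the statement to the lasso case already settled in Proposition~\ref{prop:unimodality-lasso}, since the Bayesian Huberised elastic net differs from the Bayesian Huberised lasso only through the additional ridge contribution. Writing the joint posterior $p(\bbeta,\rhosq\mid\by)$ up to proportionality as the product of the marginal likelihood $\prod_{i=1}^n f(y_i\mid \bx_i\bbeta,\eta,\rhosq,\tau)$ from~(\ref{eq:pdf}), the conditional prior~(\ref{eq:HEN}), and the scale-invariant prior $1/\rhosq$, and taking minus the logarithm, the penalty supplies $\frac{\lambda_3}{\sqrt{\rhosq}}\sum_j\abs{\beta_j}+\frac{\lambda_4}{\rhosq}\sum_j\beta_j^2$; the normalising constant $C(\tlambda,\lambda_4)$ is fixed once $\lambda_3,\lambda_4$ are fixed and is irrelevant to modality. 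Every remaining factor---the robust likelihood and the $1/\rhosq$ prior---is identical to the one appearing in the lasso model, so the only genuinely new object is the quadratic term $\frac{\lambda_4}{\rhosq}\sum_j\beta_j^2$.

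I would then apply the same change of variables that decouples the scale in the lasso proof, namely $\bgamma=\bbeta/\sqrt{\rhosq}$. Under it the $\ell_1$ term becomes $\lambda_3\sum_j\abs{\gamma_j}$ exactly as before, while the ridge term becomes $\lambda_4\sum_j\gamma_j^2$, which is independent of $\rhosq$ and strictly convex in $\bgamma$. Consequently, in these coordinates the transformed negative log posterior for the elastic net equals the transformed negative log posterior for the lasso plus the single convex, scale-free term $\lambda_4\sum_j\gamma_j^2$. Invoking the structure established in Proposition~\ref{prop:unimodality-lasso}, the lasso objective possesses a unique mode; adding a convex term that does not involve $\rhosq$ can only sharpen the curvature in the $\bgamma$ directions and cannot introduce a new stationary point, so the elastic-net objective retains a single minimiser, which transfers back through the bijection $\bgamma=\bbeta/\sqrt{\rhosq}$ to a unique mode of $p(\bbeta,\rhosq\mid\by)$.

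The hard part is the robust likelihood term $\sum_{i=1}^n\sqrt{\eta\bigl(\eta+\rho_\tau(y_i-\bx_i\bbeta)/\rhosq\bigr)}$. Because the proposed loss grows only sublinearly and is concave on each side of the origin, this term is not convex in $\bbeta$, so no naive joint-convexity argument for the elastic-net posterior is available; the entire burden of showing that the likelihood does not generate spurious modes after the scale reparameterisation already lies in Proposition~\ref{prop:unimodality-lasso}. My plan therefore deliberately offloads that difficulty onto the lasso result and confines the new work to two checks: that $\bgamma=\bbeta/\sqrt{\rhosq}$ simultaneously removes $\rhosq$ from both penalties, which is routine, and that superimposing the convex ridge term on the lasso objective cannot split its single mode. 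The second point is the delicate one, and it goes through cleanly precisely when Proposition~\ref{prop:unimodality-lasso} is itself obtained by exhibiting the lasso objective as a convex function in these coordinates, so that convexity is preserved under the addition.
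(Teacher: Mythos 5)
Your proposal is correct and follows essentially the same route as the paper: the paper's proof likewise integrates out the latent variables, applies the transformation $\Phi=\bbeta/\sqrt{\rhosq}$, $\xi=1/\sqrt{\rhosq}$, and observes that the transformed negative log-posterior is the lasso objective plus the additional convex, $\xi$-free term $\lambda_4\norm{\Phi}_2^2$, with every term concave in the log-posterior (the Bessel term via log-convexity of $K_\nu$). Your closing caveat is exactly the right one --- the reduction works because Proposition~\ref{prop:unimodality-lasso} is proved by establishing concavity, not mere unimodality, in the transformed coordinates, so adding the ridge term preserves concavity and hence a single mode.
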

\begin{proof}
    The proof can be found in Appendix \ref{app:unimodality-elastic}. 
\end{proof}

\subsection{Approximate Gibbs Sampler for Estimation of $\eta$}

In this subsection, we will briefly discuss the approximate Gibbs sampler for the data-dependent estimation of $\eta$ that is proposed by \cite{Kawakami2023}. Notice that in a Bayesian Huberised regularised quantile regression model, the full conditional distribution of $\eta$ is 
\begin{align}
    \pi(\eta|\bsigma,\rhosq)\propto \frac{1}{K_1(\eta)^n} \eta^{a-1} \exp\left\{-\eta\left(\frac{1}{2}\sum^n_{i=1}\left( \frac{\sigma_i}{\rhosq} +\frac{\rhosq}{\sigma_i}\right) +b\right) \right\},
    \label{eq:eta-posterior}
\end{align}
where $a=c$ and $b=d$ in case of Bayesian Huberised lasso quantile regression and $a=a_3$ and $b=b_3$ in case of Bayesian Huberised elastic net quantile regression. Since the right side of Equation (\ref{eq:eta-posterior}) contains the modified Bessel function of the second kind, the full conditional distribution of $\eta$ does not have a conjugacy property. However, it is possible to approximate (\ref{eq:eta-posterior}) by a common probability distribution. 

For the selection of an initial value of the approximate Gibbs sampling algorithm, we need to approximate the modified Bessel function of the second kind. According to \cite{Abramowitz1965}, we have $K_\nu(x)\sim\left(\frac{1}{2} \right)\Gamma(\nu)\left(\frac{x}{2} \right)^{-\nu}$ as $x\xrightarrow{} 0$ for $\nu>0$ and $K_\nu(x)\sim \sqrt{\frac{x}{2\pi}}e^{-x}$ as $x\xrightarrow{}\infty$. \cite{Kawakami2023} stated that in either case, it would not make much difference in estimating $\eta$. So, we will focus on the latter case only for this paper. As $\eta\xrightarrow{} \infty$, we have 
\begin{align*}
    \pi(\eta|\bsigma,\rhosq) \approx \eta^{a+n/2-1}   \exp\left\{-\eta\left(\frac{1}{2}\sum^n_{i=1}\left( \frac{\sigma_i}{\rhosq} +\frac{\rhosq}{\sigma_i}\right) +b-n\right) \right\}\,,
\end{align*}
which holds the approximation $\pi(\eta|\bsigma,\rhosq)\approx \text{Gamma}\left(\eta;a+\frac{n}{2},\frac{1}{2}\sum^n_{i=1}\left( \frac{\sigma_i}{\rhosq} +\frac{\rhosq}{\sigma_i}\right) +b-n \right)$ for large $\eta$. 

The algorithm of the approximate Gibbs sampler is as follows. 

Given the current Markov chain states $(\bsigma,\rhosq)$, we set the initial value as $A=a+n/2$ and $B=\frac{1}{2}\sum^n_{i=1}\left( \frac{\sigma_i}{\rhosq} +\frac{\rhosq}{\sigma_i}\right) +b-n$. For $m=1,\ldots,M$, do the following steps
\begin{itemize}
    \item $\eta\xleftarrow{} \frac{A}{B} $;
    \item $A\xleftarrow{} a+n\eta^2\frac{\partial^2}{\partial\eta^2}\log K_1(\eta) $;
    \item $B\xleftarrow{} b+\frac{A-a}{\eta} +n\frac{\partial}{\partial\eta}\log K_1(\eta) +\frac{1}{2}\sum^n_{i=1}\left( \frac{\sigma_i}{\rhosq} +\frac{\rhosq}{\sigma_i}\right) $\,.
\end{itemize}
until $\abs{\eta/(A/B) -1 }<\varepsilon$ or in other words, the convergence of $\eta$ is met. The full derivation of the algorithm is detailed in \cite{Kawakami2023} and they also illustrated that in their simulation results, the approximation is close to the true full conditional distribution and the approximation accuracy increases as the sample size increase. For simulation studies and real data analysis, we set $M=10$ and a tolerance $\varepsilon=10^{-8}$. 

\section{Simulations}\label{sec:simulation}

\subsection{Multimodality of Joint Posteriors}\label{sec:multimodality}

\begin{figure}[ht]
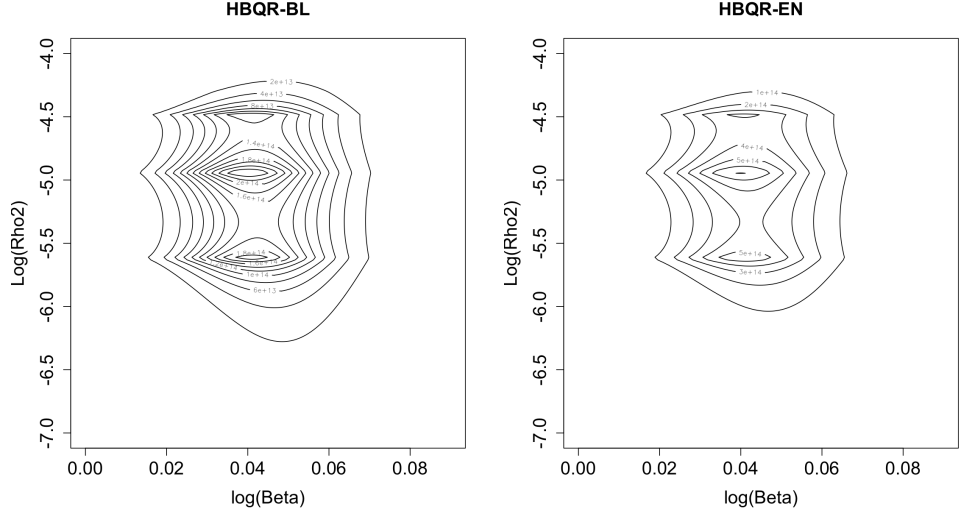

    \centering
    \includegraphics[width=0.4\textwidth]{HBQR-BL-contour.pdf}
    \includegraphics[width=0.4\textwidth]{HBQR-EN-contour.pdf}
    \caption{Contour plot of an artificially generated posterior density  of $(\log(\beta),\log(\rhosq))$ of the joint posterior density (\ref{eq:BL}) and (\ref{eq:EN}) for Bayesian Huberised lasso quantile regression and Bayesian Huberised elastic net quantile regression, respectively.  The logarithm of $\beta$ and $\rhosq$ is used for a better visibility.}
    \label{fig:contour}
\end{figure}

As related to Propositions \ref{prop:unimodality-lasso} and \ref{prop:unimodality-elastic}, we present a simple simulation to demonstrate that the unconditional prior for $\bbeta$ can result in multimodality of the joint posterior. Instead of Equations (\ref{eq:HBL}) and  (\ref{eq:HEN}), we specify the unconditional lasso prior  
\begin{align*}
    \pi(\bbeta|\lambda_1) = \prod^k_{j=1} \frac{\lambda_1}{2} \exp\left\{ -\lambda_1|\bbeta_j| \right\}\,, 
\end{align*}
and the unconditional elastic net prior
\begin{align*}
    \pi(\bbeta|\lambda_3,\lambda_4) = \prod^k_{j=1} C\left(\lambda_3,\lambda_4\right) \frac{\lambda_3}{2} \exp\left\{-\lambda_3|\beta_j| -\lambda_4\beta^2_j \right\}\,,
\end{align*}
with same improper prior $\pi(\rhosq)\propto \frac{1}{\rhosq}$. Then the joint posterior distribution of $\bbeta$ and $\rhosq$ for Bayesian Huberised lasso quantile regression is proportional to 
\begin{align}
     \pi(\bbeta,\rhosq|\by) &\propto {(\rhosq)}^{-n-1} \exp\left\{ -\lambda_1 \sum^k_{j=1} |\beta_j|\right\}  \nonumber\\
    &\quad\quad\times \prod^n_{i=1} K_0\left(\sqrt{\frac{\eta}{\rhosq}\left( \frac{|y_i-\bx_i\bbeta|+(1-2\tau)(y_i-\bx_i\bbeta)}{2}\right) } \right)\,,
    \label{eq:BL}
\end{align}
and that for Bayesian Huberised elastic net quantile regression is proportional to 
\begin{align}
    \pi(\bbeta,\rhosq|\by) &\propto{(\rhosq)}^{-n-1} \exp\left\{ - \lambda_3 \sum^k_{j=1} |\beta_j| -\lambda_4\sum^k_{j=1} \beta_j^2 \right\} \nonumber \\
    &\quad\quad\times \prod^n_{i=1} K_0\left(\sqrt{\frac{\eta}{\rhosq}\left( \frac{|y_i-\bx_i\bbeta|+(1-2\tau)(y_i-\bx_i\bbeta)}{2}\right) } \right)\,,
    \label{eq:EN}
\end{align}

In Appendices \ref{app:unimodality-lasso} and \ref{app:unimodality-elastic}, it is shown that using the conditional prior (\ref{eq:HBL}) and (\ref{eq:HEN}), respectively, lead to a unimodal posterior for any choice of $\lambda_1,\lambda_3,\lambda_4\geq 0$ and $\eta>0$ with an improper prior $\pi(\rhosq)$. On the other hand, the joint posteriors (\ref{eq:BL}) and (\ref{eq:EN}) can have more than one mode. For example, Figure \ref{fig:contour} showed the contour plots of a multimodal joint density of $\log(\beta)$ and $\log(\rhosq)$. This particular example results from considering the following data generated model,
\begin{align*}
    y_i = x_i \beta + \epsilon_i\,, \quad \epsilon_i\sim ALD(0,\sigma=0.03,\tau=0.5)\,,
\end{align*}
where $\beta=1$ and $x_i\sim N(0,1)$ for $i=1,\ldots,10$, which is similar to \cite{CaiSun2021}. Due to multimodality in the joint posterior with unconditional prior for $\bbeta$, we use the prior for $\bbeta$ conditioning on the scale parameter $\rhosq$. 

\subsection{Sensitivity analysis of hyper-parameters}\label{sec:sensitivity}

In this subsection, we test the sensitivity of hyperparameters of Gamma prior of $\eta$, $\lambda_1$, $\lambda_3$ and $\lambda_4$ on the posterior estimates for the proposed methods. We equally divide $x\in[-2,2] $ into $50$ pieces and the data are generated from 
\begin{align*}
    y_i = \mathbf{x}_i\bm{\beta} + \epsilon_i\,,\quad \epsilon_i\sim ALD(0,\sigma=0.03,\tau=0.5)\,, \quad i=1,\ldots,50\,,
\end{align*}
with $\mathbf{x}_i=\left( \left( 1+e^{-4(x_i-0.3)} \right)^{-1}, \left( 1+e^{3(x_i-0.2)} \right)^{-1}, \left( 1+e^{-4(x_i-0.7)} \right)^{-1}, \left( 1+e^{5(x_i-0.8)} \right)^{-1} \right)^T$ and $\bm{\beta}=(1,1,1,1)^T$. It indicates that the true curve is
\begin{align*}
    f(x) =  \left( 1+e^{-4(x-0.3)} \right)^{-1} + \left( 1+e^{3(x-0.2)} \right)^{-1} + \left( 1+e^{-4(x-0.7)} \right)^{-1} + \left( 1+e^{5(x-0.8)} \right)^{-1}\,.
\end{align*}
In fact, this function was utilised in \cite{JullionLambert2007} to test the sensitivity of hyperparameters of the Gamma prior on the scale component in Bayesian P-spline.

We consider the proposed models to estimate $\bm{\beta}$. Note that there are four prior hyperparameters a, b, c and d in the Bayesian Huberised lasso quantile regression and six prior hyperparamters $a_1$, $b_1$, $a_2$, $b_2$, $a_3$ and $b_3$ in the Bayesian Huberised elastic net quantile regression. We mainly set $a=b=c=d=a_1=a_2=a_3=b_1=b_2=b_3=1$ in both simulation studies and data analysis. We generate $3000$ posterior samples after discarding the first 1000 posterior samples as burn-in. Then we plot $y_i=\mathbf{x}_i\bm{\beta}$ for $i=1,\ldots,50$ in Figures \ref{fig:HBQR-BL-hyper} and \ref{fig:HBQR-BEN-hyper} for both proposed Bayesian models, where $\bm{\beta}$ is the posterior mean for the corresponding proposed model. In Figure \ref{fig:HBQR-BL-hyper}, we fixed $a=1$ with $b$ varied for the top-left plot and $b=1$ with $a$ varied for the top-right plot. In both cases, we  keep $c=d=1$ fixed. Both bottom plots of Figure \ref{fig:HBQR-BL-hyper} follows in a similar manner. As for Figure \ref{fig:HBQR-BEN-hyper}, we also fixed $a_1=1$ with $b_1$ varied for the top-left plot while keeping $a_2=b_2=a_3=b_3=1$. The rest of Figure \ref{fig:HBQR-BEN-hyper} also follows in a similar manner.   From the figures, we observe that the estimation results do not change very much for a variety selection of hyperparameters.

\begin{figure}[H]
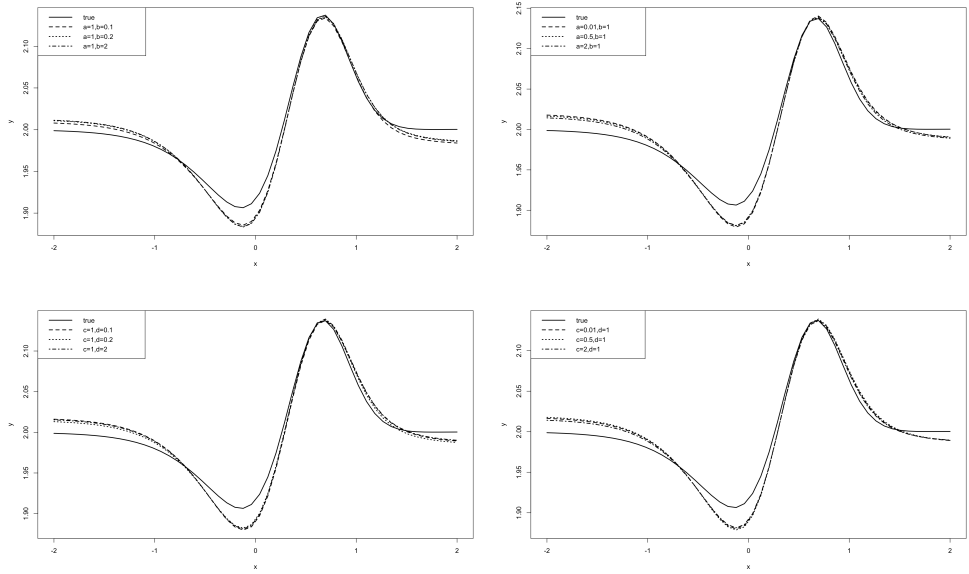

    \centering
    \includegraphics[width=0.4\textwidth]{HBQR_BL_b122.pdf}
    \includegraphics[width=0.4\textwidth]{HBQR_BL_a152.pdf}
    \hfill
    \includegraphics[width=0.4\textwidth]{HBQR_BL_d122.pdf}
    \includegraphics[width=0.4\textwidth]{HBQR_BL_c152.pdf}
    \caption{Sensitivity analysis of hyper-parameters for the Bayesian Huberised lasso quantile regression.}
    \label{fig:HBQR-BL-hyper}
\end{figure}
\FloatBarrier

\begin{figure}[H]
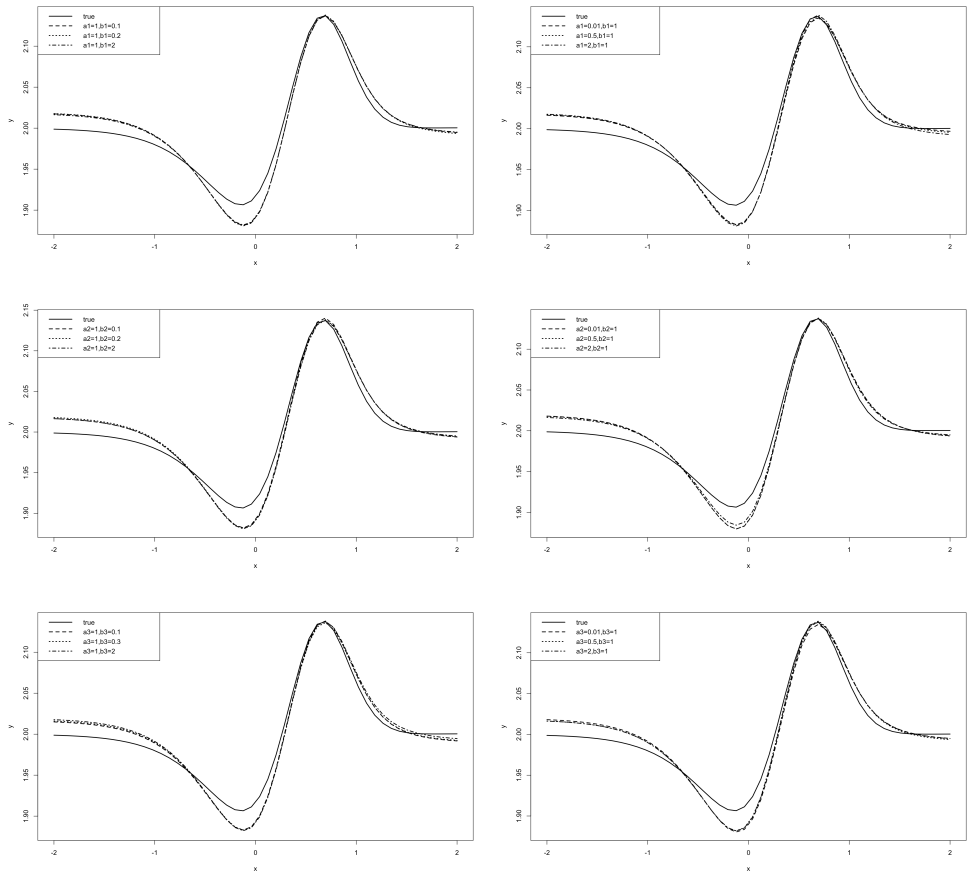

    \centering
    \includegraphics[width=0.4\textwidth]{HBQR_BEN_b1_122.pdf}
    \includegraphics[width=0.4\textwidth]{HBQR_BEN_a1_152.pdf}
    \hfill
    \includegraphics[width=0.4\textwidth]{HBQR_BEN_b2_122.pdf}
    \includegraphics[width=0.4\textwidth]{HBQR_BEN_a2_152.pdf}   
    \hfill
    \includegraphics[width=0.4\textwidth]{HBQR_BEN_b3_122.pdf}
    \includegraphics[width=0.4\textwidth]{HBQR_BEN_a3_152.pdf}
    \caption{Sensitivity analysis of hyper-parameters for the Bayesian Huberised elastic net quantile regression.}
    \label{fig:HBQR-BEN-hyper}
\end{figure}
\FloatBarrier

\subsection{Simulation Studies}

In simulation studies, we illustrate performance of the proposed methods. We compare
the point and interval estimation performance of the proposed methods with those of
some existing methods. To this end, we consider the following regression model with
$n\in \{100,200\}$, $k=20$ and $\tau\in\{0.25,0.5,0.75\}$:
\begin{align*}
    y_i = \beta_0 + \beta_1x_{i1} +\ldots +  \beta_kx_{ik} +\sigma\epsilon_i\,,\quad i=1,\ldots,n\,,
\end{align*}
where $\beta_0=1$, $\beta_1=3$, $\beta_2=0.5$,$\beta_4=\beta_{11}=1$, $\beta_7=1.5$ and the other $\beta_j$'s were set to $0$. We assume $\mathbf{y}=(y_1,\ldots,y_n)^T$ is the response vector. The predictors $\mathbf{x}_i=(x_{i1},\ldots,x_{ik})^T$ were generated from a multivariate normal distribution $N_k(0,\Sigma)$ with $\Sigma=(r^{|i-j|})_{1\leq i,j\leq k}$ for $|r|<1$. Similar to \cite{Kawakami2023} and \cite{LambertLaurent2011}, we consider the siz scenarios.

\begin{itemize}
    \item Simulation 1: Low correlation and Gaussian noise. $\bm{\epsilon}\sim N_n(0,I_n)$, $ \sigma=2$ and $r=0.5$.
    \item Simulation 2: Low correlation and large outliers. $\epsilon=W/\sqrt{var(W)}$, $\sigma=9.67$ and $r=0.5$. $W$ is ta random variable according to the contaminated density defined by $0.9\times N(0,1)+0.1\times N(0,15^2)$, where $\sqrt{var(W)}=4.83$.
    \item Simulation 3: High correlation and large outliers. $\epsilon=W/\sqrt{var(W)}$, $\sigma=9.67$ and $r=0.95$.
    \item Simulation 4: Large outliers and skew Student-t noise. $\epsilon_i \sim 0.9\times \text{Skew-}t_3(\gamma=3) + 0.1\times N(0,20^2) $, $\sigma=1$ and $r=0.5$.
    \item Simulation 5: Heavy-tailed noise. $\epsilon_i\sim \text{Cauchy}(0,1)$, $\sigma=2$ and $r=0.5$.
    \item Simulation 6: Multiple outliers.  $\epsilon_i \sim 0.8\times \text{Skew-}t_3(\gamma=3) + 0.1\times N(0,10^2) +0.1\times \text{Cauchy}(0,1)$, $\sigma=1$ and $r=0.5$.
\end{itemize}

For the simulated dataset, we applied the proposed robust methods denoted by HBQR-BL and HBQR-EN where they were employed with Bayesian Huberised Lasso and Bayesian Huberised elastic net, respectively. We also applied the existing robust methods, including Bayesian linear regression with Bayesian Huberised lasso (\cite{Kawakami2023}), and Bayesian quantile regression with original Bayesian lasso and Bsyesian elastic net (\cite{LiEtAl2010}) denoted by HBL, BQR-BL and BQR-EN, respectively. For HBL and BQR-BL, we assume $\lambda_1\sim \text{Gamma}(a=1,b=1)$ and for BQR-EN, we assume $\lambda_1\sim \text{Gamma}(a_1=1,b_1=1)$ and $\lambda_2\sim \text{Gamma}(a_2=1,b_2=1)$. For the HBQR-BL and HBQR-EN, We implement both Gibbs and Metropolis-within-Gibbs algorithms, respectively and set all the hyperparameters to 1. 

When applying the above methods, we generated 2000 posterior samples after discarding the first 500 samples as burn-in. We computed posterior median of each element of $\beta_j$'s for point estimates of $\beta_j$'s, and the performance is evaluated via root of mean squared error (RMSE) defined as \\$\left[(k+1)^{-1} \sum^k_{j=0} (\hat{\beta}_j-\beta_j^{\text{true}})^2 \right]^{1/2}$, and median of mean absolute error (MMAD) defined as \\$\text{median}\left[(k+1)^{-1} \sum^k_{j=0} \left|\hat{\beta}_j-\beta_j^{\text{true}}\right| \right]$. We also computed $95\%$ credible intervals of $\beta_j$'s, and calculated average lengths (AL) and coverage probability (CP) defined as $(k+1)^{-1} \sum^k_{j=0} |CI_j|$ and $(k+1)^{-1} \sum^k_{j=0} I(\beta_j\in CI_j)$, respectively. These values were averaged over 300 replications of simulating datasets. 

We report simulation results in Tables \ref{tab:model1}-\ref{tab:model6} and Figures \ref{fig:boxplot-rmse}-\ref{fig:boxplot-eta}.  In Simulation 1, there is no outliers in simulated datasets. In the median case ($\tau=0.5$), both HBL and BQR-BL have the smallest RMSE and MMAD for both $n=100$ and $n=200$. In the upper and lower quantile cases ($\tau=0.25,0.75$), HBQR-BL and HBQR-EN outperformed BQR-BL and BQR-EN even though they are comparable. In the presence of large outliers (Simulations 2 and 3) for $\tau=0.5$, HBQR-BL and HBL perform better for $n=100$ for Simulation 2. As the sample size increases to $n=200$, BQR-BL and HBL outperform the proposed methods. Similarly, HBL and BQR-EN have the smallest RMSE and MMAD for Simulation 3. However, for both simulations, the proposed methods perform significantly better in case of upper and lower quantiles. Particularly, in Simulations 4-6 where there are skewed \& heavy-tailed noise with large outliers, heavy-tailed noise (Cauchy distribution) and multiple outliers, respectively, the proposed methods perform significantly better than the existing robust methods in all cases of $\tau$. Observing the performance of BQR-EN, the boxplots of Simulation 5 in Figures \ref{fig:boxplot-rmse}-\ref{fig:boxplot-al} were wider compared to other methods, which suggest that this method may not produce as efficient estimates as the proposed HBQR-EN did.  All the boxplots for $\tau=0.25$ and $\tau=0.75$ can be found in Appendix \ref{app:figures}. Looking at the CP, they are reasonable in all simulations. For AL, it is evident that the proposed methods have the lowest AL in all cases while the BQR-BL has the largest AL.  To conclude these simulation studies, the proposed methods seem to perform well consistently in all scenarios. 

We also present the boxplot of posterior median of $\eta$ in Figure \ref{fig:boxplot-eta} for $\tau=0.5$. In the absence of outliers (Simulation 1), the posterior median of $\eta$ has large values. On the other hand, in the present of large outliers (Simulations 2-4,6) and in a model following a heavy-tailed noise (Simulation 5), small $\eta$ is chosen. The results for $\tau=0.25$ and $\tau=0.75$ are similar (see Appendix \ref{app:figures}). Therefore, like the HBL method (\cite{Kawakami2023}), it is evident that $\eta$ is adaptively chosen for each simulated dataset. 

\begin{figure}[H]
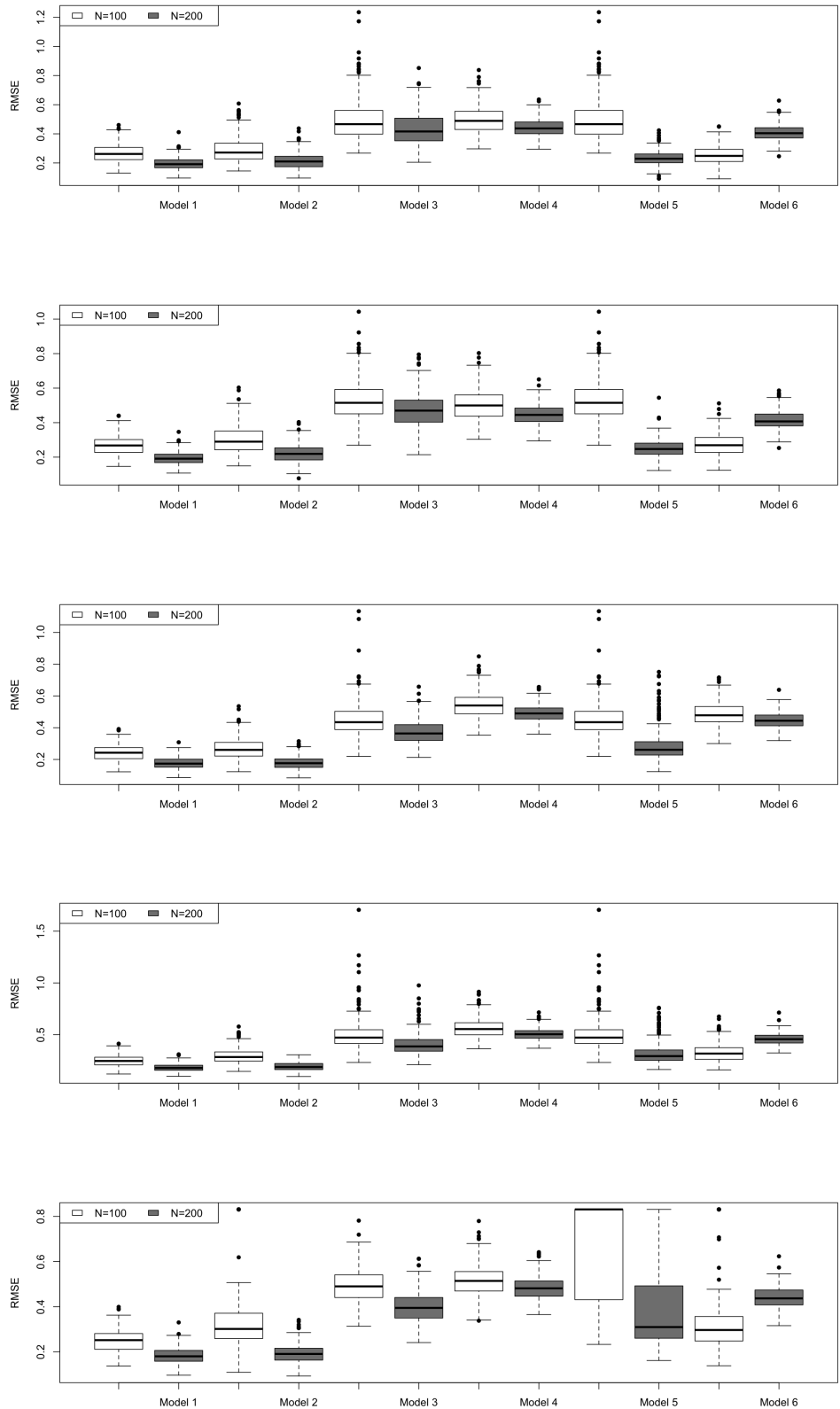

    \centering
    \includegraphics[width=0.8\textwidth]{HBQR_BL_RMSE_tau5.pdf}
    \includegraphics[width=0.8\textwidth]{HBQR_BEN_RMSE_tau5.pdf}
    \includegraphics[width=0.8\textwidth]{HBL_RMSE_tau5.pdf}
    \includegraphics[width=0.8\textwidth]{BQR_BL_RMSE_tau5.pdf}
    \includegraphics[width=0.8\textwidth]{BQR_BEN_RMSE_tau5.pdf}
    \caption{Boxplots of RMSE based on 300 replications in six simulation scenarios for HBQR-BL, HBQR-EN, HBL, BQR-BL and BQR-EN in this order ($\tau=0.5$).}
    \label{fig:boxplot-rmse}
\end{figure}
\FloatBarrier

\begin{figure}[H]
    \centering
    \includegraphics[width=0.8\textwidth]{HBQR_BL_MMAD_tau5.pdf}
    \includegraphics[width=0.8\textwidth]{HBQR_BEN_MMAD_tau5.pdf}
    \includegraphics[width=0.8\textwidth]{HBL_MMAD_tau5.pdf}
    \includegraphics[width=0.8\textwidth]{BQR_BL_MMAD_tau5.pdf}
    \includegraphics[width=0.8\textwidth]{BQR_BEN_MMAD_tau5.pdf}
    \caption{Boxplots of MMAD based on 300 replications in six simulation scenarios for HBQR-BL, HBQR-EN, HBL, BQR-BL and BQR-EN in this order ($\tau=0.5$).}
    \label{fig:boxplot-mmad}
\end{figure}
\FloatBarrier

\begin{figure}[H]
    \centering
    \includegraphics[width=0.8\textwidth]{HBQR_BL_AL_tau5.pdf}
    \includegraphics[width=0.8\textwidth]{HBQR_BEN_AL_tau5.pdf}
    \includegraphics[width=0.8\textwidth]{HBL_AL_tau5.pdf}
    \includegraphics[width=0.8\textwidth]{BQR_BL_AL_tau5.pdf}
    \includegraphics[width=0.8\textwidth]{BQR_BEN_AL_tau5.pdf}
    \caption{Boxplots of AL based on 300 replications in six simulation scenarios for HBQR-BL, HBQR-EN, HBL, BQR-BL and BQR-EN in this order ($\tau=0.5$).}
    \label{fig:boxplot-al}
\end{figure}
\FloatBarrier

\begin{figure}[H]
    \centering
    \includegraphics[width=\textwidth]{HBQR_BL_Eta_tau5.pdf}
    \includegraphics[width=\textwidth]{HBQR_BEN_Eta_tau5.pdf}
    \caption{Boxplots of posterior median of $\eta$ based on 300 replications in six simulation scenarios for HBQR-BL (top) and HBQR-EN (bottom) ($\tau=0.5$).}
    \label{fig:boxplot-eta}
\end{figure}
\FloatBarrier

\begin{table}[H]
    \centering
    \begin{tabular}{|c|c|c|c|c|c|}
    \hline
         & Methods & RMSE & MMAD & AL & CP  \\
         \hline
        \multirow{8}{*}{$\tau=0.25$}
        & HBQR-BL100 &\textbf{0.3675} &\textbf{0.2548} &0.9238& 0.8711 \\
        & HBQR-EN100 & \textbf{0.3795}  &\textbf{0.2585} & 0.9673 &  0.8725 \\
        & BQR-BL100 &0.3956 &0.2627 & 1.3406  & 0.9411 \\
        & BQR-EN100 &0.3859 & 0.2628  &0.9624 & 0.8821 \\

        & HBQR-BL200 &\textbf{0.3328} & \textbf{0.2108 }&0.6624 & 0.8483 \\
        & HBQR-EN200 &\textbf{0.3380} &\textbf{0.2104} &0.6822 & 0.8576 \\
        & BQR-BL200 &0.3534  & 0.2118 &0.9076 & 0.9311 \\
        & BQR-EN200 & 0.3476 & 0.2123 &0.6819 & 0.8732 \\

\hline
         \multirow{10}{*}{$\tau=0.5$} 
         & HBQR-BL100 &0.2659 &0.2059 &0.9465 & 0.9211 \\
        & HBQR-EN100 &0.2678 &0.2100  &0.9834   &  0.9289    \\
        & HBL100 &\textbf{0.2426}  &\textbf{0.1891} &0.9553 & 0.9432 \\ 
        & BQR-BL100 &\textbf{0.2468} &\textbf{0.1946} &1.2976 & 0.9848 \\
        & BQR-EN100 & 0.2502 &0.1968  &0.9838 &  0.9413 \\
    
        & HBQR-BL200 &0.1962 &0.1550 &0.6810 & 0.9143 \\
        & HBQR-EN200 &0.1952 &0.1549 &0.6927 & 0.9192 \\
        & HBL200 &\textbf{0.1777} &\textbf{0.1404} &0.6763 & 0.9384 \\ 
        & BQR-BL200 &\textbf{0.1825} &\textbf{ 0.1452  }&0.8756 & 0.9778 \\
        & BQR-EN200 &0.1841 &0.1460 &0.6900 & 0.9329 \\
\hline
         \multirow{8}{*}{$\tau=0.75$} 
         & HBQR-BL100 &\textbf{0.3635}  &\textbf{0.2521} &0.9395 &0.8756  \\
        & HBQR-EN100 &\textbf{0.3662} &\textbf{0.2503} &0.9891 & 0.8722 \\
        & BQR-BL100 &0.3943  & 0.2587 &1.3484 & 0.9440 \\
        & BQR-EN100 &0.3853  &0.2664 &0.9920 & 0.8859 \\
  
        & HBQR-BL200 &\textbf{0.3386} & \textbf{0.2141}  &0.6703 & 0.8573 \\
        & HBQR-EN200 & \textbf{0.3315} &\textbf{0.2105} &0.6895 & 0.8562 \\
        & BQR-BL200 &0.3571  & 0.2146  & 0.9053 & 0.9340 \\
        & BQR-EN200 &0.3512 & 0.2174 &0.6890 & 0.8659 \\
        \hline
    \end{tabular}
    \caption{Numerical results in Simulation 1.}
    \label{tab:model1}
\end{table}
\FloatBarrier

\begin{table}[H]
    \centering
    \begin{tabular}{|c|c|c|c|c|c|}
    \hline
         & Methods & RMSE & MMAD & AL & CP  \\
         \hline
        \multirow{8}{*}{$\tau=0.25$} 
        & HBQR-BL100 &\textbf{0.3822} &\textbf{ 0.2579} &1.1290 & 0.9046 \\
        & HBQR-EN100 &\textbf{0.4051} &\textbf{0.2767} &1.2135 & 0.9002 \\
        & BQR-BL100 &0.5643 &0.3385 &2.6992 & 0.9506 \\
        & BQR-EN100 &0.4950 &0.3100 &1.6428 &  0.9311 \\

        & HBQR-BL200 &\textbf{0.3418} &\textbf{0.2233} & 0.8011 & 0.8762 \\
        & HBQR-EN200 &\textbf{0.3528} &\textbf{0.2368} & 0.8484 & 0.8765 \\
        & BQR-BL200 &0.4588 &0.2540 &1.7800 & 0.9521 \\
        & BQR-EN200 &0.4221 &0.2436 &1.2130 &  0.9394 \\
\hline
         \multirow{10}{*}{$\tau=0.5$}
         & HBQR-BL100 &\textbf{0.2886} &\textbf{0.2203} &1.175 & 0.9533 \\
        & HBQR-EN100 & 0.2945 &0.2336 &1.2251 & 0.9522 \\
        & HBL100 &\textbf{0.2683} &\textbf{0.2013} &1.604 & 0.9946 \\ 
        & BQR-BL100 &0.2954 &0.2262 &2.333 & 0.9992 \\
        & BQR-EN100 &0.3273 & 0.2340 &1.5357 & 0.9733 \\

        & HBQR-BL200 &0.2060 &0.1591 &0.7990 & 0.9279 \\
        & HBQR-EN200 &0.2130 &0.1679 &0.8332 & 0.9297 \\
        & HBL200 &\textbf{0.1793}  & \textbf{0.1386}&1.0921 & 0.9943 \\ 
        & BQR-BL200 &\textbf{0.1926} &\textbf{0.1511} &1.4813 & 0.9992 \\
        & BQR-EN200 &0.1941 & 0.1522 &1.1176 & 0.9929 \\
\hline
         \multirow{8}{*}{$\tau=0.75$} 
         & HBQR-BL100 &\textbf{0.3791} &\textbf{0.2624} &1.1734 & 0.9066 \\
        & HBQR-EN100 &\textbf{0.3889} &\textbf{0.2822} &1.2615 & 0.9000 \\
        & BQR-BL100 &0.5761 &0.3468 &2.7564 & 0.9525 \\
        & BQR-EN100 &0.4697 &0.3086 &1.8026 & 0.9433 \\
           
        & HBQR-BL200 &\textbf{0.3324}  &\textbf{0.2188} &0.8013 & 0.8792 \\
        & HBQR-EN200 &\textbf{0.3352} &\textbf{0.2198} & 0.8442 & 0.8705 \\
        & BQR-BL200 &0.4530 & 0.2498  &1.7595 & 0.9521 \\
        & BQR-EN200 &0.4087 &0.2416 &1.2458 &  0.9437 \\
        \hline
        
    \end{tabular}
    \caption{Numerical results in Simulation 2.}
    \label{tab:model2}
\end{table}
\FloatBarrier

\begin{table}[H]
    \centering
    \begin{tabular}{|c|c|c|c|c|c|}
    \hline
         & Methods & RMSE & MMAD & AL & CP  \\
         \hline
        \multirow{8}{*}{$\tau=0.25$} 
        & HBQR-BL100 &\textbf{0.5676} &\textbf{0.4038 }&2.3005 &  0.9259 \\
        & HBQR-EN100 &\textbf{0.5805} & \textbf{0.4285} &2.5853 & 0.9268 \\
        & BQR-BL100 &0.6854 &0.4754  &5.0697 & 0.9524 \\
        & BQR-EN100 &0.6147 &0.4293 &2.9063 & 0.9430 \\
   
        & HBQR-BL200 &\textbf{0.5173}  &\textbf{0.3732}  &1.8027 &  0.9094 \\
        & HBQR-EN200 &\textbf{0.5319} & \textbf{0.3825} &2.0427 & 0.9060 \\
        & BQR-BL200 &0.5836  & 0.4033  &3.7568 & 0.9519 \\
        & BQR-EN200 &0.5542 &0.3872 &2.4114 & 0.9422 \\
\hline
         \multirow{10}{*}{$\tau=0.5$} 
         & HBQR-BL100 &0.4949 &0.3576  &2.370 & 0.9719 \\
        & HBQR-EN100 &0.4958 &0.3856 &2.583 & 0.9700 \\
        & HBL100 &\textbf{0.4525} &\textbf{0.3248} &3.199 &  0.9980 \\ 
        & BQR-BL100 &0.5023 &0.3671  &4.4369 & 0.9992 \\
        & BQR-EN100 &\textbf{0.4910} & 0.3566 &2.7858 & 0.9871 \\

        & HBQR-BL200 &0.4317 &0.3197 &1.8692 &  0.9611 \\
        & HBQR-EN200 &0.4317 &0.3178 &2.0474 &  0.9624 \\
        & HBL200 &\textbf{0.3707}  &\textbf{0.2719}  &2.4534 &  0.9965 \\ 
        & BQR-BL200 &0.4053 & 0.3041 &3.3309 & 0.9992 \\
        & BQR-EN200 &\textbf{0.3975}  &\textbf{0.2995} &2.3221 &  0.9921 \\
\hline
         \multirow{8}{*}{$\tau=0.75$} 
         & HBQR-BL100 &\textbf{0.5563} & \textbf{0.3993}  & 2.3546  & 0.9300 \\
        & HBQR-EN100 &\textbf{0.5911} &\textbf{0.4240} & 2.7935& 0.9321 \\
        & BQR-BL100 &0.6650 &0.4591 &4.9652 & 0.9531 \\
        & BQR-EN100 &0.5984 & 0.4318 &3.3417 & 0.9482 \\

        & HBQR-BL200 & \textbf{0.5241}  &\textbf{0.3856}  & 1.8819 & 0.9114 \\
        & HBQR-EN200 & \textbf{0.5390} &\textbf{0.3926} & 2.1538 & 0.9033 \\
        & BQR-BL200 &0.5786 &0.4008 &3.8405 & 0.9522 \\
        & BQR-EN200 & 0.5455 &0.3940 & 2.6987 & 0.9479 \\
        \hline
    \end{tabular}
    \caption{Numerical results in Simulation 3.}
    \label{tab:model3}
\end{table}
\FloatBarrier

\begin{table}[H]
    \centering
    \begin{tabular}{|c|c|c|c|c|c|}
    \hline
         & Methods & RMSE & MMAD & AL & CP  \\
         \hline
        \multirow{8}{*}{$\tau=0.25$}
        & HBQR-BL100 &\textbf{0.2705} &\textbf{0.1957} &1.0598 & 0.9416 \\
        & HBQR-EN100 &\textbf{0.2803} &\textbf{0.2093} &1.1574 &0.9408  \\
        & BQR-BL100 &0.3335 & 0.2542 &2.5406 & 0.9990 \\
        & BQR-EN100 &0.3300 &0.2362 &1.5557 & 0.9769 \\

        & HBQR-BL200 &0.2265  &\textbf{0.1572}  &0.6799 & 0.9038 \\
        & HBQR-EN200 &0.2287 &\textbf{0.1584 }&0.7199 & 0.9004 \\
        & BQR-BL200 &\textbf{0.2168} &0.1679  &1.5867 & 0.9979 \\
        & BQR-EN200 &\textbf{0.2143 }& 0.1636 &1.0815 & 0.9721 \\
\hline
         \multirow{10}{*}{$\tau=0.5$}
         & HBQR-BL100 & \textbf{0.4965 }&\textbf{0.2950} & 1.2265 & 0.9193 \\
        & HBQR-EN100 &\textbf{0.5048} &\textbf{0.3049} &1.2383 & 0.9213 \\
        & HBL100 & 0.5434 & 0.3111& 1.6510 &  0.9450 \\ 
        & BQR-BL100 & 0.5609 & 0.3312 & 2.2801 & 0.9503 \\
        & BQR-EN100 & 0.5184 & 0.3134 & 1.6965 &  0.9430 \\
   
        & HBQR-BL200 &\textbf{ 0.4407} & \textbf{0.2385 }&0.8315 & 0.9029 \\
        & HBQR-EN200 &\textbf{0.4466} &\textbf{0.2314} &0.8457 & 0.9006 \\
        & HBL200 &0.4927  &0.2461 & 1.1366& 0.9446 \\ 
        & BQR-BL200 &0.5061  &0.2571  &1.4968 & 0.9506 \\
        & BQR-EN200 &0.4835 & 0.2515 &1.1630 & 0.9444 \\
\hline
         \multirow{8}{*}{$\tau=0.75$}
         & HBQR-BL100 &\textbf{0.8388} & \textbf{0.4236} &1.4484 & 0.9070 \\
        & HBQR-EN100 & \textbf{0.8417} & \textbf{0.4460} &1.5382 & 0.9005  \\
        & BQR-BL100 &1.1330 &0.5333 &2.7729 & 0.9492 \\
        & BQR-EN100 &1.0732 &0.5514 &2.1302 & 0.9281 \\
   
        & HBQR-BL200 &\textbf{0.7868} &\textbf{0.3716} &1.0323 & 0.8829 \\
        & HBQR-EN200 &\textbf{0.7940} &\textbf{0.3874} &1.0884 & 0.8676 \\
        & BQR-BL200 &1.0651  &0.4624 &1.9991 & 0.9462 \\
        & BQR-EN200 &1.0206 &0.4729 &1.5216 &  0.9149 \\
        \hline
    \end{tabular}
    \caption{Numerical results in Simulation 4.}
    \label{tab:model4}
\end{table}
\FloatBarrier

\begin{table}[H]
    \centering
    \begin{tabular}{|c|c|c|c|c|c|}
    \hline
         & Methods & RMSE & MMAD & AL & CP  \\
         \hline
        \multirow{8}{*}{$\tau=0.25$} 
        & HBQR-BL100 & \textbf{0.4465} & \textbf{0.3012}  & 1.4890 & 0.9169 \\
        & HBQR-EN100 & \textbf{0.4460} & \textbf{0.3193} & 1.5992 & 0.9189 \\
        & BQR-BL100 & 0.8163 &0.4688 & 4.3100 & 0.9522  \\
        & BQR-EN100 & 0.7197 &0.3866 & 1.2336 & 0.8401 \\

        & HBQR-BL200 &\textbf{0.3741}  & \textbf{0.2456} &1.0247 & 0.9108 \\
        & HBQR-EN200 &\textbf{0.3926} &\textbf{0.2647 }&1.1039 & 0.9035 \\
        & BQR-BL200 &0.7137  &0.3841 &3.1585  & 0.9521\\
        & BQR-EN200 &0.6376  &0.3508 & 1.5062 & 0.8983 \\
\hline
         \multirow{10}{*}{$\tau=0.5$} 
         & HBQR-BL100 & \textbf{0.3292} &\textbf{0.2257} & 1.4712 & 0.9668 \\
        & HBQR-EN100 &\textbf{ 0.2469} & \textbf{0.2002}  & 1.0604 & 0.9616 \\
        & HBL100 & 0.4151 &0.2771 & 2.3541 & 0.9909 \\ 
        & BQR-BL100 & 0.4577 &0.3172  & 3.6592 & 0.9963 \\
        & BQR-EN100 & 0.6727 & 0.3435 & 0.7418 &  0.8184 \\

        & HBQR-BL200 &\textbf{0.2320} &\textbf{0.1780} &0.9866 & 0.9594 \\
        & HBQR-EN200 &\textbf{0.2495} &\textbf{0.1861} &1.0505 & 0.9570 \\
        & HBL200 &0.2861 &0.1996 &1.7512 & 0.9951 \\ 
        & BQR-BL200 &0.3155  &0.2263  &2.6611 & 0.9990 \\
        & BQR-EN200 &0.4242 &0.2551 &1.2599 & 0.9279 \\
\hline
         \multirow{8}{*}{$\tau=0.75$} 
         & HBQR-BL100 & \textbf{0.4315} &\textbf{0.2998} & 1.5445 & 0.9321 \\
        & HBQR-EN100 & \textbf{0.4356} & \textbf{0.3011} & 1.7046 & 0.9303 \\
        & BQR-BL100 & 0.7818 &0.4538 & 4.2976 & 0.9649 \\
        & BQR-EN100 & 0.6703 & 0.3859 & 1.7851 & 0.8868 \\

        & HBQR-BL200 & \textbf{0.3732}  &\textbf{0.2442} &1.0471 & 0.9089 \\
        & HBQR-EN200 & \textbf{0.3913} & \textbf{0.2607} &1.1202 & 0.9008 \\
        & BQR-BL200 & 0.7062  &0.3789  &3.2171 & 0.9554 \\
        & BQR-EN200 &0.5920 & 0.3386 &1.8842 & 0.9287 \\
        \hline
    \end{tabular}
    \caption{Numerical results in Simulation 5.}
    \label{tab:model5}
\end{table}
\FloatBarrier

\begin{table}[H]
    \centering
    \begin{tabular}{|c|c|c|c|c|c|}
    \hline
         & Methods & RMSE & MMAD & AL & CP  \\
         \hline
        \multirow{8}{*}{$\tau=0.25$} 
        & HBQR-BL100 & \textbf{0.2669} & \textbf{0.2034} & 1.0971 & 0.9503 \\
        & HBQR-EN100 & \textbf{0.2734} & \textbf{0.2141} &1.1773 & 0.9546 \\
        & BQR-BL100 & 0.3397 & 0.2591 & 2.2468& 0.9960 \\
        & BQR-EN100 & 0.3285 & 0.2438 & 1.4439 & 0.9639 \\

        & HBQR-BL200 & \textbf{0.1931}  & \textbf{0.1454} &0.6920 & 0.9203 \\
        & HBQR-EN200 & \textbf{0.1978} & \textbf{0.1466} &0.7353 & 0.9235 \\
        & BQR-BL200 &0.2025 & 0.1595 &1.4260 & 0.9984 \\
        & BQR-EN200 &0.2002 &0.1557 &0.9916 & 0.9814 \\
\hline
         \multirow{10}{*}{$\tau=0.5$} 
         & HBQR-BL100 & \textbf{0.2550} & \textbf{0.1937} & 1.0816 &  0.9516 \\
        & HBQR-EN100 & \textbf{0.2634} & \textbf{0.2050} &1.1773 & 0.9546 \\
        & HBL100 & 0.4862 &0.2947  & 1.5152 & 0.9384 \\ 
        & BQR-BL100 &0.3228 & 0.2480 & 2.1960 & 0.9970 \\
        & BQR-EN100 &0.3200 & 0.2370 & 1.4295 & 0.9698 \\
          
        & HBQR-BL200 & \textbf{0.4094} & \textbf{0.2328} & 0.8321 &  0.8971 \\
        & HBQR-EN200 & \textbf{0.4147} & \textbf{0.2355} & 0.8701 & 0.8959 \\
        & HBL200 & 0.4450 &  0.2384 & 1.0562 & 0.9379 \\ 
        & BQR-BL200 & 0.4584  & 0.2502 &1.3950 & 0.9498 \\
        & BQR-EN200 & 0.4392 & 0.2452  & 1.0774 &  0.9363 \\
\hline
         \multirow{8}{*}{$\tau=0.75$} 
         & HBQR-BL100 & \textbf{0.8115} & \textbf{0.4248}  & 1.4426 & 0.8998 \\
        & HBQR-EN100 & \textbf{0.8229} & \textbf{0.4399} & 1.5531 & 0.8946 \\
        & BQR-BL100 & 1.0427 &0.5067 & 2.5333 & 0.94682 \\
        & BQR-EN100 & 0.9903  &0.5243 & 1.8811& 0.9092 \\
   
        & HBQR-BL200 & \textbf{0.7661} & \textbf{0.3735} & 1.0210 & 0.8689 \\
        & HBQR-EN200 & \textbf{0.7734} & \textbf{0.3836} &  1.0743 & 0.8630 \\
        & BQR-BL200 &0.9763 & 0.4381 &1.8219 & 0.9437 \\
        & BQR-EN200 & 0.9407 & 0.4492 & 1.3898 &  0.9019 \\
        \hline
    \end{tabular}
    \caption{Numerical results in Simulation 6.}
    \label{tab:model6}
\end{table}
\FloatBarrier

\section{Real Data Analysis}\label{sec:realdatanalysis}

The robustness and efficiency of the Bayesian Huberised regularised quantile regression models are demonstrated via the analysis of two benchmarking datasets: Crime data and Top Gear data.  They have large outliers. For a better interpretation of the parameters and to put the regressors on the common scale, we standardised all the numerical predictors and response variables to have mean 0 and variance 1. Like in simulation studies, we also consider all the five methods of which we generated 10,000 posterior samples after discarding discarding the first 5,000 posterior samples as a burn-in. Then we report posterior medians of regression coefficients and their $95\%$ credible intervals. For brevity, we drop the names of predictors of the datasets and keep the corresponding number to indicate each predictor.  For BQR-BL, BQR-EN, HBQR-BL and HBQR-EN, we set the quantile levels as $\tau\in\{0.1,0.5,0.9\}$ for the Crime and Top Gear datasets. 

Since datasets may contain outliers, we adopt the following four criteria as measures of predictive accuracy; mean squared prediction error (MSPE), mean absolute prediction error (MAPE), mean Huber prediction error (MHPE) for $\delta=1.345$ and median of squared prediction error (MedSPE) via 10-fold cross validation. They are defined by $\text{MSPE}=10^{-1}\sum^{10}_{j=1} (\by_j - \bX_j^T\hat{\bbeta}^{(-j)})^2$, $\text{MAPE}=10^{-1} \sum^{10}_{j=1} \abs{\by_j - \bX_j^T\hat{\bbeta}^{(-j)}}$, $\text{MHPE}=10^{-1} \sum^{10}_{j=1} L^{Huber}_{\delta}(\by_j - \bX_j^T\hat{\bbeta}^{(-j)})$ and $\text{MedSPE}=\text{median}_{1\leq j\leq 10} (\by_j - \bX_j^T\hat{\bbeta}^{(-j)})^2$, where $L^{Huber}_{\delta}(\cdot)$ is defined by (\ref{eq:huber}),  $\hat{\bbeta}^{(-j)}$ is the posterior median based on dataset except for $j$th validation set, and $\by_j$ and $\bX_j$ are the response variables and covariate matrix based on the $j$th validation set, respectively. 

\begin{table}[H]
    \centering
    \begin{tabular}{|c|c|c|c|c|c|}
         \hline
         & Methods & MSPE & MAPE & MedSPE & MHPE \\
         \hline
        \multirow{4}{*}{$\tau=0.1$} 
        & HBQR-BL & 0.0226 & 0.1223 & \textbf{0.0044} & 0.0113 \\
        & HBQR-EN & \textbf{0.0156} & \textbf{0.1034} & \textbf{0.0071} & \textbf{0.0078}  \\
        & BQR-BL & \textbf{0.0157} & \textbf{0.1054} & 0.0137 & \textbf{0.0078} \\
        & BQR-EN &  0.0169 & 0.1285 & 0.0150 & 0.0085  \\
\hline
        \multirow{5}{*}{$\tau=0.5$} 
        & HBQR-BL & \textbf{0.0123} & \textbf{0.0946} & \textbf{0.0067} & \textbf{0.0061}  \\
        & HBQR-EN & \textbf{0.0121} & \textbf{0.0938} & \textbf{0.0073} & \textbf{0.0061}  \\
        & HBL     & 0.0304 & 0.1534 & 0.0173 & 0.0152  \\
        & BQR-BL & 0.0192 & 0.1008 & 0.0081 & 0.0096 \\
        & BQR-EN & 0.0250 & 0.1474 & 0.0185 & 0.0125 \\
\hline
        \multirow{4}{*}{$\tau=0.9$} 
        & HBQR-BL & \textbf{0.0400} & \textbf{0.1439} & 0\textbf{.0077} & \textbf{0.0200} \\
        & HBQR-EN & \textbf{0.0278} & \textbf{0.1346} & \textbf{0.0079} & \textbf{0.0139} \\
        & BQR-BL & 0.0453 & 0.1582 & 0.0106 & 0.0226  \\
        & BQR-EN & 0.0401 & 0.1629 & 0.0146 & 0.0200 \\
\hline
        
    \end{tabular}
    \caption{Mean squared prediction error (MSPE), mean absolute prediction error (MAPE), mean Huber prediction error (MHPE) for $\delta = 1.345$ and median of squared prediction error (MedSPE) for Crime data, computed from 10-fold cross-validation.}
    \label{tab:crime}
\end{table}
\FloatBarrier

\begin{figure}[H]
    \centering
    \includegraphics[width=0.9\textwidth]{CrimeCI5.pdf}
    \caption{Posterior medians and 95\% credible intervals of the regression coefficients at $\tau=0.5$ in the Bayesian quantile regression with Bayesian lasso (BQR-BL), Bayesian quantile regression with elastic net (BQR-EN), the Huberized Bayesian lasso (HBL) and the proposed Bayesian quantile regression with Bayesian lasso (HBQR-BL) and elastic net (HBQR-EN), applied to the Crime data.}
    \label{fig:crimeCI5}
\end{figure}
\FloatBarrier

\begin{table}[H]
    \centering
    \begin{tabular}{|c|c|c|c|c|c|}
        \hline
         & Methods & MSPE & MAPE & MedSPE & MHPE \\
         \hline
        \multirow{4}{*}{$\tau=0.1$} 
        & HBQR-BL & \textbf{0.0288} & \textbf{0.1500} & \textbf{0.0196} & \textbf{0.0144}\\
        & HBQR-EN & \textbf{0.0296} & \textbf{0.1505} & \textbf{0.0229} & \textbf{0.0148}\\
        & BQR-BL & 0.0360 & 0.1736 &  0.0331 & 0.0180\\
        & BQR-EN &  0.0331 & 0.1605 &  0.0267 & 0.0166\\
\hline
        \multirow{5}{*}{$\tau=0.5$} 
        & HBQR-BL &0.0127  & 0.0942 & 0.0064  & 0.0064\\
        & HBQR-EN &  0.0120 & 0.0905 & 0.0070 &0.0060 \\
        & HBL     & \textbf{0.0110} &  \textbf{0.0863} & \textbf{0.0055} & \textbf{0.0055}\\
        & BQR-BL & 0.0183 & 0.1102 &  0.0101 & 0.0092\\
        & BQR-EN &  \textbf{0.0110}  & \textbf{0.0864} &  \textbf{0.0063} &  \textbf{0.0055}\\
\hline
        \multirow{4}{*}{$\tau=0.9$} 
        & HBQR-BL & \textbf{0.0643} & \textbf{0.2309} & \textbf{0.0410} & \textbf{0.0322} \\
        & HBQR-EN &  \textbf{0.0843} & \textbf{0.2662} & \textbf{0.0628} & \textbf{0.0421} \\
        & BQR-BL & 0.6942 & 0.7652 & 0.7337 & 0.3471 \\
        & BQR-EN &  0.2290 & 0.4461 & 0.1790 & 0.1145 \\
\hline
        
    \end{tabular}
    \caption{Mean squared prediction error (MSPE), mean absolute prediction error (MAPE), mean Huber prediction error (MHPE) for $\delta = 1.345$ and median of squared prediction error (MedSPE) for Top Gear data, computed from 10-fold cross-validation.}
    \label{tab:topgear}
\end{table}
\FloatBarrier

\begin{figure}[H]
    \centering
    \includegraphics[width=0.9\textwidth]{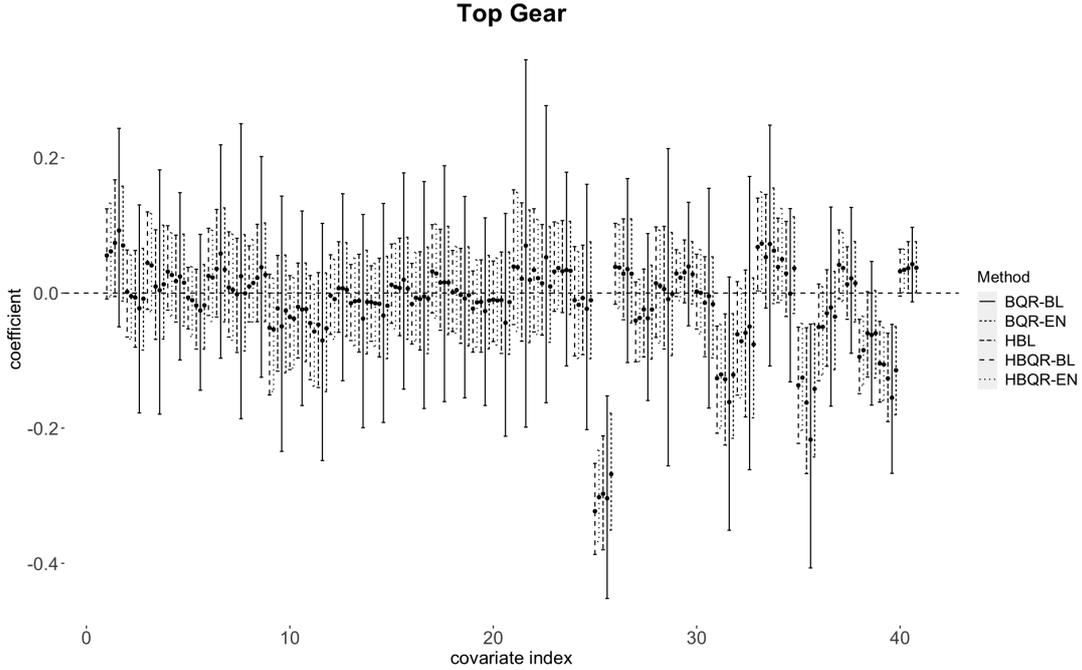}
    \caption{Posterior medians and 95\% credible intervals of the regression coefficients at $\tau=0.5$ in the Bayesian quantile regression with Bayesian lasso (BQR-BL), Bayesian quantile regression with elastic net (BQR-EN), the Huberized Bayesian lasso (HBL) and the proposed Bayesian quantile regression with Bayesian lasso (HBQR-BL) and elastic net (HBQR-EN), applied to the Top Gear data.}
    \label{fig:topgearCI5}
\end{figure}
\FloatBarrier

\subsection{Crime Dataset}

The data are collected from Statistical Abstract of the United States for the 50 states and the District of Columbia (\cite{statisticalUSA}). This data were analysed in the book of Statistical Methods for the Social Sciences (\cite{CrimeBook1997}). The predictors are the number of murders per 100,000 people in the population, the percentage of the population living in metropolitan areas, the percentage of the population who are white, the percentage of the population who are high school graduates or higher, the percentage of families living below the poverty level, and the percentage of families headed by a single parent (male householders with no wife present and with own children, or female householders with no husband present and with own children). The response of interest is the number of murders, forcible rapes, robberies, and aggravated assaults per 100,000 people in the population. In total, we have 51 observations and included squared variables, which results in 12 predictors in our models.

The posterior medians and $95\%$ credible intervals of the regression coefficients based on the five methods are reported in Figure \ref{fig:crimeCI5}. From the figure, all the methods  behave similarly and the estimation are very close.  The BQR-BL method produces relatively largest credible intervals, which suggests that this method may be unstable  in producing estimates. The similar performances can also be found in $\tau=0.1$ and $\tau=0.9$ (see Appendix C). Table \ref{tab:crime} also presents the predictive performance of the five methods for $\tau=0.5$ and the four Bayesian quantile regression based methods for $\tau\in\{0.1,0.9\}$. The proposed methods perform better than the existing robust methods in both median and upper quantile levels. The HBL method produces relatively large error measures in the median case among the rest of methods. Looking at the lower quantile level ($\tau=0.1$), MSPE, MAPE and MHPE suggest that HBQR-EN and BQR-BL perform better while MedSPE suggests that both proposed methods perform better. In this case, they are very comparable.

\subsection{Top Gear Dataset}

The data uses information on cars featuring on the website of the popular BBC television show Top Gear. It is available in the R package 'robustHD' (\cite{RobustHD}) and contains 242 observations on 29 numerical and categorical variables after removing the missing values. A description of the variables is provided in Table 3 of the paper (\cite{Alfons2016}). The response of interest is MPG (fuel consumption) and the remaining variables are predictors.  For categorical variables, there are 4 binary variables and 12 variables with three levels. These 12 variables are assigned two dummy variables each. The resulting design matrix consists of 12 numerical variables, 4 individual dummy variables, and 12 groups of two dummy variables each, giving a total of 40 predictors. 

The posterior medians and $95\%$ credible intervals of the regression coefficients based on the five methods are reported in Figure \ref{fig:topgearCI5}. From the figure, all the methods are comparable. Like for the Crime dataset, the BQR-BL method produces relatively largest credible intervals. The similar performances can also be found in $\tau=0.1$ and $\tau=0.9$ (see Appendix C).  Table \ref{tab:topgear} also presents the predictive performance of the five methods for $\tau=0.5$ and the four Bayesian quantile regression based methods for $\tau\in\{0.1,0.9\}$. All the methods are comparable in the median case ($\tau=0.5$) where both HBL and BQR-EN have the lowest error measures. Looking at the extreme quantile levels ($\tau=0.1,0.9$), the proposed methods significantly outperform the BQR-BL and BQR-EN methods especially at the upper quantile level where the existing robust methods perform slightly worse than the proposed methods. Furthermore, BQR-BL has the highest error measures in all cases.

\section{Conclusion}\label{sec:conclusion}

In this paper, we have presented the Bayesian Huberised regularisation. We proposed the Huberised asymmetric loss function and its corresponding probability density function that leads to a scale mixture of normal distribution with exponential and generalised inverse Gaussian mixing densities. This results in fully Bayesian hierarchical models for quantile regression and its Gibbs sampling algorithm with approximate Gibbs sampler for the data-dependent estimation of the robustness parameter. We have proved theoretically that the proposed Bayesian models yield a good posterior propriety and unimodality in their joint posterior density with conditional prior for the regression coefficients. Simulation studies and real data examples show that the proposed methods are effective in predictive accuracy and their robustness is evident under a wide range of scenarios. In many situations, the proposed methods outperform the existing Bayesian regularised quantile regression methods especially at the extreme quantile levels. Our proposed methods have proven to be robust in obtaining valuable results. 

\section{Acknowledgments}
This work is supported by the UK Engineering and Physical Sciences Research Council (EPSRC) grant 2295266 for the Brunel University London for Doctoral Training.

\appendix

\numberwithin{equation}{section}
\setcounter{equation}{0}

\section{Proofs}

\subsection{Proposition \ref{prop:tau}}\label{sec:appendixA}

We set $\mu=0$ and we wish to calculate $P(X\leq 0)$, that is,

\begin{align*}
    P(X\leq 0) &= \int^0_{-\infty} f_X(x) dx\\
    &= \frac{\eta\tau(1-\tau)e^{\eta}}{2\rho^2 (\eta+1)} \int^0_{-\infty} \exp\left\{ -\sqrt{\eta\left( \eta - \frac{x(1-\tau)}{\rho^2}\right)}\right\} dx\\
    &= \frac{\eta\tau(1-\tau)e^{\eta}}{2\rho^2 (\eta+1)} \int_0^{\infty} \exp\left\{ -\sqrt{\eta\left( \eta + \frac{x(1-\tau)}{\rho^2}\right)}\right\} dx\,.
\end{align*}
By letting $u=\sqrt{\eta\left( \eta + \frac{x(1-\tau)}{\rho^2}\right)}$, we have 
\begin{align*}
   P(X\leq 0) &= \frac{\eta\tau(1-\tau)e^{\eta}}{2\rho^2 (\eta+1)} \int_\eta^{\infty} e^{-u} \times \frac{2u\rho^2}{\eta(1-\tau)}du\\
   &=  \frac{\tau e^{\eta}}{ (\eta+1)} \int_\eta^{\infty} u e^{-u} du\\
   &= \frac{\tau e^{\eta}}{ (\eta+1)}  \left( \left[-ue^{-u}\right]^{\infty}_\eta + \int^\infty_\eta e^{-u} du \right)\\
   &= \frac{\tau e^{\eta}}{ (\eta+1)}  \left( \eta e^{-\eta} + \left[-e^{-u}\right]^{\infty}_\eta \right)\\
   &= \frac{\tau e^{\eta}}{ (\eta+1)}  \left( e^{-\eta}(\eta +1)  \right)\\
   &= \tau\,.
\end{align*}
On the other hand, it follows that $P(X>0)=1-\tau$. This completes the proof.

\subsection{Theorem \ref{prop:scalemixture}}\label{sec:appendixA2}

Let a, b be some real constants. By using the equality 
\begin{align}\label{eq:equality}
     \exp(-|ab|) = \int^\infty_0 \frac{a}{\sqrt{2\pi \sigma}} \exp\left\{ -\frac{1}{2} (a^2\sigma+b^2\sigma^{-1}) \right\} d\sigma\,,
\end{align}
(\cite{AndrewsEtAl1974}) and let $a=\sqrt{\frac{\eta}{\rhosq}}$ and $b=\sqrt{\eta+ \frac{\epsilon_i}{\rhosq} \left(\tau-I(\epsilon_i<0)\right)}$,
$f(\epsilon_i)$ can be expressed as a scale mixture of asymmetric Laplace (AL) and generalised inverse Gaussian (GIG) densities:
\begin{align*}
    &\frac{\eta\tau(1-\tau)e^{\eta}}{2\rhosq(\eta+1)} \exp\left\{ -  \sqrt{\eta\left(\eta+ \frac{\epsilon_i}{\rhosq} \left(\tau-I(\epsilon_i<0)\right)\right)}  \right\} \\
    &\quad\quad\propto \int^\infty_0 \mathcal{ALD} \left( \epsilon_i ; 0, 2\sigma_i,\tau  \right) GIG\left( \sigma_i ; 1, \sqrt{\frac{\eta}{\rhosq}}, \sqrt{\eta\rhosq}  \right) d\sigma_i\,,
\end{align*}
where $GIG(v|1,c,d)$ denotes the GIG distribution and its density is given by (7) and ALD.

The ALD can be expressed as a scale mixture of normal and exponential densities using the equality (Equation (\ref{eq:equality})) by letting $a=\frac{1}{\sqrt{4\sigma_i}}$, $b=\frac{\epsilon_i}{\sqrt{4\sigma_i}}$ and multiplying a factor of $\exp\left\{-\frac{(2\tau-1)\epsilon}{4\sigma_i} \right\}$ (\cite{KozumiKobayashi2011}). Therefore, $f(\epsilon_i)$ is  expressed as a normal scale mixture of exponential and generalised inverse Gaussian (GIG) densities:

\begin{align*}
    &\frac{\eta\tau(1-\tau)e^{\eta}}{2\rhosq(\eta+1)} \exp\left\{ -  \sqrt{\eta\left(\eta+ \frac{\epsilon_i}{\rhosq} \left(\tau-I(\epsilon_i<0)\right)\right)}  \right\} \\
    &\quad\quad\propto \int^\infty_0 \int^\infty_0 N(\epsilon_i|(1-2\tau)v_i,4\sigma_iv_i) Exp\left(v_i;\frac{\tau(1-\tau)}{2\sigma_i}\right) GIG\left( \sigma_i ; 1, \sqrt{\frac{\eta}{\rhosq}}, \sqrt{\eta\rhosq}  \right) d\sigma_i dv_i\,,
\end{align*}
where $N(\cdot)$ and $Exp(\cdot)$ are the normal and ezponential densities, respectively. 

\subsection{Proposition \ref{prop:posterior-properity_lasso}}\label{app:posterior-properity_lasso}

The overall posterior distribution is given by
\begin{align*}
    \pi &(\bbeta,\rhosq,\bv,\bsigma,\bs|\by ) \\
    &=\frac{\pi(\by|\bX,\bbeta,\bv,\bsigma )\pi(\bbeta|\bs,\rhosq)\pi(\bv|\bsigma)\pi(\sigma|\rhosq)\pi(\rhosq)\pi(\bs) }
    {\iiiiint \pi(\by|\bX,\bbeta,\bv,\bsigma )\pi(\bbeta|\bs,\rhosq)\pi(\bv|\bsigma)\pi(\sigma|\rhosq)\pi(\rhosq)\pi(\bs) d\bbeta d\bv d\bsigma d\bs d\rhosq}\,.    
\end{align*}

We show that the normalising constant of the posterior distribution is finite, that is, 
\begin{align*}
    \iiiiint \pi(\by|\bX,\bbeta,\bv,\bsigma )\pi(\bbeta|\bs,\rhosq)\pi(\bv|\bsigma)\pi(\sigma|\rhosq)\pi(\rhosq)\pi(\bs) d\bbeta d\bv d\bsigma d\bs d\rhosq <\infty\,.
\end{align*}

First, we consider the integral with respect to $\bbeta$. We have 
\begin{align*}
    \int \pi(\by&|\bX,\bbeta,\bv,\bsigma )\pi(\bbeta|\bs,\rhosq) d\bbeta  \\
    &= \int  \prod^n_{i=1} \frac{1}{\sqrt{8\pi\sigma_iv_i}} \exp\left\{ - \frac{(y_i-\bx_i\bbeta-(1-2\tau) v_i)^2}{8\sigma_i v_i} \right\}\\
    &\quad\quad\times \prod^k_{j=1} \frac{1}{\sqrt{2\pi\rhosq s_j}} \exp\left\{ -\frac{\beta^2_j}{2\rhosq s_j} \right\} d\bbeta \\
    &= \int (8\pi)^{-n/2}(2\pi)^{-k/2}{(\rhosq)}^{-k/2} \left(\prod^n_{i=1}\sigma_i \right)^{-1/2} \left(\prod^n_{i=1}v_i \right)^{-1/2} \left(\prod^k_{j=1}s_j \right)^{-1/2} \\
    &\quad\quad\times \exp\left\{-\frac{1}{2} (\by-\bX\bbeta-(1-2\tau)\bv)^T\bV^{-1}(\by-\bX\bbeta-(1-2\tau)\bv) \right\}\\
    &\quad\quad\times \exp\left\{ -\frac{1}{2\rhosq} \bbeta^T\bLambda^{-1}\bbeta \right\} d\bbeta\,,
\end{align*}
where $\bV=\diag(4\sigma_1 v_1,\ldots,4\sigma_n v_n)$ and $\bLambda=\diag(s_1,\ldots,s_k)$. In particular, we have 
\begin{align}
    \int \exp&\left\{-\frac{1}{2} (\by-\bX\bbeta-(1-2\tau)\bv)^T\bV^{-1}(\by-\bX\bbeta-(1-2\tau)\bv) \right\} \times \exp\left\{ -\frac{1}{2\rhosq} \bbeta^T\bLambda^{-1}\bbeta \right\} d\bbeta \nonumber \\
    &= \exp\left\{ -\frac{1}{2} (\by-(1-2\tau)\bv)^T\bV^{-1}(\by-(1-2\tau)\bv) \right\} \nonumber \\
    &\quad\quad\times \int \exp\left\{ -\frac{1}{2}\left(\bbeta^T\left(\bX^T\bV^{-1}\bX+\frac{1}{\rhosq}\bLambda^{-1} \right) \bbeta - 2\bbeta^T\bX^T\bV^{-1}(\by-(1-2\tau)\bv) \right) \right\}  d\bbeta \nonumber\\
    &=\exp\left\{ -\frac{1}{2} (\by-(1-2\tau)\bv)^T\bV^{-1}(\by-(1-2\tau)\bv) \right\} \nonumber \\
    &\quad\quad\times(2\pi)^{k/2} \left| \left(\bX^T\bV^{-1}\bX+\frac{1}{\rhosq}\bLambda^{-1} \right)^{-1} \right|^{1/2} \nonumber\\
    &=\exp\left\{ -\frac{1}{2} (\by-(1-2\tau)\bv)^T\bV^{-1}(\by-(1-2\tau)\bv) \right\} \nonumber \\
    &\quad\quad\times (2\pi)^{k/2} \left| \frac{1}{\rhosq}\bLambda^{-1}  \right|^{-1/2} \left| \bV \right|^{1/2} \left| \bV + \rhosq\bX \bLambda\bX^T \right|^{-1/2} \label{eq:matrix}\\
    &=\exp\left\{ -\frac{1}{2} (\by-(1-2\tau)\bv)^T\bV^{-1}(\by-(1-2\tau)\bv) \right\} \nonumber \\
    &\quad\quad\times (2\pi)^{k/2} 2^n {(\rhosq)}^{k/2} \left(\prod^k_{j=1} s_j \right)^{1/2} \left( \prod^n_{i=1} \sigma_i \right)^{1/2} \left( \prod^n_{i=1} v_i \right)^{1/2} \left| \bV + \rhosq\bX \bLambda\bX^T \right|^{-1/2}\,. \nonumber
\end{align}
The expression in (\ref{eq:matrix}) is due to the identity of $|I+AB|=|I+BA|$ (\cite{Henderson1981}).

Hence, we have 
\begin{align*}
    \int \pi(\by&|\bX,\bbeta,\bv,\bsigma )\pi(\bbeta|\bs,\rhosq) d\bbeta  \\
    &= (2\pi)^{-n/2}   \exp\left\{ -\frac{1}{2} (\by-(1-2\tau)\bv)^T\bV^{-1}(\by-(1-2\tau)\bv) \right\} \\
    &\quad\quad\times \left| \bV + \rhosq\bX \bLambda\bX^T \right|^{-1/2}\,.
\end{align*}

Next, we have 
\begin{align*}
    \iiiiint& \pi(\by|\bX,\bbeta,\bv,\bsigma )\pi(\bbeta|\bs,\rhosq)\pi(\bv|\bsigma)\pi(\sigma|\rhosq)\pi(\rhosq)\pi(\bs) d\bbeta d\bv d\bsigma d\bs d\rhosq \\
    &= \iiiint (2\pi)^{-n/2}   \exp\left\{ -\frac{1}{2} (\by-(1-2\tau)\bv)^T\bV^{-1}(\by-(1-2\tau)\bv) \right\} \\
    &\quad\quad\times \left| \bV + \rhosq\bX \bLambda\bX^T \right|^{-1/2}
     \prod^n_{i=1} \frac{\tau(1-\tau)}{2\sigma_i} \exp\left\{ -\frac{\tau(1-\tau)v_i}{2\sigma_i} \right\}
    \pi(\sigma|\rhosq)\pi(\rhosq)\pi(\bs)  d\bv d\bsigma d\bs d\rhosq  \\
    &\leq  \iiiint (2\pi)^{-n/2}   \exp\left\{ -\frac{1}{2} (\by-(1-2\tau)\bv)^T\bV^{-1}(\by-(1-2\tau)\bv) \right\} \left| \bV \right|^{-1/2}\\
    &\quad\quad\times 
     \prod^n_{i=1} \frac{\tau(1-\tau)}{2\sigma_i} \exp\left\{ -\frac{\tau(1-\tau)v_i}{2\sigma_i} \right\}
    \pi(\sigma|\rhosq)\pi(\rhosq)\pi(\bs)  d\bv d\bsigma d\bs d\rhosq\,,
\end{align*}
by using the fact that $|A+B|\geq |A|$ implies $|A+B|^{-1/2}\leq |A|^{-1/2}$ for a positive definite matrix $A$ and a semi-positive definite matrix $B$. 

Next, we consider the integral with respect to $\bv$. First, we have 
\begin{align*}
    \int &\left| \bV \right|^{-1/2}  \exp\left\{ -\frac{1}{2} (\by-(1-2\tau)\bv)^T\bV^{-1}(\by-(1-2\tau)\bv) \right\} \\
    &\quad\quad\times   \prod^n_{i=1} \frac{\tau(1-\tau)}{2\sigma_i} \exp\left\{ -\frac{\tau(1-\tau)v_i}{2\sigma_i} \right\} d\bv \\
    &= \int \left(\frac{\tau(1-\tau)}{2}\right)^n 2^{-n}  \left( \prod^n_{i=1} \sigma_i \right)^{-3/2} \left( \prod^n_{i=1} v_i \right)^{-1/2} \\
    &\quad\quad \times \prod^n_{i=1} \exp\left\{ - \frac{(y_i-(1-2\tau) v_i)^2}{8\sigma_i v_i} -\frac{\tau(1-\tau)v_i}{2\sigma_i} \right\} d\bv\\
    &=  \left(\frac{\tau(1-\tau)}{4}\right)^n  \left( \prod^n_{i=1} \sigma_i \right)^{-3/2}  \\
    &\quad\quad \times \int \prod^n_{i=1} v^{-1/2}_i \exp\left\{ - \frac{(y_i-(1-2\tau) v_i)^2}{8\sigma_i v_i} -\frac{\left(1-(1-2\tau)^2\right)v_i}{8\sigma_i} \right\} d\bv\\
    &= \left(\frac{\tau(1-\tau)}{4}\right)^n  \left( \prod^n_{i=1} \sigma_i \right)^{-3/2} \\
    &\quad\quad \times \int \prod^n_{i=1} v^{-1/2}_i \exp\left\{ -\frac{y^2_i}{8\sigma_iv_i} -\frac{(1-2\tau)y_i}{4\sigma_i}  -\frac{v_i}{8\sigma_i} \right\} d\bv\\
    &= \left(\frac{\tau(1-\tau)}{4}\right)^n  \left( \prod^n_{i=1} \sigma_i \right)^{-3/2} \\
    &\quad\quad \times \int \prod^n_{i=1} v^{-1/2}_i \exp\left\{ -\frac{1}{2} \left( \frac{v_i}{4\sigma_i} + \frac{y^2_i}{4\sigma_i v_i} \right)  \right\} \exp\left\{ -\frac{(1-2\tau)y_i}{4\sigma_i} \right\} d\bv\,.
\end{align*}

Letting $a^2=\frac{1}{4\sigma_i}$ and $b^2=\frac{y^2_i}{4\sigma_i}$ and using the equality (Equatiom (\ref{eq:equality})), we have 
\begin{align*}
     \int &\left| \bV \right|^{-1/2}  \exp\left\{ -\frac{1}{2} (\by-(1-2\tau)\bv)^T\bV^{-1}(\by-(1-2\tau)\bv) \right\} \\
    &\quad\quad\times   \prod^n_{i=1} \frac{\tau(1-\tau)}{2\sigma_i} \exp\left\{ -\frac{\tau(1-\tau)v_i}{2\sigma_i} \right\} d\bv \\
    &= \left(\frac{\tau(1-\tau)}{4}\right)^n  \left( \prod^n_{i=1} \sigma_i \right)^{-3/2} \exp\left\{ -\frac{(1-2\tau)y_i}{4\sigma_i} \right\} \\
    &\quad\quad \times \int \prod^n_{i=1} v^{-1/2}_i \exp\left\{ -\frac{1}{2} \left( a^2v_i + b^2v^{-1}_i \right)  \right\} d\bv \\
    &= \left(\frac{\tau(1-\tau)}{4}\right)^n  \left( \prod^n_{i=1} \sigma_i \right)^{-3/2} \exp\left\{ -\frac{(1-2\tau)y_i}{4\sigma_i} \right\} \\
    &\quad\quad \times \prod^n_{i=1} (2\pi)^{1/2}(4\sigma_i)^{1/2}  \exp\left\{ -\frac{|y_i|}{4\sigma_i} \right\}\\
    &= (2\pi)^{n/2} \left(\frac{\tau(1-\tau)}{2}\right)^n  \left( \prod^n_{i=1} \sigma_i \right)^{-1} \exp\left\{ -\frac{|y_i|+(1-2\tau)y_i}{4\sigma_i} \right\}\,.
\end{align*}

Hence, we have 
\begin{align*}
    \iiiiint& \pi(\by|\bX,\bbeta,\bv,\bsigma )\pi(\bbeta|\bs,\rhosq)\pi(\bv|\bsigma)\pi(\sigma|\rhosq)\pi(\rhosq)\pi(\bs) d\bbeta d\bv d\bsigma d\bs d\rhosq \\
    &\leq  \iiint \left(\frac{\tau(1-\tau)}{2}\right)^n  \left( \prod^n_{i=1} \sigma_i \right)^{-1} \exp\left\{ -\frac{|y_i|+(1-2\tau)y_i}{4\sigma_i} \right\}\\
    &\quad\quad\times \prod^n_{i=1} \frac{1}{2\rhosq K_1(\eta)} \exp\left\{ - \frac{\eta}{2} \left(\frac{\sigma_i}{\rhosq} + \frac{\rhosq}{\sigma_i} \right) \right\} \pi(\rhosq)\pi(\bs)  d\bsigma d\bs d\rhosq  \\
    &= \iiint \left(\frac{\tau(1-\tau)}{2}\right)^n \left( \frac{1}{2\rhosq K_1(\eta)}  \right)^n \\
    &\quad\quad\times \prod^n_{i=1} \sigma_i^{-1} \exp\left\{ - \frac{\eta}{2} \left(\frac{\sigma_i}{\rhosq} + \frac{\rhosq}{\sigma_i} \right) -\frac{|y_i|+(1-2\tau)y_i}{4\sigma_i} \right\} \pi(\rhosq)\pi(\bs)  d\bsigma d\bs d\rhosq\,. 
\end{align*}

Next, we consider the integral with respect to $\bsigma$. First, we have 
\begin{align*}
    \int & \prod^n_{i=1} \sigma_i^{-1} \exp\left\{ - \frac{\eta}{2} \left(\frac{\sigma_i}{\rhosq} + \frac{\rhosq}{\sigma_i} \right) -\frac{|y_i|+(1-2\tau)y_i}{4\sigma_i} \right\}  d\bsigma\\
    &= \int \prod^n_{i=1} \sigma_i^{-1} \exp\left\{ - \frac{1}{2} \left(\frac{\eta\sigma_i}{\rhosq} + \left( \eta\rhosq + \frac{|y_i|+(1-2\tau)y_i}{2} \right) \frac{1}{\sigma_i} \right)  \right\} )  d\bsigma\,.
\end{align*}

Letting $c^2=\frac{\eta}{\rhosq}$ and $d^2=\eta\rhosq + \frac{|y_i|+(1-2\tau)y_i}{2}$ and using the fact that
\begin{align*}
    K_\nu(cd) = \frac{1}{2} \left(\frac{c}{d} \right)^{-\nu} \int^\infty_0 x^{\nu-1} \exp\left\{ -\frac{1}{2}\left(c^2 x+\frac{d^2}{x} \right) \right\} dx\,,
\end{align*}
 
 we have 
\begin{align*}
    \int & \prod^n_{i=1} \sigma_i^{-1} \exp\left\{ - \frac{\eta}{2} \left(\frac{\sigma_i}{\rhosq} + \frac{\rhosq}{\sigma_i} \right) -\frac{|y_i|+(1-2\tau)y_i}{4\sigma_i} \right\}  d\bsigma\\
    &=\prod^n_{i=1} 2K_0\left( \sqrt{\frac{\eta}{\rhosq} \left(\eta\rhosq + \frac{|y_i|+(1-2\tau)y_i}{2} \right) } \right)\,,.
\end{align*}

Hence, we have 
\begin{align}
    \iiiiint& \pi(\by|\bX,\bbeta,\bv,\bsigma )\pi(\bbeta|\bs,\rhosq)\pi(\bv|\bsigma)\pi(\sigma|\rhosq)\pi(\rhosq)\pi(\bs) d\bbeta d\bv d\bsigma d\bs d\rhosq \nonumber \\
    &\leq  \iint  \left(\frac{\tau(1-\tau)}{2}\right)^n \left( \frac{1}{2\rhosq K_1(\eta)}  \right)^n 2^n \prod^n_{i=1} K_0\left( \sqrt{\frac{\eta}{\rhosq} \left(\eta\rhosq + \frac{|y_i|+(1-2\tau)y_i}{2} \right) } \right) \nonumber\\
    &\quad\quad\times \prod^k_{j=1}  \frac{\lambda^2_1}{2} \exp\left\{ -\frac{\lambda^2_1s_j}{2} \right\}  \pi(\rhosq) d\bs d\rhosq \nonumber\\
    &= \int \left(\frac{\tau(1-\tau)}{2}\right)^n \left( \frac{1}{\rhosq K_1(\eta)}  \right)^n \prod^n_{i=1} K_0\left( \sqrt{\frac{\eta}{\rhosq} \left(\eta\rhosq + \frac{|y_i|+(1-2\tau)y_i}{2} \right) } \right) \frac{1}{\rhosq} d\rhosq \nonumber\\
    &= \left(\frac{\tau(1-\tau)}{2K_1(\eta)}\right)^n \int {(\rhosq)}^{-(n+1)} \prod^n_{i=1} K_0\left( \sqrt{ \eta^2 + \eta\frac{|y_i|+(1-2\tau)y_i}{2\rhosq} } \right) d\rhosq\,.\label{eq:int_rhosq}
\end{align}

In Equation (\ref{eq:int_rhosq}), we note that the inequality
\begin{align*}
    \sqrt{ \eta^2 + \eta\frac{|y_i|+(1-2\tau)y_i}{2\rhosq} } \geq \sqrt{\frac{\eta}{\rhosq} \left( \frac{|y_i|+(1-2\tau)y_i}{2} \right) }\,,
\end{align*}
implies 
\begin{align*}
    K_0\left( \sqrt{ \eta^2 + \eta\frac{|y_i|+(1-2\tau)y_i}{2\rhosq} } \right) \leq K_0\left( \sqrt{\frac{\eta}{\rhosq} \left( \frac{|y_i|+(1-2\tau)y_i}{2} \right) } \right)\,,
\end{align*}
for any $\eta>0$ for $i=1,\ldots,n$. Hence, we have 

\begin{align*}
     \iiiiint& \pi(\by|\bX,\bbeta,\bv,\bsigma )\pi(\bbeta|\bs,\rhosq)\pi(\bv|\bsigma)\pi(\sigma|\rhosq)\pi(\rhosq)\pi(\bs) d\bbeta d\bv d\bsigma d\bs d\rhosq \nonumber \\
    &\leq  \left(\frac{\tau(1-\tau)}{2K_1(\eta)}\right)^n \int {(\rhosq)}^{-(n+1)} \prod^n_{i=1} K_0\left( \sqrt{ \eta\frac{|y_i|+(1-2\tau)y_i}{2\rhosq} } \right) d\rhosq\,.
\end{align*}

Using the fact that 
\begin{align*}
    K_0(x) < K_{1/2}(x) = \frac{\sqrt{\pi}e^{-x}}{\sqrt{2x}}\,,
\end{align*}
holds for all $x>0$ (\cite{YangChu2017}), we obtain 

\begin{align*}
    \iiiiint& \pi(\by|\bX,\bbeta,\bv,\bsigma )\pi(\bbeta|\bs,\rhosq)\pi(\bv|\bsigma)\pi(\sigma|\rhosq)\pi(\rhosq)\pi(\bs) d\bbeta d\bv d\bsigma d\bs d\rhosq \nonumber \\
    &<   \left(\frac{\tau(1-\tau)}{2K_1(\eta)}\right)^n \int {(\rhosq)}^{-(n+1)} \\
    &\quad\quad\quad\quad\quad\quad\quad\quad\times \prod^n_{i=1} \sqrt{\frac{\pi}{2}} \left( \eta\frac{|y_i|+(1-2\tau)y_i}{2\rhosq} \right)^{-1/4}  \exp\left\{ - \sqrt{\eta\frac{|y_i|+(1-2\tau)y_i}{2\rhosq}} \right\} d\rhosq\\
    &=  \left(\frac{\sqrt{\pi}\tau(1-\tau)}{2\sqrt{2}K_1(\eta)}\right)^n \left( \eta\frac{|y_i|+(1-2\tau)y_i}{2} \right)^{-1/4} \\
    &\quad\quad\times \int  {(\rhosq)}^{-(3n/2+1)}  \exp\left\{ - \frac{1}{\sqrt{\rhosq}}\sum^n_{i=1} \sqrt{\eta\frac{|y_i|+(1-2\tau)y_i}{2} } \right\} d\rhosq\,.
\end{align*}

By using the transformation $\sqrt{\rhosq}=x$, we have 
\begin{align}
     \int & {(\rhosq)}^{-(3n/2+1)}  \exp\left\{ - \frac{1}{\sqrt{\rhosq}}\sum^n_{i=1} \sqrt{\eta\frac{|y_i|+(1-2\tau)y_i}{2} } \right\} d\rhosq \nonumber\\
     &= 2 \int  x^{-3n/2-1}  \exp\left\{ - \frac{1}{x}\sqrt{\frac{\eta}{2}}\sum^n_{i=1} \sqrt{|y_i|+(1-2\tau)y_i } \right\} dx\,. \label{eq:integrand}
\end{align}
Since the integrand is the kernel of $IG\left(\frac{3n}{2}, \sqrt{\frac{\eta}{2}}\sum^n_{i=1} \sqrt{|y_i|+(1-2\tau)y_i } \right)$ where $IG(\cdot)$ is the inverse Gamma distribution, the integral is finite for any $n$. Hence, the posterior distribution under the improper prior $\pi(\rhosq)\propto \frac{1}{\rhosq}$ is proper for any $n$.

\subsection{Proposition \ref{prop:unimodality-lasso}}\label{app:unimodality-lasso}

The joint posterior density of $(\bbeta,\rhosq)$ is expressed by
\begin{align*}
    \pi(\bbeta,\rhosq|\by) = \iint \pi(\by|\bX,\bbeta,\bsigma,\bv)\pi(\bbeta|\rhosq)\pi(\bv|\bsigma)\pi(\bsigma|\rhosq)\pi(\rhosq) d\bv d\bsigma\,.
\end{align*}

First, we consider the integral with respect to $\bv$. We have 
\begin{align*}
    \int& \pi(\by|\bX,\bbeta,\bsigma,\bv)\pi(\bv|\bsigma) d\bv\\
    &=\int \prod^n_{i=1} \frac{1}{\sqrt{8\pi\sigma_iv_i}} \exp\left\{ - \frac{(y_i-\bx_i\bbeta-(1-2\tau) v_i)^2}{8\sigma_i v_i} \right\}\\
    &\quad\quad \times \prod^n_{i=1} \frac{\tau(1-\tau)}{2\sigma_i} \exp\left\{ -\frac{\tau(1-\tau)v_i}{2\sigma_i} \right\} d\bv \\
    &=(8\pi)^{-n/2} \left(\frac{\tau(1-\tau)}{2} \right)^n \left(\prod^n_{i=1}\sigma_i \right)^{-3/2} \\
    &\quad\quad\times \int \prod^n_{i=1} v^{-1/2}_i  \exp\left\{ - \frac{(y_i-\bx_i\bbeta-(1-2\tau) v_i)^2}{8\sigma_i v_i} -\frac{\tau(1-\tau)v_i}{2\sigma_i}  \right\} d\bv\\
    &=  \left(\frac{\tau(1-\tau)}{2} \right)^n \prod^n_{i=1}\sigma_i^{-1} \exp\left\{ - \frac{|y_i-\bx_i\bbeta|+(1-2\tau)(y_i-\bx_i\bbeta)}{4\sigma_i}\right\}\,.
\end{align*}

Hence, we have 
\begin{align*}
    \pi(\bbeta,\rhosq|\by) &= \pi(\bbeta|\rhosq)\pi(\rhosq) \int \left(\frac{\tau(1-\tau)}{2} \right)^n \prod^n_{i=1}\sigma_i^{-1} \exp\left\{ - \frac{|y_i-\bx_i\bbeta|+(1-2\tau)(y_i-\bx_i\bbeta)}{4\sigma_i}\right\} \\
    &\quad\quad\quad\times \prod^n_{i=1} \frac{1}{2\rhosq K_1(\eta)} \exp\left\{ - \frac{\eta}{2} \left(\frac{\sigma_i}{\rhosq} + \frac{\rhosq}{\sigma_i} \right) \right\} d\bsigma\\
    &= \pi(\bbeta|\rhosq)\pi(\rhosq) {(\rhosq)}^{-n}   \left(\frac{\tau(1-\tau)}{4K_1(\eta)} \right)^n \\
    &\quad\quad\times \int \prod^n_{i=1} \sigma_i^{-1} \exp\left\{ -\frac{1}{2}\left(\frac{\eta}{\rhosq\sigma_i} + \left(\eta\rhosq + \frac{|y_i-\bx_i\bbeta|+(1-2\tau)(y_i-\bx_i\bbeta)}{2} \right)\frac{1}{\sigma_i} \right) \right\}d\bsigma \\
    &= \pi(\bbeta|\rhosq)\pi(\rhosq) {(\rhosq)}^{-n}   \left(\frac{\tau(1-\tau)}{4K_1(\eta)} \right)^n \\
    &\quad\quad\times \prod^n_{i=1} 2K_0\left(\sqrt{\frac{\eta}{\rhosq}\left( \frac{|y_i-\bx_i\bbeta|+(1-2\tau)(y_i-\bx_i\bbeta)}{2}\right) } \right)\\
    &\propto {(\rhosq)}^{-1} {(\rhosq)}^{-k/2} {(\rhosq)}^{-n} \prod^k_{j=1} \exp\left\{ -\frac{\lambda_1|\beta_j|}{\sqrt{\rhosq}}\right\} \\
    &\quad\quad\times  \prod^n_{i=1} K_0\left(\sqrt{\frac{\eta}{\rhosq}\left( \frac{|y_i-\bx_i\bbeta|+(1-2\tau)(y_i-\bx_i\bbeta)}{2}\right) } \right)\\
    &={(\rhosq)}^{-n-k/2-1} \exp\left\{ -\frac{\lambda_1}{\sqrt{\rhosq}} \sum^k_{j=1} |\beta_j|\right\} \\
    &\quad\quad\times \prod^n_{i=1} K_0\left(\sqrt{\frac{\eta}{\rhosq}\left( \frac{|y_i-\bx_i\bbeta|+(1-2\tau)(y_i-\bx_i\bbeta)}{2}\right) } \right)\,.
\end{align*}

Then the log posterior density is given by 
\begin{align}
    \log \pi(\bbeta,\rhosq|\by) &= -\left(n+\frac{k}{2}+1 \right)\log\rhosq -\frac{\lambda_1}{\sqrt{\rhosq}} \norm{\bbeta}_1  \nonumber\\
    &\quad\quad
    +\sum^n_{i=1} \log\left[ K_0\left(\sqrt{\frac{\eta}{\rhosq}\left( \frac{|y_i-\bx_i\bbeta|+(1-2\tau)(y_i-\bx_i\bbeta)}{2}\right) } \right) \right]\,.
    \label{eq:logpost}
\end{align}

Like \cite{Kawakami2023} and \cite{CaiSun2021}, we consider the coordinate transformation $\Phi \leftrightarrow \frac{\bbeta}{\sqrt{\rhosq}}$, $\xi \leftrightarrow \frac{1}{\sqrt{\rhosq}}$. In the transformation coordinate, Equation (\ref{eq:logpost}) is given by
\begin{align}
    &(2n+k-2)\log\xi - \lambda_1\norm{\Phi}_1 \nonumber \\
    &\quad\quad 
    +\sum^n_{i=1} \log\left[ K_0\left(\sqrt{\eta^2 + \frac{\eta\xi}{2}\left( |\xi y_i-\bx_i\Phi|+(1-2\tau)(\xi y_i-\bx_i\Phi) \right) } \right) \right]\,.
    \label{eq:logcoord}
\end{align}

The first two terms in Equation (\ref{eq:logcoord}) are concave. The last term is also concave, since the Theorem 2(b) of \cite{BariczEtAl2011} is equivalent to log-convexity of $K_\nu$ for every $\nu$. Therefore, the joint posterior $\pi(\bbeta,\rhosq|\by)$ is unimodal. This completes the proof. 

\subsection{Proposition \ref{prop:posterior-properity_elastic}}\label{app:posterior-properity_elastic}

Like the proof of Proposition \ref{prop:posterior-properity_lasso}, we follow in the similar manner. 
The overall posterior distribution is given by
\begin{align*}
    \pi &(\bbeta,\rhosq,\bv,\bsigma,\bt|\by ) \\
    &=\frac{\pi(\by|\bX,\bbeta,\bv,\bsigma )\pi(\bbeta|\bt,\rhosq)\pi(\bv|\bsigma)\pi(\sigma|\rhosq)\pi(\rhosq)\pi(\bt) }
    {\iiiiint \pi(\by|\bX,\bbeta,\bv,\bsigma )\pi(\bbeta|\bt,\rhosq)\pi(\bv|\bsigma)\pi(\sigma|\rhosq)\pi(\rhosq)\pi(\bt) d\bbeta d\bv d\bsigma d\bt d\rhosq}\,.  
\end{align*}

We show that the normalising constant of the posterior distribution is finite, that is, 
\begin{align*}
    \iiiiint \pi(\by|\bX,\bbeta,\bv,\bsigma )\pi(\bbeta|\bt,\rhosq)\pi(\bv|\bsigma)\pi(\sigma|\rhosq)\pi(\rhosq)\pi(\bt) d\bbeta d\bv d\bsigma d\bt d\rhosq <\infty\,.
\end{align*}

First, we consider the integral with respect to $\bbeta$. We have 
\begin{align*}
    \int &\pi(\by|\bX,\bbeta,\bv,\bsigma )\pi(\bbeta|\bt,\rhosq) d\bbeta\\
    &=\int (8\pi)^{-n/2}(\pi)^{-k/2}\lambda_4^{k/2} {(\rhosq)}^{-k/2} \left(\prod^n_{i=1}\sigma_i \right)^{-1/2} \left(\prod^n_{i=1}v_i \right)^{-1/2} \left(\prod^k_{j=1} \frac{t_j}{t_j-1} \right)^{1/2} \\
    &\quad\quad\times \exp\left\{-\frac{1}{2} (\by-\bX\bbeta-(1-2\tau)\bv)^T\bV^{-1}(\by-\bX\bbeta-(1-2\tau)\bv) \right\}\\
    &\quad\quad\times \exp\left\{ -\frac{\lambda_4}{\rhosq} \bbeta^T\bLambda^{-1}_2\bbeta \right\} d\bbeta\,,
\end{align*}
where $\bV=\diag(4\sigma_1 v_1,\ldots,4\sigma_n v_n)$ and $\bLambda_2=\diag((t_1-1)t_1^{-1},\ldots,(t_n-1)t_n^{-1})$. In particular, we have 
\begin{align*}
    \int \exp&\left\{-\frac{1}{2} (\by-\bX\bbeta-(1-2\tau)\bv)^T\bV^{-1}(\by-\bX\bbeta-(1-2\tau)\bv) \right\} \times \exp\left\{ -\frac{\lambda_4}{\rhosq} \bbeta^T\bLambda^{-1}_2\bbeta \right\} d\bbeta\\
    &=\exp\left\{ -\frac{1}{2} (\by-(1-2\tau)\bv)^T\bV^{-1}(\by-(1-2\tau)\bv) \right\} \\
    &\quad\quad\times(2\pi)^{k/2} \left| \left(\bX^T\bV^{-1}\bX+ \frac{2\lambda_4}{\rhosq} \bLambda^{-1}_2 \right)^{-1} \right|^{1/2} \\
    &=\exp\left\{ -\frac{1}{2} (\by-(1-2\tau)\bv)^T\bV^{-1}(\by-(1-2\tau)\bv) \right\} \\
    &\quad\quad\times (2\pi)^{k/2} \left|\frac{2\lambda_4}{\rhosq} \bLambda^{-1}_2  \right|^{-1/2} \left| \bV \right|^{1/2} \left| \bV + \frac{\rhosq}{2\lambda_4} \bX\bLambda^{-1}_2 \bX^T \right|^{-1/2}\\
    &=\exp\left\{ -\frac{1}{2} (\by-(1-2\tau)\bv)^T\bV^{-1}(\by-(1-2\tau)\bv) \right\} \\
    &\quad\quad\times (2\pi)^{k/2} 2^n 2^{k/2} {(\rhosq)}^{k/2} \lambda_4^{-k/2} \left(\prod^k_{j=1} \frac{t_j}{t_j-1} \right)^{-1/2} \left( \prod^n_{i=1} \sigma_i \right)^{1/2} \left( \prod^n_{i=1} v_i \right)^{1/2} \\
    &\quad\quad\times \left| \bV + \frac{\rhosq}{2\lambda_4} \bX\bLambda^{-1}_2 \bX^T \right|^{-1/2}\,.
\end{align*}

Hence, we have 
\begin{align*}
     \int &\pi(\by|\bX,\bbeta,\bv,\bsigma )\pi(\bbeta|\bt,\rhosq) d\bbeta\\
    &= (2\pi)^{n/2}  \exp\left\{-\frac{1}{2} (\by-(1-2\tau)\bv)^T\bV^{-1}(\by-(1-2\tau)\bv) \right\}\\
    &\quad\quad\times \left| \bV + \frac{\rhosq}{2\lambda_4} \bX\bLambda^{-1}_2 \bX^T \right|^{-1/2}\,.
\end{align*}

Next, we have 
\begin{align*}
     \iiiiint& \pi(\by|\bX,\bbeta,\bv,\bsigma )\pi(\bbeta|\bt,\rhosq)\pi(\bv|\bsigma)\pi(\sigma|\rhosq)\pi(\rhosq)\pi(\bt) d\bbeta d\bv d\bsigma d\bt d\rhosq\\
     &=\iiiint (2\pi)^{n/2}  \exp\left\{-\frac{1}{2} (\by-(1-2\tau)\bv)^T\bV^{-1}(\by-(1-2\tau)\bv) \right\}\\
    &\quad\quad\times \left| \bV + \frac{\rhosq}{2\lambda_4} \bX\bLambda^{-1}_2 \bX^T \right|^{-1/2} \pi(\bv|\bsigma)\pi(\sigma|\rhosq)\pi(\rhosq)\pi(\bt) d\bv d\bsigma d\bt d\rhosq\\
    &\leq  \iiiint (2\pi)^{-n/2}   \exp\left\{ -\frac{1}{2} (\by-(1-2\tau)\bv)^T\bV^{-1}(\by-(1-2\tau)\bv) \right\} \left| \bV \right|^{-1/2}\\
    &\quad\quad\times 
     \prod^n_{i=1} \frac{\tau(1-\tau)}{2\sigma_i} \exp\left\{ -\frac{\tau(1-\tau)v_i}{2\sigma_i} \right\}
    \pi(\sigma|\rhosq)\pi(\rhosq)\pi(\bs)  d\bv d\bsigma d\bs d\rhosq\,.
\end{align*}

Next, we consider the integral with respect to $\bv$. We have
\begin{align*}
     \int &\left| \bV \right|^{-1/2}  \exp\left\{ -\frac{1}{2} (\by-(1-2\tau)\bv)^T\bV^{-1}(\by-(1-2\tau)\bv) \right\} \\
    &\quad\quad\times   \prod^n_{i=1} \frac{\tau(1-\tau)}{2\sigma_i} \exp\left\{ -\frac{\tau(1-\tau)v_i}{2\sigma_i} \right\} d\bv \\
    &= (2\pi)^{n/2} \left(\frac{\tau(1-\tau)}{2}\right)^n  \left( \prod^n_{i=1} \sigma_i \right)^{-1} \exp\left\{ -\frac{|y_i|+(1-2\tau)y_i}{4\sigma_i} \right\}\,.
\end{align*}

Hence, we have 
\begin{align*}
    \iiiiint& \pi(\by|\bX,\bbeta,\bv,\bsigma )\pi(\bbeta|\bt,\rhosq)\pi(\bv|\bsigma)\pi(\sigma|\rhosq)\pi(\rhosq)\pi(\bt) d\bbeta d\bv d\bsigma d\bt d\rhosq\\
    &\leq  \iiint \left(\frac{\tau(1-\tau)}{2}\right)^n  \left( \prod^n_{i=1} \sigma_i \right)^{-1} \exp\left\{ -\frac{|y_i|+(1-2\tau)y_i}{4\sigma_i} \right\}
    \\
    &\quad\quad\times  \prod^n_{i=1} \frac{1}{2\rhosq K_1(\eta)} \exp\left\{ - \frac{\eta}{2} \left(\frac{\sigma_i}{\rhosq} + \frac{\rhosq}{\sigma_i} \right) \right\}  \pi(\rhosq)\pi(\bt)   d\bsigma d\bt d\rhosq\\
    &=\iiint \left(\frac{\tau(1-\tau)}{4K_1(\eta)}\right)^n {(\rhosq)}^{-n}  \\
    &\quad\quad\times \prod^n_{i=1} \sigma_i  \exp\left\{ - \frac{1}{2} \left(\frac{\eta\sigma_i}{\rhosq} + \left(\eta\rhosq + \frac{|y_i|+(1-2\tau)y_i}{2}\right)\frac{1}{\sigma_i} \right) \right\}  \pi(\rhosq)\pi(\bt)   d\bsigma d\bt d\rhosq\\
    &=\iint \left(\frac{\tau(1-\tau)}{2K_1(\eta)}\right)^n {(\rhosq)}^{-n}  \prod^n_{i=1} K_0\left(\sqrt{ \frac{\eta}{\rhosq}\left(\eta\rhosq + \frac{|y_i|+(1-2\tau)y_i}{2} \right) } \right) \\
    &\quad\quad\times \prod^k_{j=1} \Gamma^{-1} \left(\frac{1}{2},\tlambda \right) \sqrt{\frac{\tlambda}{t_j}} \exp\left\{-\tlambda t_j \right\} I(t_j>1) \times \frac{1}{\rhosq} d\bt d\rhosq\\
    &= \left(\frac{\tau(1-\tau)}{2K_1(\eta)}\right)^n \tlambda^{-k}  \Gamma^{-k} \left(\frac{1}{2},\tlambda \right) \Gamma^k \left(\frac{1}{2},1 \right)\\
    &\quad\quad\times \int {(\rhosq)}^{-n-1} 
    \prod^n_{i=1} K_0\left(\sqrt{ \frac{\eta}{\rhosq}\left(\eta\rhosq + \frac{|y_i|+(1-2\tau)y_i}{2} \right) } \right) d\rhosq \\
    &<\left(\frac{\tau(1-\tau)}{2K_1(\eta)}\right)^n \tlambda^{-k}  \Gamma^{-k} \left(\frac{1}{2},\tlambda \right) \Gamma^k \left(\frac{1}{2},1 \right)\\
    &\quad\quad\times 2 \int x^{-3n/2-1} \exp\left\{ - \frac{1}{x}\sqrt{\frac{\eta}{2}}\sum^n_{i=1} \sqrt{|y_i|+(1-2\tau)y_i } \right\} dx\,.
\end{align*}

As the integrand is same as that in (\ref{eq:integrand}), the integral is finite for any $n$. Hence, the posterior distribution under the improper prior $\pi(\rhosq)\propto \frac{1}{\rhosq}$ is proper for any $n$.

\subsection{Proposition \ref{prop:unimodality-elastic}}\label{app:unimodality-elastic}

Like the proof of Proposition \ref{prop:unimodality-lasso}, we follow in the similar manner. The joint posterior density of $(\bbeta,\rhosq)$ is expressed by
\begin{align*}
    \pi(\bbeta,\rhosq|\by) &= \iint \pi(\by|\bX,\bbeta,\bsigma,\bv)\pi(\bbeta|\rhosq)\pi(\bv|\bsigma)\pi(\bsigma|\rhosq)\pi(\rhosq) d\bv d\bsigma\\
    &=\pi(\bsigma|\rhosq)\pi(\rhosq) \left(\frac{\tau(1-\tau)}{4K_1(\eta)} \right)^n {(\rhosq)}^{-n} \\
    &\quad\quad\times \prod^n_{i=1} K_0\left(\sqrt{\frac{\eta}{\rhosq}\left( \frac{|y_i-\bx_i\bbeta|+(1-2\tau)(y_i-\bx_i\bbeta)}{2}\right) } \right)\\
    &\propto{(\rhosq)}^{-n-k/2-1} \exp\left\{ -\frac{\lambda_3}{\sqrt{\rhosq}} \sum^k_{j=1} |\beta_j| -\frac{\lambda_4}{\rhosq}\sum^k_{j=1} \beta_j^2 \right\} \\
    &\quad\quad\times \prod^n_{i=1} K_0\left(\sqrt{\frac{\eta}{\rhosq}\left( \frac{|y_i-\bx_i\bbeta|+(1-2\tau)(y_i-\bx_i\bbeta)}{2}\right) } \right)\,.
\end{align*}

Then the log posterior density is given by 
\begin{align}
    \log \pi(\bbeta,\rhosq|\by) &=  -\left(n+\frac{k}{2}+1 \right)\log\rhosq -\frac{\lambda_3}{\sqrt{\rhosq}} \norm{\bbeta}_1 -\frac{\lambda_4}{\rhosq} \norm{\bbeta}^2_2 \nonumber\\
    &\quad\quad
    +\sum^n_{i=1} \log\left[ K_0\left(\sqrt{\frac{\eta}{\rhosq}\left( \frac{|y_i-\bx_i\bbeta|+(1-2\tau)(y_i-\bx_i\bbeta)}{2}\right) } \right) \right]\,.
    \label{eq:logpost-en}
\end{align}

We also consider the coordinate transformation $\Phi \leftrightarrow \frac{\bbeta}{\sqrt{\rhosq}}$, $\xi \leftrightarrow \frac{1}{\sqrt{\rhosq}}$. In the transformation coordinate, Equation (\ref{eq:logpost-en}) is given by
\begin{align*}
    &(2n+k+2)\log\xi - \lambda_1\norm{\Phi}_1 -\lambda_4\norm{\Phi}^2_2 \nonumber \\
    &\quad\quad 
    +\sum^n_{i=1} \log\left[ K_0\left(\sqrt{\eta^2 + \frac{\eta\xi}{2}\left( |\xi y_i-\bx_i\Phi|+(1-2\tau)(\xi y_i-\bx_i\Phi) \right) } \right) \right]\,.
\end{align*}

Since the four terms are log-concave, the joint posterior of $\pi(\bbeta,\rhosq|\by)$ is unimodal. This completes the proof. 

\section{Details of Gibbs Sampling Algorithm}
\subsection{Bayesian Huberised Lasso Quantile Regression}\label{app:gibbs-lasso}

The joint posterior distribution is as follows.
\begin{align*}
    \pi (\bbeta,\rhosq,&\bv,\bsigma,\lambda_1,\bs|\by )\\
    & = \prod^n_{i=1} \frac{1}{\sqrt{8\pi\sigma_iv_i}} \exp\left\{ - \frac{(y_i-\bx_i\bbeta-(1-2\tau) v_i)^2}{8\sigma_i v_i} \right\}\\
    &\quad \quad \times \prod^n_{i=1} \frac{1}{2\rhosq K_1(\eta)} \exp\left\{ - \frac{\eta}{2} \left(\frac{\sigma_i}{\rhosq} + \frac{\rhosq}{\sigma_i} \right) \right\}\\
    &\quad\quad \times \prod^n_{i=1} \frac{\tau(1-\tau)}{2\sigma_i} \exp\left\{ -\frac{\tau(1-\tau)v_i}{2\sigma_i} \right\}\\
    &\quad\quad\times \prod^k_{j=1} \frac{1}{\sqrt{2\pi\rhosq s_j}} \exp\left\{ -\frac{\beta^2_j}{2\rhosq s_j} \right\} \\
    &\quad\quad\times \prod^k_{j=1}  \frac{\lambda^2_1}{2} \exp\left\{ -\frac{\lambda^2_1s_j}{2} \right\} \\
    &\quad\quad\times \frac{b^a}{\Gamma(a)} {(\lambda^2_1)}^{a-1} \exp\left\{-b\lambda^2_1 \right\}\\
    &\quad\quad\times \frac{d^c}{\Gamma(c)} \eta^{c-1} \exp\left\{-d\eta \right\} \\
    &\quad\quad\times \frac{1}{\rhosq}\,.
\end{align*}

The full conditional posterior distribution of $\bbeta$ is given by  
\begin{align*}
    \pi(\bbeta|\by,\rhosq,&\bv,\bsigma,\lambda_1,\bs)\\
    &\propto \prod^n_{i=1} \frac{1}{\sqrt{8\pi\sigma_iv_i}} \exp\left\{ - \frac{(y_i-\bx_i\bbeta-(1-2\tau) v_i)^2}{8\sigma_i v_i} \right\}\\
    &\quad\quad\times \prod^k_{j=1} \frac{1}{\sqrt{2\pi\rhosq s_j}} \exp\left\{ -\frac{\beta^2_j}{2\rhosq s_j} \right\} \\
    &\propto \exp\left\{-\frac{1}{2} (\by-\bX\bbeta-(1-2\tau)\bv)^T\bV^{-1}(\by-\bX\bbeta-(1-2\tau)\bv) \right\}\\
    &\quad\quad\times \exp\left\{ -\frac{1}{2\rhosq} \bbeta^T\bLambda^{-1}\bbeta \right\}\\
    &\propto \exp\left\{ -\frac{1}{2}\left(\bbeta^T\left(\bX^T\bV^{-1}\bX+\frac{1}{\rhosq}\bLambda^{-1} \right) \bbeta - 2\bbeta^T\bX^T\bV^{-1}(\by-(1-2\tau)\bv) \right) \right\}\\
    &\propto N\left(\bmu_{\bbeta},\bSigma_{\bbeta} \right)\,,
\end{align*}
where $\bV=\diag(4\sigma_1 v_1,\ldots,4\sigma_n v_n)$, $\bLambda=\diag(s_1,\ldots,s_k)$, $\bSigma_{\bbeta}=\left(\bX^T\bV^{-1}\bX+\frac{1}{\rhosq}\bLambda^{-1} \right)^{-1}$ and $\bmu_{\bbeta}=\bSigma_{\bbeta} \bX^T\bV^{-1}(\by-(1-2\tau)\bv)$. 

The full conditional posterior distribution of $\sigma_i$, $i=1,\ldots,n$, is given by
\begin{align*}
    \pi(\sigma_i|\by,\bbeta,\rhosq,&\bv,\lambda_1,\bs)\\
    &\propto \prod^n_{i=1} \frac{1}{\sqrt{8\pi\sigma_iv_i}} \exp\left\{ - \frac{(y_i-\bx_i\bbeta-(1-2\tau) v_i)^2}{8\sigma_i v_i} \right\}\\
    &\quad \quad \times \prod^n_{i=1} \frac{1}{2\rhosq K_1(\eta)} \exp\left\{ - \frac{\eta}{2} \left(\frac{\sigma_i}{\rhosq} + \frac{\rhosq}{\sigma_i} \right) \right\}\\
    &\quad\quad \times \prod^n_{i=1} \frac{\tau(1-\tau)}{2\sigma_i} \exp\left\{ -\frac{\tau(1-\tau)v_i}{2\sigma_i} \right\}\\
    &\propto \sigma_i^{-3/2} \exp\left\{-\frac{1}{2}\left( \frac{\eta}{\rhosq}\sigma_i + \left( \frac{(y_i-\bx_i\bbeta-(1-2\tau) v_i)^2}{4 v_i} + \tau(1-\tau)v_i +\eta\rhosq \right)\frac{1}{\sigma_i} \right) \right\}\\
    &\propto GIG\left(-\frac{1}{2}, \frac{\eta}{\rhosq}, \frac{(y_i-\bx_i\bbeta-(1-2\tau) v_i)^2}{4 v_i} + \tau(1-\tau)v_i +\eta\rhosq  \right)\,.
\end{align*}

The full conditional posterior distribution of $v_i$, $i=1,\ldots,n$, is given by
\begin{align*}
    \pi(v_i|\by,\bbeta,\rhosq,&\bsigma,\lambda_1,\bs)\\
    &\propto \prod^n_{i=1} \frac{1}{\sqrt{8\pi\sigma_iv_i}} \exp\left\{ - \frac{(y_i-\bx_i\bbeta-(1-2\tau) v_i)^2}{8\sigma_i v_i} \right\}\\
    &\quad\quad \times \prod^n_{i=1} \frac{\tau(1-\tau)}{2\sigma_i} \exp\left\{ -\frac{\tau(1-\tau)v_i}{2\sigma_i} \right\}\\
    &\propto v^{-1/2}_i exp\left\{-\frac{1}{2} \left( \frac{(y_i-\bx_i\bbeta-(1-2\tau) v_i)^2}{4\sigma_i v_i} + \frac{\tau(1-\tau)v_i}{\sigma_i} \right) \right\}\\
    &\propto v^{-1/2}_i exp\left\{-\frac{1}{2} \left( \frac{(y_i-\bx_i\bbeta )^2}{4\sigma_i}\frac{1}{v_i} + \left(\frac{(1-2\tau)^2}{4\sigma_i} + \frac{\tau(1-\tau)}{\sigma_i} \right)v_i \right) \right\}\\
    &\propto GIG\left(\frac{1}{2}, \frac{(1-2\tau)^2}{4\sigma_i} + \frac{\tau(1-\tau)}{\sigma_i}, \frac{(y_i-\bx_i\bbeta )^2}{4\sigma_i} \right)\,.
\end{align*}

The full conditional posterior distribution of $\rhosq$ is given by
\begin{align*}
    \pi(\rhosq|\by,\bbeta,\bv,&\bsigma,\lambda_1,\bs)\\
    &\propto \prod^n_{i=1} \frac{1}{2\rhosq K_1(\eta)} \exp\left\{ - \frac{\eta}{2} \left(\frac{\sigma_i}{\rhosq} + \frac{\rhosq}{\sigma_i} \right) \right\}\\
    &\quad\quad\times \prod^k_{j=1} \frac{1}{\sqrt{2\pi\rhosq s_j}} \exp\left\{ -\frac{\beta^2_j}{2\rhosq s_j} \right\} \\
    &\quad\quad\times \frac{1}{\rhosq}\\
    &\propto {(\rhosq)}^{-n-\frac{k}{2}-1} \exp\left\{ \frac{1}{2}\left( \sum^n_{i=1}\frac{\eta}{\sigma_i} \rhosq + \left(\sum^n_{i=1}\eta\sigma_i + \sum_{j=1}^k \frac{\beta^2_j}{s_j} \right)\frac{1}{\rhosq} \right) \right\}\\
    &\propto GIG\left( -n-\frac{k}{2}, \sum^n_{i=1}\frac{\eta}{\sigma_i}, \sum^n_{i=1}\eta\sigma_i + \sum_{j=1}^k \frac{\beta^2_j}{s_j}  \right)\,.
\end{align*}

The full conditional posterior distribution of $s_j$, $j=1,\ldots,k$, is given by
\begin{align*}
    \pi(s_j|\by,\bbeta,\rhosq,&\bv,\bsigma,\lambda_1)\\
    &\propto \frac{1}{\sqrt{2\pi\rhosq s_j}} \exp\left\{ -\frac{\beta^2_j}{2\rhosq s_j} \right\} \times  \frac{\lambda^2_1}{2} \exp\left\{ -\frac{\lambda^2_1s_j}{2} \right\} \\
    &\propto s_j^{-1/2}\exp\left\{ -\frac{1}{2} \left( \frac{\beta_j^2}{\rhosq} \frac{1}{s_j} + \lambda^2_1 s_j \right) \right\}\\
    &\propto GIG\left(\frac{1}{2}, \lambda^2_1, \frac{\beta_j^2}{\rhosq}  \right)\,.
\end{align*}

The full conditional posterior distribution of $\lambda_1$ is given by
\begin{align*}
   \pi(\lambda_1|\by,\bbeta,\rhosq,&\bv,\bsigma,\bs)\\
    &\propto \prod^k_{j=1}  \frac{\lambda^2_1}{2} \exp\left\{ -\frac{\lambda^2_1s_j}{2} \right\} \\
    &\quad\quad\times \frac{b^a}{\Gamma(a)} {(\lambda^2_1)}^{a-1} \exp\left\{-b\lambda^2_1 \right\}\\
    &\propto {(\lambda^2_1)}^{a+k-1} \exp\left\{ -\left( b+\sum^k_{j=1} \frac{s_j}{2} \right)\lambda_1^2 \right\}\\
    &\propto \text{Gamma}\left( a+k,  b+\sum^k_{j=1} \frac{s_j}{2} \right)\,.
\end{align*}

\subsection{Bayesian Huberised Elastic Net Quantile regression}\label{app:gibbs-elastic}

The joint posterior distribution is as follows.
\begin{align*}
    \pi (\bbeta,\rhosq,&\bv,\bsigma,\bt,\lambda_3,\lambda_4|\by )\\
    & = \prod^n_{i=1} \frac{1}{\sqrt{8\pi\sigma_iv_i}} \exp\left\{ - \frac{(y_i-\bx_i\bbeta-(1-2\tau) v_i)^2}{8\sigma_i v_i} \right\}\\
    &\quad \quad \times \prod^n_{i=1} \frac{1}{2\rhosq K_1(\eta)} \exp\left\{ - \frac{\eta}{2} \left(\frac{\sigma_i}{\rhosq} + \frac{\rhosq}{\sigma_i} \right) \right\}\\
    &\quad\quad \times \prod^n_{i=1} \frac{\tau(1-\tau)}{2\sigma_i} \exp\left\{ -\frac{\tau(1-\tau)v_i}{2\sigma_i} \right\}\\
    &\quad\quad\times \prod^k_{j=1} \sqrt{ \frac{\lambda_4 t_j}{\pi\rhosq(t_j-1)} } \exp\left\{ -\frac{\lambda_4 t_j\beta^2_j}{\rhosq (t_j-1)} \right\} \\
    &\quad\quad \times \prod^k_{j=1} \Gamma^{-1} \left(\frac{1}{2},\tlambda \right) \sqrt{\frac{\tlambda}{t_j}} \exp\left\{-\tlambda t_j \right\} I(t_j>1)\\
    &\quad\quad\times \frac{b_1^{a_1}}{\Gamma(a_1)} {(\tlambda)}^{a_1-1} \exp\left\{-b_1\tlambda \right\}\\
    &\quad\quad\times \frac{b_2^{a_2}}{\Gamma(a_2)} \lambda_2^{a_2-1} \exp\left\{-b_2\lambda_4 \right\}\\
    &\quad\quad\times \frac{b_3^{a_3}}{\Gamma(a_3)} \eta^{a_3-1} \exp\left\{-b_3\eta \right\}\\
    &\quad\quad\times \frac{1}{\rhosq}\,.
\end{align*}

Clearly, it is obvious to see that the full conditional posterior distributions of $\sigma_i$ and $v_i$, $i=1,\ldots,n$ are the same in the Bayesian Huberised lasso quantile regression. 

The full conditional posterior distribution of $\bbeta$ is given by
\begin{align*}
    \pi (\bbeta |\by,\rhosq,&\bv,\bsigma,\bt,\lambda_3,\lambda_4)\\
    & = \prod^n_{i=1} \frac{1}{\sqrt{8\pi\sigma_iv_i}} \exp\left\{ - \frac{(y_i-\bx_i\bbeta-(1-2\tau) v_i)^2}{8\sigma_i v_i} \right\}\\
    &\quad\quad\times \prod^k_{j=1} \sqrt{ \frac{\lambda_4 t_j}{\pi\rhosq(t_j-1)} } \exp\left\{ -\frac{\lambda_4 t_j\beta^2_j}{\rhosq (t_j-1)} \right\} \\
    &\propto \exp\left\{ -\frac{1}{2} (\by-\bX\bbeta -(1-2\tau)\bv)^T\bV^{-1}(\by-\bX\bbeta -(1-2\tau)\bv) \right\}\\
    &\quad\quad\times \exp\left\{ -\frac{\lambda_4}{\rhosq} \bbeta^T\bT^{-1}\bbeta \right\} \\
    &\propto \exp\left\{ -\frac{1}{2} \left(\bbeta^T\left(\bX\bV^{-1}\bX + \frac{2\lambda_4}{\rhosq}\bT^{-1} \right) \bbeta -2\bbeta^T\bX^T\bV^{-1}(\by-(1-2\tau)\bv) \right) \right\} \\
    &\propto N\left(\bmu_{\bbeta}, \bSigma_{\bbeta} \right)\,,
\end{align*}
where $\bV=\diag(4\sigma_1v_1,\ldots,4\sigma_nv_n)$, $\bT=\diag((t_1-1)t^{-1}_1,\ldots,(t_n-1)t^{-1}_n)$, $\bSigma_{\bbeta} = \left(\bX\bV^{-1}\bX + \frac{2\lambda_4}{\rhosq}\bT^{-1} \right)^{-1}$ and $\bmu = \bSigma_{\bbeta} \bX^T\bV^{-1}(\by-(1-2\tau)\bv)$. 

The full conditional posterior distribution of $\rhosq$ is given by
\begin{align*}
    \pi(\rhosq|\by,\bbeta,&\bv,\bsigma,\bt,\lambda_3,\lambda_4)\\
    &\propto \prod^n_{i=1} \frac{1}{2\rhosq K_1(\eta)} \exp\left\{ - \frac{\eta}{2} \left(\frac{\sigma_i}{\rhosq} + \frac{\rhosq}{\sigma_i} \right) \right\}\\
    &\quad\quad\times \prod^k_{j=1} \sqrt{ \frac{\lambda_4 t_j}{\pi\rhosq(t_j-1)} } \exp\left\{ -\frac{\lambda_4 t_j\beta^2_j}{\rhosq (t_j-1)} \right\} \\
    &\quad\quad\times \frac{1}{\rhosq}\\
    &\propto {(\rhosq)}^{-n-\frac{k}{2}-1} \exp\left\{ \frac{1}{2}\left( \sum^n_{i=1}\frac{\eta}{\sigma_i} \rhosq + \left(\sum^n_{i=1}\eta\sigma_i + \sum_{j=1}^k \frac{2\lambda_4 t_j\beta^2_j}{t_j-1} \right)\frac{1}{\rhosq} \right) \right\}\\
    &\propto GIG\left( -n-\frac{k}{2}, \sum^n_{i=1}\frac{\eta}{\sigma_i}, \sum^n_{i=1}\eta\sigma_i + \sum_{j=1}^k \frac{2t_j\lambda_4\beta^2_j}{t_j-1}  \right)\,.
\end{align*}

The full conditional posterior distribution of $t_j-1$ is given by
\begin{align*}
    \pi(t_j-1|\by,\bbeta,\rhosq,&\bv,\bsigma,\lambda_3,\lambda_4)\\
    &\propto  \sqrt{ \frac{\lambda_4 t_j}{\pi\rhosq(t_j-1)} } \exp\left\{ -\frac{\lambda_4 t_j\beta^2_j}{\rhosq (t_j-1)} \right\} \\
    &\quad\quad\times \Gamma^{-1} \left(\frac{1}{2},\tlambda \right) \sqrt{\frac{\tlambda}{t_j}} \exp\left\{-\tlambda t_j \right\} I(t_j>1)\\
    &\propto (t_j-1)^{-1/2} \exp\left\{ -\frac{\lambda_4 t_j\beta^2_j}{\rhosq (t_j-1)} -\tlambda t_j   \right\}I(t_j>1)\\
    &\propto (t_j-1)^{-1/2} \exp\left\{ -\frac{1}{2}\left(\frac{2\lambda_4 \beta^2_j}{\rhosq }\frac{1}{t_j-1} + 2\tlambda(t_j-1) \right)  \right\}I(t_j-1>0)\\
    &\propto GIG\left(\frac{1}{2},2\tlambda, \frac{2\lambda_4 \beta^2_j}{\rhosq } \right)I(t_j-1>0)\,.
\end{align*}

The full conditional posterior distribution of $\tlambda$ is given by
\begin{align*}
    \pi(\tlambda|\by,\bbeta,\rhosq,&\bv,\bsigma,\bt,\lambda_3,\lambda_4)\\
    &\propto \prod^k_{j=1} \Gamma^{-1} \left(\frac{1}{2},\tlambda \right) \sqrt{\frac{\tlambda}{t_j}} \exp\left\{-\tlambda t_j \right\} I(t_j>1)\\
    &\quad\quad\times \frac{b_1^{a_1}}{\Gamma(a_1)} {(\tlambda)}^{a_1-1} \exp\left\{-b_1\tlambda \right\}\\
    &\propto \Gamma^{-k} \left(\frac{1}{2},\tlambda \right)   {(\tlambda)}^{ \frac{k}{2}+a_1-1} \exp\left\{-\left(\sum^k_{j=1} t_j +b_1 \right) \tlambda \right\}\,.
\end{align*}
As it is infeasible to directly sample from $\pi(\tlambda|\by,\bbeta,\rhosq,\bv,\bsigma,\bt,\lambda_3,\lambda_4)$, the one-step Metropolis-Hastings algorithm is employed. Following \cite{LiEtAl2010}, the proposal distribution is \\$q(\tlambda|\bt)\sim \text{Gamma}\left(k+a_1, b_1+\sum^k_{j=1}(t_j-1) \right)$. They showed that 
\begin{align*}
    \underset{\tlambda\rightarrow \infty}{\lim}\ \frac{\sqrt{\tlambda}\exp(\tlambda)}{\Gamma^{-1}\left( \frac{1}{2},\tlambda\right)}  = 1\,,
\end{align*}
implies that
\begin{align*}
    \underset{\tlambda\rightarrow \infty}{\lim}\ \frac{\pi(\tlambda|\by,\bbeta,\rhosq,\bv,\bsigma,\bt,\lambda_3,\lambda_4)}{q(\tlambda|\bt)}\,, 
\end{align*}
exists and equals to some positive constant. Hence, the tail behaviours of $q(\tlambda|\bt)$ and $\pi(\tlambda|\by,\bbeta,\rhosq,\bv,\bsigma,\bt,\lambda_3,\lambda_4)$ are similar.

The full conditional posterior distribution of $\lambda_4$ is given by
\begin{align*}
    \pi(\lambda_4|\by,\bbeta,\rhosq,&\bv,\bsigma,\bt,\lambda_3)\\
    &\propto \prod^k_{j=1} \sqrt{ \frac{\lambda_4 t_j}{\pi\rhosq(t_j-1)} } \exp\left\{ -\frac{\lambda_4 t_j\beta^2_j}{\rhosq (t_j-1)} \right\} \\
    &\quad\quad\times \frac{b_2^{a_2}}{\Gamma(a_2)} \lambda_2^{a_2-1} \exp\left\{-b_2\lambda_4 \right\}\\ 
    &\propto \lambda_4^{\frac{k}{2}+a_2-1} \exp\left\{ -\left( \sum^k_{j=1}   \frac{ t_j\beta^2_j}{\rhosq (t_j-1)} + b_2 \right)\lambda_4 \right\} \\
    &\propto \text{Gamma}\left(\frac{k}{2}+a_2, \sum^k_{j=1}   \frac{ t_j\beta^2_j}{\rhosq (t_j-1)} + b_2 \right)\,.
\end{align*}

\section{Results for Simulation Studies and Real Data Examples}\label{app:figures}

\begin{figure}[H]
\centering
\includegraphics[width=\textwidth]{HBQR_BL_RMSE_tau25.pdf}
\includegraphics[width=\textwidth]{HBQR_BEN_RMSE_tau25.pdf}
\includegraphics[width=\textwidth]{BQR_BL_RMSE_tau25.pdf}
\includegraphics[width=\textwidth]{BQR_BEN_RMSE_tau25.pdf}
\caption{Boxplots of RMSE based on 300 replications in six simulation scenarios for HBQR-BL, HBQR-EN,  BQR-BL and BQR-EN in this order ($\tau=0.25$).}
\label{fig:boxplot-rmse25}
\end{figure}
\FloatBarrier

\begin{figure}[H]
\centering
\includegraphics[width=\textwidth]{HBQR_BL_MMAD_tau25.pdf}
\includegraphics[width=\textwidth]{HBQR_BEN_MMAD_tau25.pdf}
\includegraphics[width=\textwidth]{BQR_BL_MMAD_tau25.pdf}
\includegraphics[width=\textwidth]{BQR_BEN_MMAD_tau25.pdf}
\caption{Boxplots of MMAD based on 300 replications in six simulation scenarios for HBQR-BL, HBQR-EN,  BQR-BL and BQR-EN in this order ($\tau=0.25$).}
\label{fig:boxplot-mmad25}
\end{figure}
\FloatBarrier

\begin{figure}[H]
\centering
\includegraphics[width=\textwidth]{HBQR_BL_AL_tau25.pdf}
\includegraphics[width=\textwidth]{HBQR_BEN_AL_tau25.pdf}
\includegraphics[width=\textwidth]{BQR_BL_AL_tau25.pdf}
\includegraphics[width=\textwidth]{BQR_BEN_AL_tau25.pdf}
\caption{Boxplots of AL based on 300 replications in six simulation scenarios for HBQR-BL, HBQR-EN,  BQR-BL and BQR-EN in this order ($\tau=0.25$).}
\label{fig:boxplot-al25}
\end{figure}
\FloatBarrier

\begin{figure}[H]
\centering
\includegraphics[width=\textwidth]{HBQR_BL_RMSE_tau75.pdf}
\includegraphics[width=\textwidth]{HBQR_BEN_RMSE_tau75.pdf}
\includegraphics[width=\textwidth]{BQR_BL_RMSE_tau75.pdf}
\includegraphics[width=\textwidth]{BQR_BEN_RMSE_tau75.pdf}
\caption{Boxplots of RMSE based on 300 replications in six simulation scenarios for HBQR-BL, HBQR-EN,  BQR-BL and BQR-EN in this order ($\tau=0.75$).}
\label{fig:boxplot-rmse75}
\end{figure}
\FloatBarrier

\begin{figure}[H]
\centering
\includegraphics[width=\textwidth]{HBQR_BL_MMAD_tau75.pdf}
\includegraphics[width=\textwidth]{HBQR_BEN_MMAD_tau75.pdf}
\includegraphics[width=\textwidth]{BQR_BL_MMAD_tau75.pdf}
\includegraphics[width=\textwidth]{BQR_BEN_MMAD_tau75.pdf}
\caption{Boxplots of MMAD based on 300 replications in six simulation scenarios for HBQR-BL, HBQR-EN,  BQR-BL and BQR-EN in this order ($\tau=0.75$).}
\label{fig:boxplot-mmad75}
\end{figure}
\FloatBarrier

\begin{figure}[H]
\centering
\includegraphics[width=\textwidth]{HBQR_BL_AL_tau75.pdf}
\includegraphics[width=\textwidth]{HBQR_BEN_AL_tau75.pdf}
\includegraphics[width=\textwidth]{BQR_BL_AL_tau75.pdf}
\includegraphics[width=\textwidth]{BQR_BEN_AL_tau75.pdf}
\caption{Boxplots of AL based on 300 replications in six simulation scenarios for HBQR-BL, HBQR-EN,  BQR-BL and BQR-EN in this order ($\tau=0.75$).}
\label{fig:boxplot-al75}
\end{figure}
\FloatBarrier

\begin{figure}[H]
    \centering
    \includegraphics[width=\textwidth]{HBQR_BL_Eta_tau25.pdf}
    \includegraphics[width=\textwidth]{HBQR_BEN_Eta_tau25.pdf}
    \caption{Boxplots of posterior median of $\eta$ based on 300 replications in six simulation scenarios for HBQR-BL (top) and HBQR-EN (bottom) ($\tau=0.25$).}
    \label{fig:boxplot-eta25}
\end{figure}
\FloatBarrier

\begin{figure}[H]
    \centering
    \includegraphics[width=\textwidth]{HBQR_BL_Eta_tau75.pdf}
    \includegraphics[width=\textwidth]{HBQR_BEN_Eta_tau75.pdf}
    \caption{Boxplots of posterior median of $\eta$ based on 300 replications in six simulation scenarios for HBQR-BL (top) and HBQR-EN (bottom) ($\tau=0.75$).}
    \label{fig:boxplot-eta75}
\end{figure}
\FloatBarrier

\begin{figure}[H]
    \centering
    \includegraphics[width=0.9\textwidth]{CrimeCI1.pdf}
    \caption{Posterior medians and 95\% credible intervals of the regression coefficients at $\tau=0.1$ in the Bayesian quantile regression with Bayesian lasso (BQR-BL), Bayesian quantile regression with elastic net (BQR-EN) and the proposed Bayesian quantile regression with Bayesian lasso (HBQR-BL) and elastic net (HBQR-EN), applied to the Crime data.}
    \label{fig:crimeCI1}
\end{figure}
\FloatBarrier

\begin{figure}[H]
    \centering
    \includegraphics[width=0.9\textwidth]{CrimeCI9.pdf}
    \caption{Posterior medians and 95\% credible intervals of the regression coefficients at $\tau=0.9$ in the Bayesian quantile regression with Bayesian lasso (BQR-BL), Bayesian quantile regression with elastic net (BQR-EN) and the proposed Bayesian quantile regression with Bayesian lasso (HBQR-BL) and elastic net (HBQR-EN), applied to the Crime data.}
    \label{fig:crimeCI9}
\end{figure}
\FloatBarrier

\begin{figure}[H]
    \centering
    \includegraphics[width=0.9\textwidth]{TopGearCI1.pdf}
    \caption{Posterior medians and 95\% credible intervals of the regression coefficients at $\tau=0.1$ in the Bayesian quantile regression with Bayesian lasso (BQR-BL), Bayesian quantile regression with elastic net (BQR-EN) and the proposed Bayesian quantile regression with Bayesian lasso (HBQR-BL) and elastic net (HBQR-EN), applied to the Top Gear data.}
    \label{fig:topgearCI1}
\end{figure}
\FloatBarrier

\begin{figure}[H]
    \centering
    \includegraphics[width=0.9\textwidth]{TopGearCI9.pdf}
    \caption{Posterior medians and 95\% credible intervals of the regression coefficients at $\tau=0.9$ in the Bayesian quantile regression with Bayesian lasso (BQR-BL), Bayesian quantile regression with elastic net (BQR-EN) and the proposed Bayesian quantile regression with Bayesian lasso (HBQR-BL) and elastic net (HBQR-EN), applied to the Top Gear data.}
    \label{fig:topgearCI9}
\end{figure}
\FloatBarrier

\end{document}